\def\psfancypar#1#2{\begingroup\def\par{\endgraf\endgroup\lineskiplimit=0pt}
               \setbox2=\hbox{\large\sc #2}
               \newdimen\tmpht \tmpht \ht2 \advance\tmpht by \baselineskip
               \font\hhuge=Times-Bold at \tmpht
               \setbox1=\hbox{{\hhuge #1}}
               \count7=\tmpht \count8=\ht1
               \divide\count8 by 1000 \divide\count7 by \count8 
               \tmpht=.001\tmpht\multiply\tmpht by \count7 
               \font\hhuge=Times-Bold at \tmpht
               \setbox1=\hbox{{\hhuge #1}}
               \noindent
                \hangindent1.05\wd1
               \hangafter=-2 {\hskip-\hangindent
               \lower1\ht1\hbox{\raise1.0\ht2\copy1}%
                \kern-0\wd1}\copy2\lineskiplimit=-1000pt}
\newcommand{\Psibf}{\mbox{${\bf \Psi}$}}
\newcommand{\E}{\mbox{{\rm E}}}
\newcommand{\abf}{\mbox{${\bf a}$}}
\def\boxit#1{\vbox{\hrule\hbox{\vrule\kern3pt
        \vbox{\kern3pt#1\kern3pt}\kern3pt\vrule}\hrule}}
\def\reals{ { {\rm  I \kern-0.15em R }  } }
\def\complex{ {\,{{\rm C} \kern-0.50em \raise0.20ex {  |}}\, }}
\def\mubf{\hbox{\boldmath$\mu$\unboldmath}}
\def\chibf{\hbox{\boldmath$\chi$\unboldmath}}
\def\Sigmabf{\hbox{$\bf \Sigma$}}
\def\Lambdabf{\mbox{$ \bf \Lambda $}}
\def\abf{{\bf a}}
\def\nbf{{\bf n}}
\def\sbf{{\bf s}}
\def\ubf{{\bf u}}
\def\vbf{{\bf v}}
\def\xbf{{\bf x}}
\def\ybf{{\bf y}}
\def\xbf{{\bf x}}
\def\ybf{{\bf y}}
\def\Abf{{\bf A}}
\def\Bbf{{\bf B}}
\def\Cbf{{\bf C}}
\def\Dbf{{\bf D}}
\def\Ebf{{\bf E}}
\def\Hbf{{\bf H}}
\def\Ibf{{\bf I}}
\def\Qbf{{\bf Q}}
\def\Rbf{{\bf R}}
\def\Ubf{{\bf U}}
\def\Vbf{{\bf V}}
\def\Xbf{{\bf X}}
\def\Zbf{{\bf Z}}
\def\Cc{{\cal C}}
\def\Hc{{\cal H}}
\def\Kc{{\cal K}}
\def\Nc{{\cal N}}
\def\be{\vskip .3cm \begin{equation}}
\def\ee{\end{equation} \vskip .4cm \noindent}
\newcommand{\R}{\mbox{$\hat {\bf R}_{N}$}}
\def\Rxx{\Rbf_{\ssstyle X\kern-.1em X}}
\let\ssstyle=\scriptscriptstyle
\def\Kout{\setbox1=\hbox{\Huge\bf K}\hbox to
1.05\wd1{\hspace{.05\wd1}
\def\Sout{\setbox1=\hbox{\Huge\bf S}\hbox to 1.05\wd1{\hspace{.05\wd1}

  \ifx\LabelFigloaded\MYundefined\relax
  \else
    \message{ !!! labelfig.tex ALREADY loaded !!!}
   \fi

  \def\LabelFigloaded{\relax}


  \chardef\LabelFigCatAt\the\catcode`\@
  \catcode`\@=11

 \let\LabelFigwlog@ld\wlog
 \def\wlog#1{\relax}

 \ifx\\\MYundefined@
    \let\\\relax
 \fi


  \def\ms@g{\immediate\write16}

 \def\N@wif{\csname newif\endcsname }
 \def\Temp@ {\N@wif\ifIN@}
 \ifx\INN@\MYundefined@
    \else \let\Temp@\relax
 \fi
 \Temp@

  \def\IN@{\expandafter\INN@\expandafter}
  \long\def\INN@0#1@#2@{\long\def\NI@##1#1##2##3\ENDNI@
    {\ifx\m@rker##2\IN@false\else\IN@true\fi}%
     \expandafter\NI@#2@@#1\m@rker\ENDNI@}
  \def\m@rker{\m@@rker}
 
  \newtoks\Initialtoks@  \newtoks\Terminaltoks@
  \def\SPLIT@{\expandafter\SPLITT@\expandafter}
  \def\SPLITT@0#1@#2@{\def\TTILPS@##1#1##2@{%
     \Initialtoks@{##1}\Terminaltoks@{##2}}\expandafter\TTILPS@#2@}

 \def\Shifted@@#1#2#3{\setbox0=\hbox{#3}%
   \raise -\dp0\vbox {\kern-#2%
       \hbox {\kern#1\unhbox0\kern-#1}%
           \kern#2}}

 \newcount\gridcount
 \newbox\auxGridbox@ \newbox\hGridbox@ \newbox\vGridbox@
 \newbox\Labelbox@ \newbox\auxLabelbox@
 \newbox\Coordinatebox@
 \newtoks\Labeltoks@
 \newdimen\Wdd@ \newdimen\Htt@
 \newdimen\Wddd@ \newdimen\Httt@
 
 \def\Wr@{\immediate\write16}

 \newdimen\GL@wd
 \GL@wd=.02pt
 \def\GridLineWidth#1{\GL@wd=#1}

 \def\gobble#1{}
 \def\EdgeErr@{\Wr@{}%
      \Wr@{\string\Edges\space argument
      1, 10, 100 or 1000 please\string!}%
      }

 \newcount\Edgect@

 \def\Sweepup#1\endSweepup{}

 \def\SetEdges@{%
    \edef\Zr@@s{\expandafter\gobble\number\Edgect@\empty}%
        \count255=0\Zr@@s\relax
        \ifnum\count255=\z@\else\EdgeErr@\show\tailtest\fi
        \count255=1\Zr@@s\relax
        \ifnum\count255=\Edgect@\relax\else\EdgeErr@\show\leadtest\fi
    \EdgGl@b\edef\Zr@s{\expandafter\gobble\Zr@@s\empty}
    \ifnum\Edgect@>\@ne\relax\EdgGl@b\let\L@Dc\empty
        \else\EdgGl@b\edef\L@Dc{\string.}\fi
    \ifnum\Edgect@>\@ne\relax
        \EdgGl@b\edef\Edgescale@##1{\divide##1 by \Edgect@}%
        \else\EdgGl@b\edef\Edgescale@##1{}\fi
    }

 \def\Edges#1{\Edgect@=#1\relax
     \let\EdgGl@b\global \SetEdges@}

 \Edges{1}

 \def\hhrule{\hrule height \GL@wd\vskip-.\GL@wd}

 \def\hRule@{%
   \advance\gridcount -2%
   \vfil\hhrule\vfil
   \llap{\smash{\raise -2.5pt
     \hbox{\L@Dc\number\gridcount\Zr@s\kern2pt}}}%
   \hhrule
   }

\def\vvrule{\vrule width \GL@wd \kern-\GL@wd}

 \def\vRule@{\advance\gridcount 2%
   \hfil\vvrule\hfil
   \setbox\auxGridbox@=\vbox to 0pt
      {\vskip \Htt@\vskip 2pt
        \hbox to 0pt{\hss\L@Dc\number\gridcount\Zr@s\hss}\vss}%
      \wd\auxGridbox@=0pt \box\auxGridbox@
   \vvrule
   }

 \def\PlaceGrid@@{\gridcount=10 
  \setbox\hGridbox@=\hbox{%
        \hbox{%
             \hskip-.4pt\vrule
             \vbox to \Htt@{%
               \offinterlineskip\parindent=\z@\relax
               \hbox to \Wdd@{\hfil}
               \hRule@\hRule@\hRule@\hRule@
               \vfil\hhrule\vfil}%
             \vrule\hskip-.4pt}
    }%
  \gridcount=0%
  \setbox\vGridbox@=\hbox{%
      \vbox{\offinterlineskip\parindent=0pt\hsize=0pt
         \vskip-.4pt\hrule%
         \hbox to \Wdd@{%
                 \vtop to \Htt@{\vfil}%
                 \vRule@\vRule@\vRule@\vRule@
                 \hfil\vvrule\hfil}%
         \hrule\vskip-.4pt}}%
  \wd\hGridbox@=0pt\ht\hGridbox@=0pt
  \wd\vGridbox@=0pt\ht\vGridbox@=0pt
  \hbox{\box\hGridbox@\box\vGridbox@}%
  }

 \def\LabelsGlobal{\def\LabGl@b{\global}}
 \def\LabelsLocal{\def\LabGl@b{}}
 \LabelsGlobal 

 \def\SetLabels#1\endSetLabels{%
   \LabGl@b\Labeltoks@={#1()\\}%
   }

 \LabGl@b\Labeltoks@={()\\}

 \def\ShowGrid{\LabGl@b\let\PlaceGrid@\PlaceGrid@@}
 \def\HideGrid{\LabGl@b\let\PlaceGrid@\relax}
 \def\Grids{\ShowGrid\LabGl@b\let\GridSwitch@\ShowGrid}
 \def\noGrids{\HideGrid\LabGl@b\let\GridSwitch@\HideGrid}

 \noGrids

 \def\bAdjust@@{%
     \setbox\auxLabelbox@=\hbox{\raise \dp\auxLabelbox@
            \box\auxLabelbox@}}
 \def\bAdjust@{\let\vAdjust@\bAdjust@@}

 \def\eAdjust@@{\dimen0=-.5\ht\auxLabelbox@
     \advance\dimen0 by .5\dp\auxLabelbox@
     \setbox\auxLabelbox@=
            \hbox{\raise\dimen0\box\auxLabelbox@}}
 \def\eAdjust@{\let\vAdjust@\eAdjust@@}

 \def\tAdjust@@{%
     \setbox\auxLabelbox@=\hbox{\raise-\ht\auxLabelbox@
            \box\auxLabelbox@}}
 \def\tAdjust@{\let\vAdjust@\tAdjust@@}

 \let\vAdjust@\relax

 \def\lAdjust@{\let\hAdjust@\rlap}
 \def\rAdjust@{\let\hAdjust@\llap}

 \let\hAdjust@\relax\let\vAdjust@\relax

 \def\FetchLabel@#1(#2)#3\\{%
     \IN@0#2@@\ifIN@
        \setbox0=\hbox{\ignorespaces#1#3\unskip}%
        \ifdim\wd0>0pt
           \ms@g{}%
           \ms@g{ !!! Bad label(s)? !!!}%
           \message{ #1(#2)#3}%
        \fi
        \def\LabelMole@##1\endFetchLabel@{%
            \IN@0()\\@##1@%
            \ifIN@\def\Temp@{\FetchLabel@##1\endFetchLabel@}%
            \else\def\Temp@{}%
            \fi
            \Temp@
           }%
     \else
       \ignorespaces#1\unskip
       \setbox\auxLabelbox@=%
         \hbox to 0pt{\hss\ignorespaces\hAdjust@
          {\ignorespaces#3\unskip}\hss}%
       \vAdjust@
       \let\hAdjust@\relax\let\vAdjust@\relax
       \AugmentLabelBox@@{#2}%
       \ht\Labelbox@=0pt\dp\Labelbox@=0pt
       \let\LabelMole@\FetchLabel@%
     \fi\LabelMole@}

 \newtoks\XYSep@ 
 \def\SetXYSeparator#1{%
     \IN@0#1@@\ifIN@\XYSep@{*}%
     \else
     \XYSep@{#1}%
     \fi
     }

 \SetXYSeparator*

 \def\AugmentLabelBox@@#1{%
     \IN@0\the\XYSep@ @#1@\ifIN@
       \SPLIT@0\the\XYSep@ @#1@%
       \setbox\Labelbox@=\hbox to 0pt{%
         \unhbox\Labelbox@
         \Shifted@@{\the\Initialtoks@\Wddd@}%
         {\the\Terminaltoks@\Httt@}%
         {\box\auxLabelbox@}}%
     \else
         \ms@g{}%
         \ms@g{ !!! Bad insertion point. !!!}%
         \message{ (#1\ this point was rejected.)}%
     \fi
    }

 \def\FetchOption@#1[#2]#3\endFetchOption@{%
    \def\temp{#1}
    \ifx\temp\empty
       \Edgect@=#2\relax
       \let\EdgGl@b\relax
       \SetEdges@
       \Cleaner@#3%
    \fi}

 \def\Cleaner@#1[@]{\Labeltoks@{#1}}
     
 \def\PlaceLabels@@{\mathsurround=0pt
     \def\Cr@{\\}%
     \let\L\lAdjust@\let\R\rAdjust@
     \let\B\bAdjust@\let\E\eAdjust@\let\T\tAdjust@
     \expandafter\FetchOption@\the\Labeltoks@[@]\endFetchOption@
     \Wddd@=\Wdd@ \Edgescale@\Wddd@ 
     \Httt@=\Htt@ \Edgescale@\Httt@
     \expandafter\FetchLabel@\the\Labeltoks@\endFetchLabel@
     \box\Labelbox@
     }%

 \let \PlaceLabels@\PlaceLabels@@

 \def\AffixLabels#1{\setbox\Coordinatebox@=\hbox{#1}%
      \Wdd@=\wd\Coordinatebox@ \Htt@=\ht\Coordinatebox@
      \advance\Htt@ \dp\Coordinatebox@
      \hbox{\copy\Coordinatebox@\kern-\Wdd@ 
           \Shifted@@{0pt}{-\dp\Coordinatebox@}%
           {\PlaceLabels@\PlaceGrid@}%
           \kern\Wdd@}%
      \GridSwitch@ 
      \LabGl@b\Labeltoks@{()\\}%
      }
 
   \let\wlog\LabelFigwlog@ld   
   \catcode`\@=\LabelFigCatAt  


 
                                By

              Raymond S\'eroul <A18645@FRCCSC21.BITNET>
                                and 
              Laurent Siebenmann <lcs@topo.math.u-psud.fr>
    
              VERSIONS: July 1991, Oct 1991, Jan 1992, July 1992

INTRODUCTION

      This labelling package is intended for TeX users who
rely on non-TeX sources for for their graphics inserts.  It
provides means for adding TeX labels to such inserts with a
minimum of fuss. 

       For most labels, TeX users have in the past found it
reasonably convenient to rely on non-TeX sources. Typical
occasions when an inescapable need for TeX labels seemed to
arise are

 (a) when the graphics program lacks certain exotic or complex
mathematical symbols

 (b) when the very highest typographical quality is wanted for the
labels

 (c) when labels included with the graphics fail to print, 
 and you cannot figure out why (cf. boxedeps.doc).  The labels
 provided by labelfig.tex are 100

       Since this package first appeared, many users, who in the
past scarcely dreamed of using TeX labels, have come to use
nothing but.  So it is now appropriate to add

Intoxication Warning:  TeX labels may be addictive and expensive. 

     If you have a fast preview you may disagree, and even find
that this package provides an agreeable paste-up environment; see
extra applications at end.

     Note to publishers: It is possible and convenient to ultimately
export the TeX labels produced by labelfig.tex to become an integral
part of the EPS file. This is often desired by a publisher who typically
uses an "upmarket" graphics or page layout program, with which the
staff is skilled in perfecting figures.  See Appendix I for
a recipe.

     The authors are grateful to Patrick Ion of Math Reviews for
helpful comments and encouragement.

BASIC INSTRUCTIONS

    After reading in the macro file using

preview or proof your figure with a coordinate grid printed on
top, by typing the following:

    \ShowGrid  
    \AffixLabels{<the graphics insertion>}

Here <the graphics insertion> is what you would type to insert
the graphics object alone without the grid.  This must provide
for the space around it. For example <the graphics insertion>
might well be \BoxedEPSF{MyFigure scaled 700} using the
boxedeps.tex macro package (from same source); this provides a
TeX box containing the encapsulated PostScript insert specified by
the file MyFigure. \AffixLabels{...} provides the grid (supposing
\ShowGrid is present) and later, once you have specified labels
using the grid, it will "tack on" the labels.

     The grid is a sort of (usually elongated) checkerboard of
ten rows and ten columns and its (internal) partitions are by
default numbered  .1, ... ,.9  both horizontally (X-coordinate
running left to right) and vertically (Y-coordinate running bottom
to top).  Thus the points enclosed by the grid correspond to the
points of the unit square in the cartesian "X-Y" plane, the lower
left corner corresponding to the origin (0,0).  By extrapolation,
the full page corresponds to a larger rectangle in the plane.

     These coordinates serve to position labels as follows.
Before the \AffixLabels{...} command type label specifications:

  \SetLabels
   (<X-coordinate>*<Y-coordinate>) <first label> \\
   .
   .
   .
   (<X-coordinate>*<Y-coordinate>)  <last label> \\
  \endSetLabels

Each row specifies one label and is terminated by \\.  In each
row, the position indicator comes first; it is written as a
standard cartesian point except that the X- and Y- coordinates
are separated by * rather than a comma because TeX allows a
comma as decimal point. There are no dimension units to specify
as the unit is the grid itself.

     By default, this cartesian point specifies where the middle
of the baseline of the label will be located.  However if you precede
the point by \L [or \R] the left [or right] edge of the baseline will
be located there. Similarly you may also precede the point by \T, \E,
or \B to vertically align the top equator or bottom of the label box
at the specified point.  This gives nine standard positions of
the label with respect to the insertion point --- corresponding to
the eight principle points of the compas and the center

                     \L\T     \T      \R\T

                     \L\E     \E      \R\E

                     \L\B     \B      \R\B

But this neglects the default "baseline" level of TeX,
giving potentially three more positions

                     \L    <no tag>   \R

For text, the baseline level is often the preferred. Its relation to
the others is variable. It will often coincide with the bottom level,
as happens for "X".  But it is often distinct, as for "g", in which
case you have in all 12 distinct positions rather than 9.

     It is convenient to think of this specification of label
position as attaching the label by a thumb-tack to the coordinate
grid. There are up to twelve positions of the thumb-tack on the
label, while the position of the thumb-tack on the coordinate grid is
arbitrary.  Normally, one choses the position of the thumb-tack on
the label to be the one that is the closest to the item being
labeled.  There are good reasons for this "rule of thumb":

   (a)  It facilitates correct positioning at first try.

   (b)  If the scale of the figure must be altered after labels
have been affixed, the labels have a good chance of remaining well
positioned.

   (c)  The visible grid need not extend beyond the "bounding box"
for the figure, because the best preferred position is always
(at least almost) within the bounding box .

The second reason is particularly important. Indeed it often
happens that scale has to be altered after labelling begins, in
order to either provide space for the labels, or to adjust
proportions between the labels and the figure.  (The size of labels
is unaffected by scaling.)

     Here is an artificial but self-contained test which uses
TeX rules to make a graphics object.

TEST

    Do not skip this!



 \def\FrameIt#1{\hbox{\vrule$\vcenter {\hrule\kern3pt%
             \hbox {\kern3pt #1\kern3pt}%
               \kern3pt\hrule}$\relax\vrule}}

 \def\Caption#1#2{\FrameIt{%
       \vtop {\hsize=#1\relax \parindent=0pt
         \leftskip=0pt \rightskip=0pt plus15pt
         \parfillskip=0pt
         \lineskip=1pt\baselineskip=0pt
         #2}}}

 \def\FirstQuadrant{\hbox to 100pt{\vrule\vbox to 100pt{%
        \hbox to 100pt{\hfil}\vfil\hrule}\hss}}


  \SetLabels
    \R(.5*.2) $\zeta\,\cdot$\\
    (.9*-.10) $\xi$\\
    \R(-.03*.9) $\eta$\\
    \T(.5*.9) \Caption{70pt}{%
          \it The norm of
          $g(\xi+i\eta)$ is indicated on
          contours of this invisible surface.}\\
  \endSetLabels

  \AffixLabels{\FirstQuadrant}

  \end

  Note that the coordinates to use for labels are indicated on the
edges of the grid (when visible) corresponding to the conventional
x- and y- axes of the Cartesian plane. By default the grid is
1-by-1. However, by the command \Edges{100}, you can change this
to 100-by-100 and many users find this alternative most
convenient. Place the command \Edges{...} in your style file (or
header) since its effect is is global. Other possible edge values
are 10 and 1000.

  If you use the command \Edges{...} at all, do so with care.  For
if you accidentally delete an \Edges{...} command your labels will
abruptly be badly misplaced and may logically but mysteriously
generate "dimension too big" errors under TeX and "off page" errors
under your driver.  

  You can dictate the edgescale for an individual figure by giving
the scale in brackets immediately after \SetLabels.  Thus, to
import into an article using say \Edge{100} a figure labelled using
another edgescale, say the original 1-by-1 default, you can use
\SetLabels[1]...\endSetLabels.


GETTING IT DOWN PAT

     Complicated labeling deserves the same respect as
complicated mathematics.  Do not expect it to come out perfect the
first time!  What is needed in either case is a mechanism to
repeatedly typeset troublesome pieces.

     One mechanism is always available.  One does complicated
labelling in a separate "test" file involving just the figure being
labelled;  a texpert will know how to \dump TeX's current state as
a temporary format that restarts rapidly at each retry.  Usually,
one then pastes the completed labelled figure back into the main
TeX file, but, of course, one can also \input it as an auxiliary
file.

     If you do not have a TeXpert at handy, here is a first
approximation to an efficient setup. By deletions reduce a copy
of your article to just a few lines before and after the figure.
Now label the figure, and finally, copy and paste the labelled
figure to the original article. Then copy the next figure to label
into this testbed and repeat. The TeXpert can improve the  speed
at which TeX starts up, by compiling a format specifically for
your article; just one caution: best NOT include in the format
ephemeral details of setup like \Set<mydriver>ArtSpecials (from
boxedeps.tex because this reads  figure dimensions which you may
change during your work session.

     An improved mechanism to repeatedly typeset troublesome
pieces is now available on the Macintosh; it is called LinoTeX;
see the same ftp sources.  It could be set up on many types
of computer.

     Before using labelfig.tex to attach labels to a graphics
object inserted using boxedeps.tex or BoxedArt.tex, make it a
firm rule to carefully adjust the bounding box using the trimming
commands of these packages, and also at least tentatively scale
and position the object. Beware of changing the grid inadvertently
after the labels have been positioned.  For example, correcting
the bounding box of a PostScript graphics object can foul up the
labels by changing the coordinate grid to which the labels are
attached. This is particularly true for the trimming  commands of
boxedeps.tex and BoxedArt.tex. However, as noted already, change
of scale is much less disruptive, and modest adjustments should be
well tolerated.

     Sometimes the labels protrude so far from the bounding box
of a figure that the figure has to be repositioned.  Best do this
by ad hoc spacing, say using \hglue and \vglue; altering the
bounding box would create a vicious circle.

     Remember that you are responsible for preventing labels
from overlapping. You are responsible for all label typography
including size and style. A label is really just about anything
that can be put in a TeX box. Note that spaces at the beginning
and end of labels will normally be suppressed; if you really want
them you must protect them with TeX braces.

     This package temporarily sets the \mathsurround parameter
of TeX to zero  while the labels are being affixed. This is done
because nonzero \mathsurround space would influence the position
of left and right aligned labels; then, when a texpert or printer
modifies mathsurround, diagram labeling might be disastrously
altered. There is a small price to pay involving labels that are
formatted as caption boxes including mathematics: you  may want or
need to specify an explicit mathsurround space within the caption
box; it will not influence anything outside.

     Those hostile to the use of * as separator between
the X and Y coordinates of label insertion points, are free to
impose another using \SetXYSeparator{<the new separator>}.  
Americans may prefer "," to "*" since they never use a 
comma as a decimal point; on the other hand, * may be more visible.

APPENDIX (I)  MERGING labelfig.tex LABELS INTO AN EPSF GRAPHICS OBJECT.

     As promised in the introduction, here is a recipe useful for
publishers. It works at least on Macintosh and at least for vectorized
graphics and Adobe type1 fonts.  (There is surely a similar recipe for
PCs under MSWindows.)

 (a)  Use boxedeps.tex utility to integrate the figure given by the eps
file, "x.eps" say, with a visible frame around it.  See
\ShowDisplacementBoxes command in boxedeps.tex.  To get precise results
automatically it is important to use the \Trim... commands of
boxedeps.tex making the "DisplacementBox" neatly fit the figure.

 (b)  Use the TeX printer driver and LaserWriter (versions >= 8.1.1) to
export to an EPSF the DVI page containing the integrated, labelled
figure. You now have an EPS file  "xx.eps"  that contains too much, and at
the wrong scale, and at wrong position.

 (c)  Convert the EPSF to an Adode Illustrator format EPSF using
the shareware utility called epsConvert by Sam Weiss
1993-- (currently $25).

 (d)  In Illustrator (or a compatible program), group the labels and the
"DisplacementBox"; copy them to the clipboard and paste them into "x.ps".
This step requires that all the label fonts be "visible to the Macintosh.

 (e)  Translate and scale the pasted group consisting of the labels plus
the "DisplacementBox" so as to make the "DisplacementBox" the bounding
box of (labelless) figure represented by "x.eps".  At this point the
labels will be correctly placed on the figure "x.eps".

 (f)  Ungroup and delete the "DisplacementBox".  The result is the
desired single EPS file, "x+.eps" say, It contains the original figure
plus its labels.  

     Using grouping and ungrouping appropriately in "x+.eps", a
publisher's staff can very efficiently improve label positions etc.

APPENDIX II)  SOME EXOTIC APPLICATIONS

     The grid of labelfig.tex is analogous to a light-table in
classical page makeup with wax or latex glue.  In principle, you
can use it to compose any page from its indivisible parts.  This
even has some of the artisanal charm of classical paste-up
provided you have a fast screen preview to make the process
"interactive".

     In practice labelfig.tex is a tool for nonstandard jobs.
Here are a few going beyond the labelling already discussed.

(I)  GRAPHICS INTEGRATION.

     This is accomplished by treating the imported graphics
objects as labels.  The underlying graphics object is then
typically an empty  \vbox to <dimension>{\vfill} in a TeX
\midinsert...\endinsert construction.  A label line
might be of the form

   (.1*.1) \special{... MyFigure ...}\\

The exact form of the special command varies from driver to
driver.  However, in the case of encapsulated PostScript graphics
(EPSF norm), by relying on boxedeps.tex, one can have the
following standard syntax (independant of driver  (see
boxedeps.doc for details.
  
  (.1*.1) \BoxedEPSF{MyFigure scaled <scale in mils>}\\

This may be slow since it requires TeX to read the PostScript
file to read bounding box using many complex macros.  So you
may want to try

  (.1*.1) \EPSFSpecial{MyFigure}{<scale in mils>}\\

which is fast and driver independant, but it squashes the
bounding box, normally to its lower left corner.

     Similarly for graphics of the Macintosh PICT norm ---
using BoxedArt.tex (same sources) in place of boxedeps.tex.

     This approach to integration is to be recommended when
one is assembling a composite graphics object.

 (II)  COMMUTATIVE DIAGRAM ENHANCEMENT

     Commutative diagrams or arrays of mathematical objects
connected by arrows of various sorts are common in mathematics.
The mathematical objects require the use of TeX.  Recently TeX
acquired a good collection of arrows of all slopes --- that of
LamSTeX --- plus pwerful macros to build the diagrams.

     However, even the LamSTeX collection is often
inadequate; it lacks for example double shafted arrows, dotted
arrows and curved arrows. Fortunately it is possible to produce
such arrows on an individual basis using sophisticated graphics
programs such as Illustrator and AldusFreehand (both serving
the EPSF norm) or using Metafont (with its public domain norm).
Since the creation of each new arrow is a work of love, you
probably want to limit the number of arrows by using LamSTeX
for most arrows. The 40K commutative diagram module of LamSTeX
has been adapted to work with AmSTeX and a copy may be posted
with LabelFig and related files. Unfortunately no one has yet
offered a version that works with Plain TeX or LaTeX.

       Suffice it here to say that when the exotic arrow has
been somehow imported into TeX, labelfig.tex treats it as a
label that one affixes to the commutative diagram.  Two other
steps will be treated in separate notes, namely the matter of
extracting the dimension specifications for the arrow and the
construction of the arrow --- for these steps are far from
unique and often depend intimately on your computer environment. 
Notes for the Macintosh-Textures-Illustrator combination are
found in the file ExoticArrows.doc.

 (III) NESTING 

Ingenuity pays off in exploiting labelfig.tex. One can
mix graphics and typography quite freely.  labelfig.tex is good
for freeform or overlapping arrangements, while boxedeps.tex (or
BoxedArt.tex) is best for regimented non-overlapping
arrangements --- and the two can be combined.

     The default behavior of labelfig.tex is not ideal 
for nesting objects, because to prevent trouble for beginners
the register for labels is globally cleared when \AffixLabels
concludes.  But there are switches available

      \LabelsGlobal      \LabelsLocal

which change this.  To understand this, extend the above test 
by something like:


 \LabelsLocal

 \SetLabels
    (.5*.5) AAA\\
 \endSetLabels

 {
 \SetLabels
    (.5*.5) ZZZ\\
 \endSetLabels
   \AffixLabels{\FirstQuadrant}
 }

   \AffixLabels{\FirstQuadrant}


     There are however potential pitfalls.  Neither
labelfig.tex nor boxedeps.tex has been tested under extreme
conditions. Problems may occur if their procedures are
indiscriminately nested. For boxedeps.tex (not labelfig.tex)
there is a precise cause for worry, namely many of its
variables are "global", which means that TeX braces will not
provide the protection one might expect.

COMMAND SUMMARY FOR labelfig.tex

  Here [...] means optional (one or zero)
       [...]* means any number of such constructs

  \SetLabels
    [[<P>](<X><Sep><Y>) <label> \\]*
  \endSetLabels
  \ShowGrid  
  \AffixLabels{<the figure>}

   --- <P> is tack position, one of eleven or empty
              order irrelevant

                   \L\T      \T      \R\T

                   \L\E      \E      \R\E

                     \L               \R

                   \L\B      \B      \R\B

   --- (<X><Sep><Y>) insertion point;
  <Sep> is separator, = * by default;
  \SetXYSeparator{<Sep>} changes it.
   <X> and <Y> are real numbers

  --- <label> a label to attach 

  --- <the figure> the figure to label 

  \GlobalLabels (default)     
  \LocalLabels  setting for nested constructs.

 \Grids makes ALL grids appear; \HideGrid then makes just next disappear.
 \noGrids returns to default.  The commands are always global.

 \GridLineWidth{<dimension>} adjusts width of grid lines. Default is very
small, to give "hairline" effect. If your grid lines are missing try
setting \GridLineWidth{1pt}.

 \Edges#1 globally changes the edge size of all grids to the numerical 
value #1, which must be 1, 10, 100, or 1000.  The default is 1.

VERSION HISTORY.
 --- Jan 1993: \Edges#1 and [??] option after \SetLabels
 --- July 1992: \Grids, \noGrids, \HideGrid;
       Gridlines become hairlines; \GridLineWidth{<dimension>}.
 --- Oct 1991, Jan 1992: \SetXYSeparator{<Sep>},  \LabelsGlobal,
       \LabelsLocal.
 --- July 1991: first release

Address for bugs and other feedback:

        Raymond S\'eroul
        IREM and Lab. de Typographie Informatise
        Univ. Rene Descartes
        Strasbourg

    Tel 33-88-41-63-45
    Email:  A18645@FRCCSC21.BITNET

        Laurent Siebenmann
        Mathematique, Bat. 425,
        Univ de Paris-Sud,
        91405-Orsay,
        France

    Tel 33-1-6941-7949; 
    Email: lcs@topo.math.u-psud.fr

\def\scalefig#1{\epsfxsize #1\textwidth}

\newcommand {\Ebb}{{\mathbb{E}}}

\newtheorem{theorem}{Theorem}
\newtheorem{lemma}{Lemma}

\newtheorem{corollary}{Corollary}

\newtheorem{remark}{Remark}

\title{{\LARGE Outage Probability and Outage-Based Robust Beamforming  for
MIMO Interference Channels with Imperfect Channel State
Information}}

\author{
Juho Park, {\em Student~Member, IEEE}, Youngchul
Sung$^\dagger$\thanks{$^\dagger$Corresponding author}, {\em
Senior~Member, IEEE}, Donggun Kim, {\em Student~Member, IEEE},
and H. V. Poor {\em Fellow, IEEE}
\thanks{J. Park, Y. Sung and D. Kim are with the Dept. of Electrical Engineering, KAIST, Daejeon 305-701, South Korea.
E-mail:\{jhp@, ysung@ee. and dg.kim@\}kaist.ac.kr and H. V. Poor
is with Dept. of Electrical Engineering, Princeton University,
Princeton, NJ 08544, E-mail: poor@princeton.edu This work was
supported by the Korea Research Foundation Grant funded by the
Korean Government (KRF-2008-220-D00079). A preliminary version of
this work is presented at IEEE Globecom 2011.
\cite{Park&Kim&Sung:11Globecom}.}
}

\markboth{\protect\footnotesize Accepted to {\it IEEE
Trans. Wireless Commun.}, \today}{Park, Sung, Kim, and Poor}

\begin{document}

\maketitle

\begin{abstract}
 In this paper, the outage probability and outage-based beam design for
 multiple-input multiple-output (MIMO) interference channels are considered.
First, closed-form expressions for the outage probability in MIMO
interference channels are derived under the assumption of
Gaussian-distributed  channel state information (CSI) error, and
the asymptotic behavior of the outage probability as a function of
several system parameters is examined by using the Chernoff bound.
It is shown that the outage probability decreases exponentially
with respect to the quality of CSI measured by the inverse of the
mean square error of CSI. Second, based on the derived outage
probability expressions, an iterative beam design algorithm for
maximizing the sum outage rate is proposed. Numerical results show
that the proposed beam design algorithm yields
better sum outage rate performance than conventional algorithms
such as interference alignment developed under the
assumption of perfect CSI.
\end{abstract}

\begin{keywords}
Multiuser MIMO, interference channels, channel uncertainty, outage
probability, Chernoff bound, interference alignment
\end{keywords}

\section{Introduction}

Due to their importance in current and future wireless
communication systems, multiple-input multiple-output (MIMO)
interference channels have gained much attention from the research
community in recent years. Since Cadambe and Jafar showed that
interference alignment (IA) achieved the maximum number of degrees
of freedom  in MIMO interference channels
\cite{Cadambe&Jafar:08IT}, there has been extensive research in
devising good beam design algorithms for MIMO interference
channels. Now, there are many available beam design algorithms for
MIMO interference channels such as IA-based algorithms
\cite{Gomadam&Jafar:11IT, Peters&Heath:11VT, Yu&Sung:10SP} and
sum-rate targeted algorithms
\cite{Gomadam&Jafar:11IT,Peters&Heath:11VT,
Santamaria&Gonzalez&Heath&Peter:10Globecom,Schmidt&Honig:09ASILOMAR,
Negro&Slock:11PIMRC,Shin&Moon:11Globecom}. However, most of
these algorithms assume perfect channel state information (CSI) at
transmitters and receivers, whereas the assumption of perfect CSI
is unrealistic in practical wireless communication systems since
perfect CSI is unavailable in practical wireless communication
systems due to channel estimation error, limited feedback or other
limitations \cite{Biglieri:book}. Thus, the CSI error should be
incorporated into the beam design to yield better performance, and
this is typically done under robust beam design frameworks.

There are many robust beam design studies in the conventional
single-user MIMO case  and also in the multiple-input and
single-output (MISO) multi-user case. In the MISO multi-user case,
the problem is more tractable than in the MIMO multi-user case, and
extensive research results are available on MISO broadcast and
interference channels with imperfect CSI
\cite{Lindblom&Karipidis&Larsson:11Arxiv,
Lindblom&Larsson&Jorswieck:10WCOM,Li&Chang&Lin&Chi:11Arxiv};
 the outage rate region is defined for MISO interference
channels in \cite{Lindblom&Karipidis&Larsson:11Arxiv}, and the
optimal beam structure that achieves a Pareto-optimal point of the
outage rate region is given in
\cite{Lindblom&Larsson&Jorswieck:10WCOM}.  For more complicated
MIMO interference channels, there are several pioneering works on
robust beam design under CSI uncertainty
\cite{Chiu&Lau&Huang&Wu&Liu:10WCOM, Shen&Li&Tao&Wang:10WCOM,
Ghosh&Rao&Zeidler:10SP}. In \cite{Chiu&Lau&Huang&Wu&Liu:10WCOM},
the authors solved  the problem based on a worst-case approach.
In their work, the CSI error is modelled as a random variable
under a Frobenius norm constraint, and a semi-definite relaxation
method is used to obtain the  beam vectors that maximize the
minimum signal-to-interference-plus-noise ratio (SINR) over all
users and all possible CSI error. In
\cite{Shen&Li&Tao&Wang:10WCOM}, on the other hand, the CSI error
is modelled as an independent Gaussian random variable, and the
beam is designed to minimize the mean square error (MSE) between the transmitted signal and the reconstructed signal at the receiver  with given
imperfect CSI at the transmitter (CSIT).

In this paper, we consider the robust beam design in MIMO
interference channels based on a different criterion. Here, we
consider the rate outage due to channel uncertainty and the
problem of sum rate maximization under an outage constraint in
MIMO interference channels. This formulation is practically
meaningful since an outage probability is assigned to each user
and the supportable rate with the given outage probability is
maximized in practical systems. Here, we assume that the
transmitters and receivers have imperfect CSI and the CSI error is
circularly-symmetric complex Gaussian distributed. Under this
assumption, we first derive closed-form expressions for the outage
probability in MIMO interference channels for an arbitrarily
given set of transmit and receive beamforming vectors, and then
derive the asymptotic behavior of the outage probability as a
function of several system parameters by using the Chernoff bound.
It is shown that {\em the outage probability decreases
exponentially with respect to (w.r.t.) the quality of CSI measured
by the inverse of the MSE of CSI, typically called the channel $K$
factor \cite{Biglieri:book} or interpreted as the Fisher
information} \cite{Poor:book} in statistical estimation theory.
In particular, it is shown that in the case of interference
alignment, the outage probability can be made arbitrarily small by
improving the CSI quality if the target rate is strictly less than
the rate obtained by using the estimated as the nominal channel.
Next, based on the derived outage probability expressions, we
propose an iterative beam design algorithm for maximizing the
weighted sum rate under the constraint that the outage probability
for each user is less than a certain level. Numerical results show
that the proposed beam design algorithm yields better sum outage
rate performance than conventional beam design algorithms
such as the `max-SINR' algorithm \cite{Gomadam&Jafar:11IT}
developed without the consideration of channel uncertainty.

\subsection{Related work}

The outage analysis for MIMO interference channels has been
performed by several other
researchers\cite{Ghosh&Rao&Zeidler:10SP,Makouei&Andrews&Heath:11SP}.
In \cite{Ghosh&Rao&Zeidler:10SP}, the outage probability for a
given rate tuple is computed under the assumption that the
knowledge of the channel mean and covariance matrix are available,
and transmit and receive beam vectors that minimize the power
consumption for a given outage constraint are obtained. However, it
is difficult to generalize this method of analysis to the case of
multiple data streams per user, whereas our analysis includes the
multiple data stream case. In \cite{Makouei&Andrews&Heath:11SP},
the outage probability and SINR distribution of each user in MIMO
interference channels with the knowledge of channel distribution
information are obtained under a particular transmit and receive
beam structure of IA transmit beams and zero-forcing (ZF)
receivers. On the other hand, our analysis can be applied to the
case of general transmit and receive beam structures beyond IA and
ZF.

The probability distribution of a quadratic form of Gaussian
random variables has been studied extensively in the statistics
field \cite{Gurland:55AMS, Kotz:67AMS-1, Kotz:67AMS-2,
Raphaeli:96IT} and in the communications area \cite{Nabar:05WC,
Hasna&Simon:01VTC, AlNaffouri&Hassibi:09ISIT}. The most
widely-used approach to obtain the probability distribution of a
Gaussian quadratic form is the series fitting method
\cite{Kotz:67AMS-1,Kotz:67AMS-2,Mathai&Provost:book,Nabar:05WC},
which typically converges to the probability distribution of a
Gaussian quadratic form from the lower tail first. However,  the
outage definition associated with robust beam design for MIMO
interference channels in this paper requires accurate computation
of upper tail probabilities. The series expansion for the
cumulative distribution function (CDF) obtained in this paper
based on the integral form for the CDF in
\cite{AlNaffouri&Hassibi:09ISIT} and the residue theorem
\cite{Raphaeli:96IT} is well suited to this purpose and converges
to the upper tail first. Thus, the obtained series in this paper
is more relevant for our outage analysis. For a detailed
explanation of the derived series, please see Appendices \ref{append:distribution}--\ref{append:computational}.

\subsection{Notation and organization}

We will make use of
standard notational conventions. Vectors and matrices are written
in boldface with matrices in capitals. All vectors are column
vectors. For a matrix $\Abf$, $\Abf^H$, $\|\Abf\|_F$ and $\Abf(i,j)$
indicate the Hermitian transpose, the Frobenius norm and the
element in row $i$ and column $j$ of $\Abf$, respectively, and
$\mbox{vec}(\Abf)$ and $\mbox{tr}(\Abf)$ denote the vector
composed of the columns of $\Abf$ and the trace of $\Abf$,
respectively. For vector $\abf$, $\| \abf\|$ and $[\abf]_i$
represent the 2-norm and the $i$-th element of $\abf$,
respectively. $\Ibf_n$ stands for the identity matrix of size $n$
(the subscript is included only when necessary), and
$\mbox{diag}(d_1,\cdots,d_n)$ means a diagonal matrix with
diagonal elements $d_1,\cdots,d_n$.
$\xbf\sim\Cc\Nc(\mubf,\Sigmabf)$ means that the random vector $\xbf$
has the circularly-symmetric complex Gaussian distribution with mean
vector $\mubf$ and covariance matrix $\Sigmabf$.
${\mathcal{K}}=\{1,2,\cdots,K\}$, $\iota = \sqrt{-1}$, and $|A|$
denotes the cardinality of the set $A$.

The paper is organized as follows. The system model and problem
formulation are described in Section \ref{sec:systemmodel}. In
Section \ref{sec:outage_prob}, closed-form expressions for the
outage probability are derived, and the behavior of the outage
probability as a function of several system parameters is examined
by using the Chernoff bound. In Section \ref{sec:beam_design}, an
outage-based beam design algorithm is proposed. Numerical results
are provided in Section \ref{sec:numerical}, followed by the
conclusion in Section \ref{sec:conclusion}.

\section{System Model and Problem Formulation}
\label{sec:systemmodel}

In this paper, we consider a $K$-user time-invariant MIMO
interference channel in which each transmitter equipped with $N_t$
antennas is paired with a receiver equipped with $N_r$ antennas,
and interferes with all receivers other than the desired
receiver. We assume that transmitter $k$ transmits $d\ (\le
\min(N_t,N_r))$ independent data streams to receiver $k$ paired
with transmitter $k$. Then, the received signal at receiver $k$ is
given by
\begin{equation}\label{eq:received_signal}
\ybf_k = \Hbf_{kk}\Vbf_k\sbf_k + \sum_{i=1, i\neq
k}^K\Hbf_{ki}\Vbf_i\sbf_i+\nbf_k,
\end{equation}
where $\Hbf_{ki}$ is the $N_r\times N_t$ channel matrix from
transmitter $i$ to receiver $k$;
$\Vbf_i=[\vbf_i^{(1)},\cdots,\vbf_i^{(d)}]$ is the $N_t\times d$
transmit beamforming matrix with normalized column vectors at
transmitter $i$, i.e., $||\vbf_i^{(m)}||=1$ for $m=1,\cdots, d$;
and $\sbf_i=[s_i^{(1)},\cdots,s_i^{(d)}]^T$ is the $d\times 1 $
symbol vector at transmitter $i$.  We assume that the transmit
symbol vector $\sbf_i$ is drawn from the zero-mean Gaussian
distribution with unit variance, i.e., $\sbf_i \sim
\Cc\Nc({\mathbf{0}},\Ibf)$, and the additive noise vector $\nbf_k$
is  zero-mean Gaussian distributed with variance $\sigma^2$, i.e.,
$\nbf_k\sim \Cc\Nc({\mathbf{0}},\sigma^2\Ibf)$. We assume that the
CSI available to the system is not perfect. That is, neither
the transmitters nor the receivers have perfect CSI. For the
imperfect CSI, we adopt the following model
\begin{equation} \label{eq:CSIerrormodel1}
\Hbf_{ki} = \hat{\Hbf}_{ki} + \Ebf_{ki}
\end{equation}
for each $(k,i) \in \Kc \times \Kc$, where $\Hbf_{ki}$ is the
unknown true channel, $\hat{\Hbf}_{ki}$ is the channel state
available to the transmitters and the receivers, and
$\Ebf_{ki}$ is the error between the true and available channel
information. For the CSI error $\Ebf_{ki}$  between the true and
available channel information, we adopt the Kronecker error model
which is widely used for MIMO systems to model the error
correlation that may be caused by the transmit and receive antenna
structure \cite{Biglieri:book}. Under this model,  the CSI error
$\Ebf_{ki}$ is given by
\begin{equation}  \label{eq:CSIerrormodel2}
\Ebf_{ki}=\Sigmabf_r^{1/2} {\Hbf}_{ki}^{(w)}\Sigmabf_t^{1/2},
~~\mbox{with}~~ \mbox{vec}( \Hbf_{ki}^{(w)})\sim
\Cc\Nc(0,\sigma_h^2\Ibf) ~~\mbox{for some}~~ \sigma_h^2 \ge 0,
\end{equation}
where $\Sigmabf_t$ and $\Sigmabf_r$ are transmit and receive
antenna correlation matrices, respectively, and the elements of
$\Hbf_{ki}^{(w)}$ are independent and identically
distributed (i.i.d.) and are drawn from a circularly-symmetric zero-mean complex Gaussian distribution.  The CSI uncertainty matrix $\Ebf_{ki}$ is
a circularly-symmetric\footnote{The circular symmetry of a random
matrix in form of $\Abf\Zbf\Bbf$ with constant matrices $\Abf$ and
$\Bbf$ and a circularly-symmetric complex Gaussian matrix $\Zbf$
can easily be shown by a similar technique to that used in the Appendix \ref{append:proof_covariance}.}
complex Gaussian random matrix with distribution
$\mbox{vec}(\Ebf_{ki})\sim \Cc\Nc ({\mathbf{0}},
\sigma_h^2(\Sigmabf_t^T \otimes \Sigmabf_r
))$\cite[p.90]{Biglieri:book}, and $\sigma_h^2$ is the parameter
capturing the uncertainty level in CSI.  We assume that
the $\Ebf_{ki}$'s are independent across transmitter-receiver pairs
$(k,i)$.  To specify the quality of CSI and signal reception,  we
define two parameters
\[
K_{ch}^{(ki)}:=\frac{\|\hat{\Hbf}_{ki}\|_F^2}{
\Ebb\{\|\Ebf_{ki}\|_F^2\}}=
\frac{\|\hat{\Hbf}_{ki}\|_F^2}{\sigma_h^2\mbox{tr}(\Sigmabf_t^T\otimes\Sigmabf_r)}
~~\mbox{and}~~
 \Gamma^{(k)}:=\frac{\|\hat{\Hbf}_{kk}\|_F^2}{\sigma^2}.
\]
 $K_{ch}^{(ki)}$ is the channel $K$ factor defined as the ratio of the power of the known channel part to that of the unknown channel part,  representing the quality of
CSI  \cite{Biglieri:book}, and $\Gamma^{(k)}$ is the
signal-to-noise ratio (SNR) at receiver $k$ since $\Vbf_k$ and
$\sbf_k$ are normalized in our formulation. Hereafter, we will use
$\hat{\mathcal{H}}$ to represent the collection of channel
information $\{\hat{\Hbf}_{ki}, \Sigmabf_t, \Sigmabf_r\}$ known to
the transmitters and receivers. By using the receiver filter
$\ubf_k^{(m)}$ ($||\ubf_k^{(m)}||=1$), receiver $k$ projects the
received signal $\ybf_k$ in \eqref{eq:received_signal}  to recover
the desired signal stream $m$:
\[
\hat{s}_k^{(m)} = (\ubf_k^{(m)})^H\ybf_k =
(\ubf_k^{(m)})^H\left((\hat{\Hbf}_{kk}+\Ebf_{kk})\Vbf_k\sbf_k +
\sum_{i=1, i\neq k}^K(\hat{\Hbf}_{ki} + \Ebf_{ki})
\Vbf_i\sbf_i+\nbf_k\right).
\]

We assume that the design of the transmit beamforming matrices
$\{\Vbf_k, k \in \Kc\}$ and receive filters
$\{\Ubf_k=[\ubf_k^{(1)},\cdots,\ubf_k^{(d)}], k\in \Kc\}$ is based
on the available CSI $\hat{\Hc}$. This model of beam design and
signal transmission and reception  captures many coherent linear
beamforming MIMO schemes including interference alignment and sum
rate maximizing beamforming schemes \cite{Gomadam&Jafar:11IT,
Santamaria&Gonzalez&Heath&Peter:10Globecom,
Sung&Park&Lee&Lee:10WCOM} in which transmit and receive
beamforming matrices are designed based on available CSI at
transmitters and receivers. Under this processing model, the SINR
for stream $m$ of user $k$ is given by {\small
\begin{eqnarray} \label{eq:SINR}
{\mathsf{SINR}}_k^{(m)}\big|_{\hat{\Hc}} = \hspace{1em} & &  \\
& & \hspace{-7em}
\frac{|(\ubf_k^{(m)})^H\hat\Hbf_{kk}\vbf_k^{(m)}|^2}
 {|(\ubf_k^{(m)})^H\Ebf_{kk}\vbf_k^{(m)}|^2+   \sum_{j\neq m}
  |(\ubf_k^{(m)})^H(\hat\Hbf_{kk}+\Ebf_{kk})\vbf_k^{(j)}|^2  +\sum_{i\neq k} \sum_{j=1}^d
  |(\ubf_k^{(m)})^H(\hat\Hbf_{ki}+\Ebf_{ki})\vbf_i^{(j)}|^2
+\sigma^2}, \nonumber
\end{eqnarray}}
where the numerator of the right-hand side (RHS) in
\eqref{eq:SINR} is the desired signal power, and the first,
second, third and fourth terms in the denominator of the RHS in
\eqref{eq:SINR} represent the interference purely by channel
uncertainty, inter-stream interference, other user interference
and thermal noise, respectively. (Here, the dependence of SINR on
$\hat{\Hc}$ is explicitly shown. Since the dependence is clear,
the notation $|_{\hat{\Hc}}$ will be omitted hereafter.) Because
the $\{\Ebf_{ki}\}$ are random, ${\mathsf{SINR}}_k^{(m)}$ is a random
variable for given $\hat{\mathcal{H}}$ and $\{\Vbf_k(\hat{\Hc}),
\Ubf_k(\hat{\Hc}), k\in \Kc\}$. Thus, an outage at stream $m$ of
user $k$ occurs if the supportable rate determined by the received
SINR \eqref{eq:SINR} is below the target rate $R_k^{(m)}$, and the
outage probability is  given by
\begin{equation} \label{eq:outageDef}
 {\mathrm{Pr}}\{\mbox{outage}\}
={\mathrm{Pr}} \left\{\log_2\left(1+{\mathsf{SINR}}_k^{(m)}
\right) \le R_k^{(m)}\right\}.
\end{equation}
By rearranging the terms in \eqref{eq:SINR}, the outage event can
be expressed as
\begin{equation} \label{eq:SINR_rearranged}
\sum_{i=1}^{K} \sum_{j=1}^{d} X_{ki}^{(mj)H}X_{ki}^{(mj)} \ge
\frac{|\ubf_k^{(m)H}\hat{\Hbf}_{kk}\vbf_k^{(m)}|^2}{2^{R_k^{(m)}}-1}-\sigma^2
=: \tau,
\end{equation}
where
\begin{equation}  \label{eq:Xklmj}
X_{ki}^{(mj)} := \left\{
\begin{array}{ll}
\ubf_k^{(m)H}\Ebf_{kk}\vbf_k^{(m)}, & i=k ~~ \mbox{and}~ j=m, \\
\ubf_k^{(m)H}(\hat{\Hbf}_{ki}+\Ebf_{ki})\vbf_i^{(j)}, &
\mbox{otherwise.}
\end{array}
\right.
\end{equation}
Since the $\{\Ebf_{ki}\}$ are  circularly-symmetric complex Gaussian
random matrices, $\{X_{ki}^{(mj)}, i=1,\cdots,K,j=1,\cdots,d\}$
are circularly-symmetric complex Gaussian random variables, and
the left-hand side (LHS) of \eqref{eq:SINR_rearranged} is a {\em
quadratic form of non-central Gaussian random variables}. To
simplify notation, we will use vector form from here on. In vector
form, \eqref{eq:SINR_rearranged} can be expressed as
\begin{equation}\label{eq:outage_rearranged}
\Xbf_k^{(m)H}\Xbf_k^{(m)} \ge \tau,
\end{equation}
where $\Xbf_k^{(m)}:=[X_{k1}^{(m1)}, \cdots, X_{k1}^{(md)},
X_{k2}^{(m1)},\cdots, X_{kK}^{(md)}]^T$. The elements of the mean
vector $\mubf_k^{(m)}(:=\Ebb\{\Xbf_{k}^{(m)}\})$ of
$\Xbf_{k}^{(m)}$ are given by
\begin{equation}  \label{eq:XkjMean}
[\mubf_k^{(m)}]_{(i-1)d+j}= \left\{
\begin{array}{ll}
 0,  & i=k,~ j=m, \\
\ubf_k^{(m)H}\hat{\Hbf}_{ki}\vbf_i^{(j)},  & \mbox{otherwise},
\end{array}
\right.
\end{equation}
for $i=1,\cdots,K$ and $j=1,\cdots, d$, and the covariance matrix
$\Sigmabf_k^{(m)}$ of $\Xbf_k^{(m)}$ is given by a block diagonal
matrix, since $\{\Ebf_{ki}, ~i=1,\cdots,K\}$ are independent for
different values of $i$, i.e.,
\begin{equation} \label{eq:Xcovariance1}
 \Sigmabf_k^{(m)} := \Ebb \{(\Xbf_{k}^{(m)}-
 {\mathbb{E}}\{\Xbf_{k}^{(m)}\})(\Xbf_{k}^{(m)}-
 {\mathbb{E}}\{\Xbf_{k}^{(m)}\})^H\} =
{\mathrm{diag}}(\Sigmabf_{k,1}^{(m)},\cdots,\Sigmabf_{k,K}^{(m)}),
\end{equation}
where the  $d\times d$  sub-block matrix $\Sigmabf_{k,i}^{(m)}$ is
given by
\begin{eqnarray} \label{eq:Xcovariance2}
\Sigmabf_{k,i}^{(m)} &=&
\sigma_h^2(\ubf_k^{(m)H}\Sigmabf_r\ubf_k^{(m)}) \left[
\begin{array}{cccc}
\vbf_i^{(1)H}\Sigmabf_t\vbf_i^{(1)} &
\vbf_i^{(2)H}\Sigmabf_t\vbf_i^{(1)} & \cdots &
\vbf_i^{(d)H}\Sigmabf_t\vbf_i^{(1)} \\
\vbf_i^{(1)H}\Sigmabf_t\vbf_i^{(2)} &
\vbf_i^{(2)H}\Sigmabf_t\vbf_i^{(2)} & \cdots &
\vbf_i^{(d)H}\Sigmabf_t\vbf_i^{(2)} \\
\vdots & \vdots & \ddots & \vdots \\
\vbf_i^{(1)H}\Sigmabf_t\vbf_i^{(d)} &
\vbf_i^{(2)H}\Sigmabf_t\vbf_i^{(d)} & \cdots &
\vbf_i^{(d)H}\Sigmabf_t\vbf_i^{(d)}
\end{array}
\right]
\end{eqnarray}
for each $i$. (The proof of \eqref{eq:Xcovariance2} is given in
Appendix \ref{append:proof_covariance}.) In the following sections, we will derive closed-form
expressions for \eqref{eq:outageDef}, investigate the behavior
 of the outage probability as a function of several parameters,
and propose an outage-based beam design algorithm.

\section{The Computation of the Outage Probability}
\label{sec:outage_prob}

In this section, we first derive a closed-form expression for the
outage probability in the general case of the Kronecker CSI error
model, and then consider special cases. After this, we examine the
behavior of the outage probability as a function of several
important system parameters based on the Chernoff bound.

\subsection{Closed-form expressions for the outage probability}

 For a Gaussian random vector $\Xbf
\sim \Cc\Nc(\mubf, \Sigmabf)$ with the eigendecomposition of its
covariance matrix $\Sigmabf = \Psibf \Lambdabf \Psibf^H$, the CDF
of $\Xbf^H\bar{\Qbf}\Xbf$ for some given $\bar{\Qbf}$ is given by
\cite{AlNaffouri&Hassibi:09ISIT}
\begin{equation} \label{eq:CDF_general14}
{\mathrm{Pr}}\{\Xbf^H\bar\Qbf\Xbf\le \tau \} =
\frac{1}{2\pi}\int_{-\infty}^{\infty} \frac{e^{\tau (\iota
\omega+\beta)}}{\iota \omega +\beta}\frac{e^{-c}}{\det(\Ibf+(
\iota \omega +\beta)\Qbf)} d\omega
\end{equation}
for some $\beta > 0$ such that $\Ibf+\beta\Qbf$ is positive
definite,  where
$\Qbf=\Lambdabf^{H/2}\Psibf^H\bar{\Qbf}\Psibf\Lambdabf^{1/2}$,
$\chibf=\Lambdabf^{-1/2}\Psibf^H\mubf$ and
$c=\chibf^H\left(\Ibf+\frac{1}{\iota \omega+\beta}\Qbf^{-1}
\right)^{\! -1}\!\!\!\!\chibf$.  From here on, we will derive
closed-form series expressions for the CDF of the outage
probability in several important cases by applying the residue
theorem used in \cite{Raphaeli:96IT} to the integral form
\eqref{eq:CDF_general14} for the CDF. First, we consider the most
general case of the Kronecker CSI error model. The outage
probability in this case
 is given by the following theorem.

\vspace{0.5em}

\begin{theorem}\label{theo:general_outage_thm}
For given transmit and receive beamforming matrices
$\{\Vbf_k=[\vbf_k^{(1)},\cdots,\vbf_k^{(d)}]\}$ and
$\{\Ubf_k=[\ubf_k^{(1)},\cdots,\ubf_k^{(d)}]\}$ designed based on
$\hat{\mathcal{H}}=\{\hat{\Hbf}_{ki}, \Sigmabf_t, \Sigmabf_r\}$,
the outage probability for stream $m$ of user $k$ with the target
rate $R_k^{(m)}$ under the CSI error model
(\ref{eq:CSIerrormodel1}) and (\ref{eq:CSIerrormodel2}) is given
by
\begin{eqnarray}
{\mathrm{Pr}}\{{\mbox{outage}}\}
&=&{\mathrm{Pr}}\{\log_2(1+{\mathsf{SINR}}_k^{(m)}) \le R_k^{(m)}\} \nonumber\\
&=& -\sum_{i=1}^{\kappa} \frac{e^{-(\frac{\tau}{\lambda_i} +
\sum_{j=1}^{\kappa_i}|\chi_i^{(j)}|^2)} }{\lambda_i^{\kappa_i}}
\sum_{n = \kappa_i-1}^{\infty} \frac{1}{n!} g_{i}^{(n)}(0)
\frac{1}{(n-\kappa_i+1)!}
\left(\frac{\sum_{j=1}^{\kappa_i}|\chi_i^{(j)}|^2}{\lambda_i}\right)^{\!\!
n-\kappa_i+1} \label{eq:general_outage}
\end{eqnarray}
where $\tau$ is given in \eqref{eq:SINR_rearranged};
$\{\lambda_i,i=1,\cdots,\kappa\}$ are all the distinct eigenvalues
of the $Kd \times Kd$ covariance matrix $\Sigmabf_k^{(m)}$ in
\eqref{eq:Xcovariance1}  with eigendecomposition
$\Sigmabf_{k}^{(m)}=\Psibf_k^{(m)}\Lambdabf_k^{(m)}\Psibf_k^{(m)H}$;
$\kappa_i$ is the multiplicity\footnote{Since  $\Sigmabf_k^{(m)}$
is a normal matrix, we have $Kd=\sum_{i=1}^\kappa \kappa_i$.} of
the eigenvalue $\lambda_i$; $\chi_{i}^{(j)}$ is the element
of vector
\begin{equation}\label{eq:theorem1chibf}
\chibf_k^{(m)}:=(\Lambdabf_{k}^{(m)})^{-\frac{1}{2}}\Psibf_{k}^{(m)H}\mubf_k^{(m)}
\end{equation}
corresponding to the $j$-th eigenvector of the eigenvalue
$\lambda_i$ ($1\le j \le \kappa_i$), i.e., it is  the $j$-th
element of
$(\lambda_i\Ibf_{\kappa_i})^{-\frac{1}{2}}\Psibf_{k,i}^{(m)H}
\mubf_k^{(m)}$.  ($\Psibf_{k,i}^{(m)}$ is a $Kd\times \kappa_i$
matrix composed of the eigenvectors of $\Sigmabf_{k}^{(m)}$
associated with $\lambda_i$.); {\small
\begin{equation} \label{eq:theorem1gis}
g_{i}(s)= \frac{e^{\tau s}}{s-{1}/{\lambda_i}} \cdot
\frac{\exp\left(-\sum_{ p\neq i }
\frac{(s-1/\lambda_i)\lambda_p}
        {1+(s-1/\lambda_i)\lambda_p }\sum_{q=1}^{\kappa_p}|\chi_p^{(q)}|^2 \right)}
{\prod_{p \neq i} \Big(1+(s-1/\lambda_i)\lambda_p\Big)^{\kappa_p}};
\end{equation}
} and $g_{i}^{(n)}(s)$ is the $n$-th derivative of $g_{i}(s)$
w.r.t. $s$.
\end{theorem}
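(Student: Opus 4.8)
The plan is to specialize the general integral representation \eqref{eq:CDF_general14} to the present setting and then evaluate the resulting integral by the residue theorem. Suppressing the indices $(k,m)$ for brevity, the outage event \eqref{eq:outage_rearranged} is $\Xbf^H\Xbf\ge\tau$, the complement of $\Xbf^H\bar\Qbf\Xbf\le\tau$ with the choice $\bar\Qbf=\Ibf$. With this choice and the eigendecomposition $\Sigmabf_{k}^{(m)}=\Psibf_{k}^{(m)}\Lambdabf_{k}^{(m)}\Psibf_{k}^{(m)H}$, the matrix $\Qbf=\Lambdabf^{H/2}\Psibf^H\bar\Qbf\Psibf\Lambdabf^{1/2}$ collapses to the diagonal eigenvalue matrix $\Lambdabf_{k}^{(m)}$, the vector $\chibf$ becomes $\chibf_k^{(m)}$ in \eqref{eq:theorem1chibf}, and \eqref{eq:CDF_general14} turns scalar: the determinant factors as $\prod_{p=1}^{\kappa}(1+(\iota\omega+\beta)\lambda_p)^{\kappa_p}$, and grouping the coordinates of $\chibf$ by distinct eigenvalue gives $c=\sum_{p=1}^{\kappa}\frac{(\iota\omega+\beta)\lambda_p}{1+(\iota\omega+\beta)\lambda_p}\sum_{q=1}^{\kappa_p}|\chi_p^{(q)}|^2$.

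Next I would pass to a contour integral via $s=\iota\omega+\beta$, mapping the real line to the vertical line $\mathrm{Re}(s)=\beta$, so that ${\mathrm{Pr}}\{\Xbf^H\Xbf\le\tau\}=\frac{1}{2\pi\iota}\int_{\beta-\iota\infty}^{\beta+\iota\infty}f(s)\,ds$ with $f(s)=\frac{e^{\tau s}}{s}\frac{\exp(-\sum_p\frac{s\lambda_p}{1+s\lambda_p}\sum_q|\chi_p^{(q)}|^2)}{\prod_p(1+s\lambda_p)^{\kappa_p}}$. For $\tau>0$ (the only nontrivial case, since $\tau\le0$ forces outage probability $1$) the factor $e^{\tau s}$ decays on the left half-plane, so the contour closes by a large left semicircle whose arc contribution vanishes, enclosing a simple pole at $s=0$ and the points $s=-1/\lambda_i$, $i=1,\dots,\kappa$. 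By the residue theorem ${\mathrm{Pr}}\{\Xbf^H\Xbf\le\tau\}$ equals the sum of the enclosed residues. A direct check gives $c\to0$ and $f(s)\to1/s$ as $s\to0$, so the residue at $s=0$ is exactly $1$; hence the outage probability $1-{\mathrm{Pr}}\{\Xbf^H\Xbf\le\tau\}$ equals the negative sum of the residues at the points $s=-1/\lambda_i$, which accounts for the overall minus sign in \eqref{eq:general_outage}.

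The crux is the residue at $s=-1/\lambda_i$. Writing $A_i:=\sum_{j=1}^{\kappa_i}|\chi_i^{(j)}|^2$ and isolating the $p=i$ factor of $\exp(-c)$, the identity $\frac{s\lambda_i}{1+s\lambda_i}=1-\frac{1}{1+s\lambda_i}$ converts that factor into $e^{-A_i}\exp\!\big(\frac{A_i}{1+s\lambda_i}\big)$, so $s=-1/\lambda_i$ is in fact an \emph{essential} singularity rather than a pole of order $\kappa_i$. I would expand $\exp\!\big(\frac{A_i}{1+s\lambda_i}\big)=\sum_{k\ge0}\frac{1}{k!}(A_i/\lambda_i)^k(s+1/\lambda_i)^{-k}$ as a Laurent series in $(s+1/\lambda_i)$, combine it with $(1+s\lambda_i)^{-\kappa_i}=\lambda_i^{-\kappa_i}(s+1/\lambda_i)^{-\kappa_i}$, and Taylor-expand the remaining analytic factor about $s=-1/\lambda_i$. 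That analytic factor, after the shift $s\mapsto s-1/\lambda_i$ and after pulling out $e^{-\tau/\lambda_i}$, is precisely $g_i(s)$ in \eqref{eq:theorem1gis}, so its Taylor coefficients are $\frac{1}{n!}g_i^{(n)}(0)$. Collecting the coefficient of $(s+1/\lambda_i)^{-1}$ in the Cauchy product forces $k=n-\kappa_i+1$, which produces the index constraint $n\ge\kappa_i-1$ and the inner sum of \eqref{eq:general_outage}, while the prefactors assemble into $e^{-(\tau/\lambda_i+A_i)}/\lambda_i^{\kappa_i}$.

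I expect the main obstacle to be exactly this essential-singularity step: because $\exp(-c)$ is singular at $s=-1/\lambda_i$, the elementary finite-order residue formula does not apply, and the residue must be read off from the Cauchy product of the Laurent series of $\exp(A_i/(1+s\lambda_i))$ with the Taylor series of the analytic remainder; the bookkeeping that matches the double series to the derivatives $g_i^{(n)}(0)$ is the delicate part. Two secondary points need care: justifying that the left arc vanishes (a Jordan-lemma estimate using $\tau>0$, the polynomial growth of the denominator, and the boundedness of $\exp(-c)$ as $|s|\to\infty$), and verifying convergence of the resulting series, which I would defer to Appendices \ref{append:distribution}--\ref{append:computational} as the excerpt indicates. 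The identification $\bar\Qbf=\Ibf$ and the simplification of $c$ are routine, so the argument rests almost entirely on the contour closure and the Laurent-series residue extraction.
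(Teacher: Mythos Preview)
Your proposal is correct and follows essentially the same route as the paper: specialize \eqref{eq:CDF_general14} with $\bar{\Qbf}=\Ibf$, close the contour to the left, pick up the residue $1$ at $s=0$, and extract each residue at $s=-1/\lambda_i$ by factoring the integrand into the essential-singularity piece $e^{-A_i}\,\exp\!\big(A_i/(1+s\lambda_i)\big)/(1+s\lambda_i)^{\kappa_i}$ and the analytic remainder, then reading off the $-1$ coefficient from the Cauchy product of the Laurent and Taylor series. The paper does the bookkeeping after the explicit shift $G_i(s):=F(s-1/\lambda_i)$, writing the Laurent factor as $f_i(s)=e^{A_i/(\lambda_i s)}/(\lambda_i s)^{\kappa_i}$ and the analytic factor exactly as $g_i(s)$ in \eqref{eq:theorem1gis}; your description without the shift is equivalent, and your observation that the singularity is essential (not a pole) is precisely why the residue comes out as an infinite series rather than a finite one.
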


\vspace{0.5em}
\begin{proof}
By using \eqref{eq:CDF_general14} and the facts $\bar{\Qbf}=\Ibf$
and $\Xbf_k^{(m)} \sim \Cc\Nc(\mubf_k^{(m)},\Sigmabf_k^{(m)})$ in
this case, we obtain the outage probability for stream $m$ of user
$k$ in an integral form as
\begin{equation} \label{eq:raw_integral}
  {\mathrm{Pr}}\{\Xbf_k^{(m)H}\Xbf_k^{(m)} \ge \tau \}
  = 1-\frac{1}{2\pi \iota }\int_{\beta-\iota \infty}^{\beta+ \iota \infty}
\frac{e^{s\tau}}{s}\cdot \frac{e^{-\sum_{i=1}^{\kappa}
\frac{s\lambda_i}{1+s\lambda_i}(\sum_{j=1}^{\kappa_i}|\chi_i^{(j)}|^2)}}
{\prod_{i=1}^{\kappa}(1+s\lambda_i)^{\kappa_i}}\ ds,
\end{equation}
where $s=\beta + \iota \omega$ ~($\beta>0$). The outage probability
\eqref{eq:raw_integral} can be expressed as a contour integral:
\begin{eqnarray} \label{eq:thm1_integral}
  {\mathrm{Pr}}\{\Xbf_k^{(m)H}\Xbf_k^{(m)} \ge \tau \}
  &=& 1-\frac{1}{2\pi\iota}\oint_{{\mathcal{C}}} \underbrace{\frac{e^{s\tau}}{s}\cdot\frac{e^{-\sum_{i=1}^{\kappa}
\frac{s\lambda_i}{1+s\lambda_i}(\sum_{j=1}^{\kappa_i}|\chi_i^{(j)}|^2)}}
{\prod_{i=1}^{\kappa}(1+s\lambda_i)^{\kappa_i}}}_{=: F(s)}\ ds,
\end{eqnarray}
where ${\mathcal{C}}$ is a contour of integration containing the
imaginary axis and the whole left half plane of the complex plane.
By the residue theorem, the sum of the residues at singular points
of $F(s)$ which do not have positive real parts yields the contour
integral in \eqref{eq:thm1_integral} times $2\pi \iota$. It is
easy to see that the singular points of $F(s)$ are $s=0$ and
$s=-1/\lambda_i$, $i=1,\cdots,\kappa$. Since
$\Sigmabf_{k,i}^{(m)}$ are all positive-definite,
$\Sigmabf_k^{(m)}$ is positive definite and $\lambda_i
> 0$ for all $i$. So, the outage probability is given by
\begin{equation}  \label{eq:theorem1PoutForm1}
 {\mathrm{Pr}}\{\mbox{outage}\}
=1-\Big( \underset{s=0}{\mbox{Res }} F(s)
     +\sum_{i=1}^{\kappa}  \underset{s=-1/\lambda_i}{\mbox{Res}}
      F(s)
\Big).
\end{equation}
It is also easy to see from \eqref{eq:thm1_integral} that the
residue of $F(s)$ at $s=0$ is $\underset{s=0}{\mbox{Res}}F(s)=1$.
To compute $\underset{s=-1/\lambda_i}{\mbox{Res }}F(s)$, for
each $i$ we introduce $G_i(s)$ defined as
{
\begin{eqnarray*}
G_{i}(s) &:=&  F\left(s-\frac{1}{\lambda_i}\right) =
\frac{e^{\tau(s-1/\lambda_i)}}{s-1/\lambda_i}\cdot
     \frac{e^{-\sum_{p=1}^{\kappa}
     \frac{\lambda_p(s-1/\lambda_i)}
         {1+\lambda_p(s-1/\lambda_i)}(\sum_{q=1}^{\kappa_p}|\chi_p^{(q)}|^2)}}       {\prod_{p=1}^{\kappa}(1+\lambda_p(s-1/\lambda_i))^{\kappa_p}}\\
&\ =&
\frac{e^{\tau(s-1/\lambda_i)}}{s-1/\lambda_i}\cdot
    \frac{e^{-\frac{\lambda_is-1}{\lambda_is}\sum_{j=1}^{\kappa_i}|\chi_i^{(j)}|^2}}
      {(\lambda_i s)^{\kappa_i}} \cdot \underbrace{\frac{e^{-\sum_{p \neq i}
    \frac{\lambda_p(s-1/\lambda_i)}
         {1+\lambda_p(s-1/\lambda_i)}\sum_{q=1}^{\kappa_p}|\chi_p^{(q)}|^2}}
         {\prod_{p \neq i}(1+\lambda_p(s-1/\lambda_i))^{\kappa_p}}}_{=:I_1} \\
&\ =& e^{-(\frac{\tau}{\lambda_i}+\sum_{j=1}^{\kappa_i}|\chi_i^{(j)}|^2)} \times
\underbrace{\frac{e^{\frac{1}{\lambda_is}\sum_{j=1}^{\kappa_i}|\chi_i^{(j)}|^2}}{(\lambda_i s)^{\kappa_i}}}_{=:f_i(s)} \times \underbrace{\left(\frac{e^{\tau
s}}{s-1/\lambda_i}\times I_1 \right)}_{=:g_{i}(s)}.
\end{eqnarray*}
} Now, the residue of $F(s)$ at $s=-1/\lambda_i$ is
transformed to that of $G_i(s)$ at  $s=0$. The  Laurent series
expansion  of $f_i(s)$  and the  Taylor series expansion of
$g_i(s)$ at $s=0$ are given respectively by
{\small
\begin{equation}
 f_i(s)=\frac{1}{(\lambda_is)^{\kappa_i}}\sum_{n=0}^{\infty}\frac{1}{n!}
 \left(\frac{\sum_{j=1}^{\kappa_i}|\chi_i^{(j)}|^2}{\lambda_is}\right)^{\!\!
 n} ~~~\mbox{and}~~~
 g_{i}(s)=\sum_{n=0}^{\infty}\frac{1}{n!}g_{i}^{(n)}(0)s^n.
\end{equation}
}
By multiplying the two series and computing the coefficient of
$1/s$, we obtain the residue of $G_{i}(s)$ at $s=0$ as
\begin{equation}
\underset{s=0}{\mbox{Res\ }}G_i(s) =
\frac{e^{-(\frac{\tau}{\lambda_i}
+\sum_{j=1}^{\kappa_i}|\chi_i^{(j)}|^2)}}{\lambda_i^{\kappa_i}}
\sum_{n=\kappa_i-1}^{\infty}
\frac{1}{n!}
g_{i}^{(n)}(0)
\frac{1}{(n-\kappa_i+1)!}
\left(\frac{\sum_{j=1}^{\kappa_i}|\chi_i^{(j)}|^2}{\lambda_i}\right)^{\!\! n-\kappa_i+1}
\end{equation}
for each $i$. Finally,  substituting the residues into
\eqref{eq:theorem1PoutForm1} yields \eqref{eq:general_outage}.
\end{proof}

\vspace{0.5em} \noindent To compute
\eqref{eq:general_outage}, we need to compute $\{\lambda_i\}$,
$\{\chi_i^{(j)}\}$ and the higher order derivatives of $g_i(s)$.
The first two terms are easy to compute since they are related
with the mean vector of size $Kd$ and the covariance matrix of
size  $Kd \times Kd$. Furthermore, the higher order derivatives of
$g_i(s)$ can also be computed efficiently based on recursion.
(Please see Appendix \ref{append:derivative}.) Note that in the
case that the elements $\Hbf_{ki}^{(w)}$ in
\eqref{eq:CSIerrormodel2} have difference variances,
\eqref{eq:general_outage} is still valid since the difference
variances only change the covariance matrix
\eqref{eq:Xcovariance1} and the outage expression depends on the
covariance matrix \eqref{eq:Xcovariance1} through $\{\lambda_i\}$
and $\{\chi_i^{(j)}\}$.

Next, we provide some useful corollaries to Theorem
\ref{theo:general_outage_thm} regarding the outage probability in
meaningful special cases. First, we consider the case in which a
subset of channels are perfectly known at receiver $k$, i.e.,
$\Hbf_{ki}=\hat{\Hbf}_{ki}$ and $\Ebf_{ki}= {\mathbf {0}}$ for
some $i\in\Kc$. This corresponds to the case in which channel
estimation or CSI feedback for some links is easier than that for
other links. For example, the desired link channel may be easier
to estimate than others. The outage probability in this case is
given by the following corollary. \vspace{0.5em}

\begin{corollary} \label{corr:known_H_outage_prob}
When perfect CSI for some channel links including the desired link
is available  at receiver $k$, i.e., $\hat{\Hbf}_{ki}=\Hbf_{ki}$
for $i\in\Upsilon_k \subset \Kc$, the outage probability for
stream $m$ of user $k$ is given by
\begin{eqnarray} \label{eq:general_outage_known_H}
{\mathrm{Pr}}\{{\mbox{outage}}\}
&=&{\mathrm{Pr}}\{\log_2(1+{\mathsf{SINR}}_k^{(m)}) \le R_k^{(m)}\} \nonumber\\
&=& -\sum_{i=1}^{\kappa^\prime}
\frac{e^{-(\frac{\tau^\prime}{\lambda_i} +
\sum_{j=1}^{\kappa_i}|\chi_i^{(j)}|^2)} }{\lambda_i^{\kappa_i}}
\sum_{n = \kappa_i-1}^{\infty} \frac{1}{n!} g_{1,i}^{(n)}(0)
\frac{1}{(n-\kappa_i+1)!}
\left(\frac{\sum_{j=1}^{\kappa_i}|\chi_i^{(j)}|^2}{\lambda_i}\right)^{\!\!
n-\kappa_i+1}
\end{eqnarray}
where $\tau^\prime$ is defined below; $\{\lambda_i,
i=1,\cdots,\kappa^\prime\}$ is the set of all the distinct
eigenvalues of the covariance matrix \eqref{eq:Xcovariance1};
$\kappa_i$ is the multiplicity of $\lambda_i$, satisfying
$(K-|\Upsilon_k|)d = \sum_{i=1}^{\kappa^\prime} \kappa_i$;
$\chi_{i}^{(j)}$ is given in \eqref{eq:theorem1chibf}; and {\small
\begin{equation}
g_{1,i}(s)= \frac{e^{\tau^\prime s}}{s-{1}/{\lambda_i}} \cdot
\frac{\exp\left(-\sum_{ p\neq i } \frac{(s-1/\lambda_i)\lambda_p}
        {1+(s-1/\lambda_i)\lambda_p }\sum_{q=1}^{\kappa_p}|\chi_p^{(q)}|^2 \right)}
{\prod_{p \neq i} \Big(1+(s-1/\lambda_i)\lambda_p\Big)^{\kappa_p}}.
\end{equation}
}

\begin{proof}
When CSI for some links including the desired link is perfect, the
outage event at stream $m$ of user $k$ is given by {\scriptsize
\begin{eqnarray*}
\log_2\left(1+\frac{|\ubf_k^{(m)H}\hat\Hbf_{kk}\vbf_k^{(m)}|^2}
{\sum_{i\in\Upsilon_k}\sum_{\substack{j=1,\\ j\neq m}}^{d}
|\ubf_k^{(m)H}{\hat\Hbf}_{ki}\vbf_i^{(j)}|^2
+\sum_{{\substack{i\in\Upsilon_k,\\ i\neq k}}}
|\ubf_k^{(m)H}{\hat\Hbf}_{ki}\vbf_i^{(m)}|^2 +\sum_{i\in
\Upsilon_k^c}\sum_{j=1}^{d}
|\ubf_k^{(m)H}(\hat{\Hbf}_{ki}+\Ebf_{ki}) {\vbf_i^{(j)}}|^2
+\sigma^2}\right) \le R_k^{(m)}
\end{eqnarray*}
} since  $\Ebf_{ki}=  \mathbf{0}$ for $i\in\Upsilon_k$. Thus, in
this case the outage event is expressed in a quadratic form as
follows: {\small
\begin{equation} \label{eq:cor1_outage_rearranged}
\sum_{i\in\Upsilon_k^c}\sum_{j=1}^{d} X_{ki}^{(mj)H} X_{ki}^{(mj)}
 \ge
\frac{|\ubf_k^{(m)H}\hat{\Hbf}_{kk}\vbf_k^{(m)}|^2}{2^{R_k^{(m)}}-1}
-\sum_{i\in\Upsilon_k}\sum_{\substack{j=1, \\ j\neq m}}^d
|\ubf_k^{(m)H}\hat\Hbf_{ki}\vbf_i^{(j)}|^2
-\sum_{\substack{i\in\Upsilon_k,\\ i\neq k}}
 |\ubf_k^{(m)H}\hat\Hbf_{ki}\vbf_i^{(m)}|^2
-\sigma^2 =:\tau^\prime,
\end{equation}
} and we have $X_{ki}^{(mj)}\equiv 0$ for all $i\in\Upsilon_k$
(See \eqref{eq:Xklmj}). The size of $\Xbf_k^{(m)}$ now reduces to
$(K-|\Upsilon_k|)d$, and  the size of the covariance matrix
$\Sigmabf_k^{(m)}$ is $(K-|\Upsilon_k|)d\times(K-|\Upsilon_k|)d$.
With the new threshold $\tau^\prime$, the same argument as that in
Theorem \ref{theo:general_outage_thm} can be applied to yield the
result.
\end{proof}
\end{corollary}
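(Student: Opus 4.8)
The plan is to reduce this to Theorem~\ref{theo:general_outage_thm} by exploiting the fact that perfect knowledge of a link forces the corresponding CSI error to vanish, so that the interference contributed by that link becomes deterministic rather than random. Once the random part of the problem has been isolated, the series expression follows from the theorem with no new analytic work.

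First I would substitute $\Ebf_{ki}=\mathbf{0}$ for every $i\in\Upsilon_k$ into the SINR expression \eqref{eq:SINR}. This has two consequences: the pure-uncertainty term $|\ubf_k^{(m)H}\Ebf_{kk}\vbf_k^{(m)}|^2$ disappears (since the desired link $k\in\Upsilon_k$), and each interference term $|\ubf_k^{(m)H}(\hat{\Hbf}_{ki}+\Ebf_{ki})\vbf_i^{(j)}|^2$ with $i\in\Upsilon_k$ collapses to the deterministic constant $|\ubf_k^{(m)H}\hat{\Hbf}_{ki}\vbf_i^{(j)}|^2$. In the notation of \eqref{eq:Xklmj} this is precisely the statement $X_{ki}^{(mj)}\equiv 0$ for all $i\in\Upsilon_k$.

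Next I would rearrange the outage inequality $\log_2(1+\mathsf{SINR}_k^{(m)})\le R_k^{(m)}$ as in \eqref{eq:SINR_rearranged}, but now move the deterministic interference contributions from the perfectly known links to the right-hand side. What remains on the left is the quadratic form $\sum_{i\in\Upsilon_k^c}\sum_{j=1}^{d} X_{ki}^{(mj)H}X_{ki}^{(mj)}$ involving only the random variables indexed by $i\in\Upsilon_k^c$, while the deleted constants combine with the original threshold to produce the new threshold $\tau^\prime$ of \eqref{eq:cor1_outage_rearranged}. In vector form the relevant Gaussian vector is the restriction of $\Xbf_k^{(m)}$ to the indices $i\in\Upsilon_k^c$, of reduced dimension $(K-|\Upsilon_k|)d$, whose mean and covariance are simply the corresponding sub-vector of $\mubf_k^{(m)}$ and sub-block of $\Sigmabf_k^{(m)}$.

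The last step is to observe that the reduced problem $\mathrm{Pr}\{\Xbf^H\Xbf\ge\tau^\prime\}$ is formally identical to the one solved in Theorem~\ref{theo:general_outage_thm}, only with a Gaussian vector of smaller dimension and with $\tau$ replaced by $\tau^\prime$; applying that theorem verbatim then yields \eqref{eq:general_outage_known_H}. The one point that needs checking, and which I expect to be the only genuine (if minor) obstacle, is that deleting the blocks indexed by $i\in\Upsilon_k$ preserves the block-diagonal covariance structure on which Theorem~\ref{theo:general_outage_thm} rests. This holds because $\Sigmabf_k^{(m)}$ is block diagonal across the index $i$ (the errors $\{\Ebf_{ki}\}$ being independent over $i$), so removing some blocks simply leaves a smaller block-diagonal, positive-definite matrix; consequently the eigenstructure $\{\lambda_i,\kappa_i,\chi_i^{(j)}\}$, the singularity analysis, and the residue computation all carry over unchanged, with $g_{1,i}(s)$ playing the role of $g_i(s)$.
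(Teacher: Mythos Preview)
Your proposal is correct and follows essentially the same route as the paper: set $\Ebf_{ki}=\mathbf{0}$ for $i\in\Upsilon_k$, push the now-deterministic interference terms into the threshold to obtain $\tau^\prime$, reduce the Gaussian vector to dimension $(K-|\Upsilon_k|)d$, and invoke Theorem~\ref{theo:general_outage_thm} verbatim. Your explicit check that deleting blocks preserves the block-diagonal, positive-definite covariance structure is a small elaboration beyond what the paper states, but the argument is otherwise identical.
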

\vspace{0.5em}

\noindent Thus, when perfect CSI is available for some links, the
order of the distribution is reduced under the same structure.
Next, consider the specific beam design method of interference
alignment and the corresponding outage probability, which can be
obtained by Corollary \ref{corr:known_H_outage_prob} and is given
in the following corollary.

\begin{corollary} \label{corr:IA_outage}
When the desired channel link is perfectly known (i.e.
$k\in\Upsilon_k$) and $\{\Vbf_k\}$ and $\{\Ubf_k\}$ are designed
under IA based on $\hat{\mathcal{H}}$, the outage probability for
stream $m$ of user $k$ is given by
\begin{eqnarray}
{\mathrm{Pr}}\{\mbox{outage}\} &=&
-\sum_{i=1}^{\kappa^\prime}\frac{1}{\lambda_i^{\kappa_i}}e^{-\frac{\tau^\prime}{\lambda_i}}
\frac{1}{(\kappa_i-1)!}g_{1,i}^{(\kappa_i-1)}(0).
\end{eqnarray}

\begin{proof}
First, express the random term in
\eqref{eq:cor1_outage_rearranged} as
$\sum_{i\in\Upsilon_k^c}\sum_{j=1}^{d} X_{ki}^{(mj)H}
X_{ki}^{(mj)} = (\Xbf_k^{(m)})^H \Xbf_k^{(m)}$. When the beam is
designed under IA based on $\hat{\Hc}$, we have
${\mathbb{E}}\{\Xbf_k^{(m)}\} = {\mathbf{0}}$  since
$\ubf_k^{(m)H}\hat\Hbf_{ki}\vbf_i^{(j)}$ $=0$ for all $i \in
\Kc\backslash \{k\} \supset \Upsilon_k^c$, $j=1,\cdots,d$. (See
\eqref{eq:XkjMean}.) Hence, $\chibf_k^{(m)}={\mathbf 0}$ and thus
$\chi_i^{(j)}= 0$ for all $i$ and $j$. (See
\eqref{eq:theorem1chibf}.) Then, the terms in the infinite
series in \eqref{eq:general_outage_known_H} are zero  for all $n >
\kappa_i-1$ from the fact that $0^0=1$ and $0!=1$, and the result
follows.
\end{proof}
\end{corollary}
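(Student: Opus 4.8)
The plan is to build directly on Corollary \ref{corr:known_H_outage_prob}, which already handles the presence of perfectly known links. Since the desired link is perfectly known ($k\in\Upsilon_k$), Corollary \ref{corr:known_H_outage_prob} applies verbatim: the quadratic form runs only over the imperfectly known links $i\in\Upsilon_k^c$, the threshold is the reduced $\tau^\prime$ of \eqref{eq:cor1_outage_rearranged}, and the full infinite-series expression \eqref{eq:general_outage_known_H} holds. My goal is then to show that the interference-alignment conditions force the inner infinite series in each summand to collapse to its single leading term.

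The key observation is that IA makes the mean vector $\mubf_k^{(m)}=\Ebb\{\Xbf_k^{(m)}\}$ vanish. First I would identify which mean entries even survive: by \eqref{eq:XkjMean}, the remaining components of $\Xbf_k^{(m)}$ correspond to $i\in\Upsilon_k^c$ and have means $\ubf_k^{(m)H}\hat\Hbf_{ki}\vbf_i^{(j)}$. Because $k\in\Upsilon_k$ by hypothesis, we have the inclusion $\Upsilon_k^c\subseteq\Kc\backslash\{k\}$, so every such $i$ is a genuine interfering link. By the defining property of the IA beam design on $\hat{\Hc}$, the receive filter nulls all interference, i.e. $\ubf_k^{(m)H}\hat\Hbf_{ki}\vbf_i^{(j)}=0$ for every $i\neq k$ and every $j$. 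Hence all surviving mean entries vanish, $\mubf_k^{(m)}=\mathbf{0}$, and therefore $\chibf_k^{(m)}=(\Lambdabf_k^{(m)})^{-1/2}\Psibf_k^{(m)H}\mubf_k^{(m)}=\mathbf{0}$ by \eqref{eq:theorem1chibf}, so that $\chi_i^{(j)}=0$ for all $i,j$.

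With every $\sum_{j=1}^{\kappa_i}|\chi_i^{(j)}|^2=0$, I would substitute into \eqref{eq:general_outage_known_H}. The outer exponential prefactor reduces to $e^{-\tau^\prime/\lambda_i}$, and each inner summand carries the factor $\big(\sum_j|\chi_i^{(j)}|^2/\lambda_i\big)^{\,n-\kappa_i+1}=0^{\,n-\kappa_i+1}$. Adopting the conventions $0^0=1$ and $0!=1$, this factor equals $1$ precisely when $n=\kappa_i-1$ and $0$ for every $n>\kappa_i-1$, so only the leading $n=\kappa_i-1$ term of each inner series survives. Collecting the surviving terms, with $\tfrac{1}{n!}=\tfrac{1}{(\kappa_i-1)!}$ and $\tfrac{1}{(n-\kappa_i+1)!}=\tfrac{1}{0!}=1$, reproduces exactly the claimed single-sum expression.

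I expect the only genuine subtlety to be the justification of $\mubf_k^{(m)}=\mathbf{0}$: one must confirm that the IA nulling annihilates precisely the mean entries that remain after the perfectly known links are discarded, which rests on the inclusion $\Upsilon_k^c\subseteq\Kc\backslash\{k\}$ guaranteed by $k\in\Upsilon_k$. Everything afterward is a mechanical collapse of the series. As an independent cross-check that avoids relying on the $0^0$ convention, one could instead return to the contour integral \eqref{eq:thm1_integral}: setting all $|\chi|^2=0$ reduces the factor $f_i(s)$ appearing in the proof of Theorem \ref{theo:general_outage_thm} to $1/(\lambda_i s)^{\kappa_i}$, whence the residue at the order-$\kappa_i$ pole $s=-1/\lambda_i$ is simply $\tfrac{1}{(\kappa_i-1)!}g_{1,i}^{(\kappa_i-1)}(0)$ times $e^{-\tau^\prime/\lambda_i}/\lambda_i^{\kappa_i}$, recovering the same formula.
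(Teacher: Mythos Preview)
Your proposal is correct and follows essentially the same approach as the paper: invoke Corollary~\ref{corr:known_H_outage_prob}, use the IA nulling condition together with the inclusion $\Upsilon_k^c\subseteq\Kc\setminus\{k\}$ (which the paper writes as $\Kc\setminus\{k\}\supset\Upsilon_k^c$) to obtain $\mubf_k^{(m)}=\mathbf{0}$ and hence $\chibf_k^{(m)}=\mathbf{0}$, and then collapse the inner series via the $0^0=1$, $0!=1$ conventions. Your added contour-integral cross-check is not in the paper's proof but is a valid alternative route to the same conclusion.
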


\vspace{0.5em}

\noindent The outage probability for single stream
communication is given in Corollary \ref{coro:singleStream}.

\begin{corollary} \label{coro:singleStream}
When $d=1$ and all eigenvalues of $\Sigmabf_{k}^{(m)}$ are
distinct, the outage probability for user $k$ is given by
{\small
 \begin{eqnarray}\label{eq:cor_singlestream}
{\mathrm{Pr}}\{{\mbox{outage}}\}
&=&{\mathrm{Pr}}\{\log_2(1+{\mathsf{SINR}}_k) \le R_k\} \nonumber \\
&=& -\sum_{i=1}^{K} \frac{e^{-(|\chi_i|^2+\tau /\lambda_i)}
}{\lambda_i} \sum_{n=0}^{\infty}\left(\frac{1}{n!}\right)^{\!\!2}
\left(\frac{|\chi_i|^2}{\lambda_i}\right)^{\!\!n} g_{i}^{(n)}(0),
\end{eqnarray}
}
where $g_i(s)$ in \eqref{eq:theorem1gis} reduces to $g_{i}(s)=
\frac{e^{\tau s}}{s-{1}/{\lambda_i}} \cdot \frac{e^{ -\sum_{p \neq
i} \frac{\lambda_p(s-1/\lambda_i)}
        {1+\lambda_p(s-1/\lambda_i)}|\chi_p|^2} }
{\prod_{p \neq i} \Big(1+\lambda_p (s-1/\lambda_i)\Big)}$.
 (Here, we have omitted the stream superscripts since the stream
index is unique.)

\begin{proof}
Since all eigenvalues are assumed to be distinct, there are
$\kappa=K$ eigenvalues with $\kappa_i=1$ for all $i$. Substituting
these into Theorem \ref{theo:general_outage_thm} yields the
result.
\end{proof}
\end{corollary}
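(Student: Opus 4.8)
The plan is to obtain this result as a direct specialization of Theorem \ref{theo:general_outage_thm}, since the two hypotheses $d=1$ and ``all eigenvalues distinct'' are precisely the conditions under which the general series of \eqref{eq:general_outage} collapses to the claimed form. First I would observe that when $d=1$ the random vector $\Xbf_k^{(m)}$ has length $Kd=K$, so its covariance matrix $\Sigmabf_k^{(m)}$ in \eqref{eq:Xcovariance1} is $K\times K$. Combining the identity $Kd=\sum_{i=1}^{\kappa}\kappa_i$ (the footnote in Theorem \ref{theo:general_outage_thm}) with the assumption that all eigenvalues are distinct forces $\kappa=K$ and $\kappa_i=1$ for every $i$, since $K$ distinct eigenvalues of a $K\times K$ matrix must each be simple.

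Next I would substitute $\kappa_i=1$ throughout \eqref{eq:general_outage} and track the simplifications. Because each eigenvalue is simple, its eigenspace is one-dimensional, so the sum $\sum_{j=1}^{\kappa_i}|\chi_i^{(j)}|^2$ reduces to the single term $|\chi_i|^2$ (dropping the now-redundant superscript, as the statement does). The prefactor $\lambda_i^{-\kappa_i}$ becomes $\lambda_i^{-1}$; the lower limit of the inner sum shifts to $n=\kappa_i-1=0$; the factorial $1/(n-\kappa_i+1)!=1/n!$ pairs with the existing $1/n!$ to give $(1/n!)^2$; and the exponent $n-\kappa_i+1=n$ produces $(|\chi_i|^2/\lambda_i)^n$. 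These substitutions yield exactly \eqref{eq:cor_singlestream}.

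Finally I would check that $g_i(s)$ in \eqref{eq:theorem1gis} specializes correctly: with $\kappa_p=1$ for all $p\neq i$, every exponent $\kappa_p$ in the product denominator and every inner sum $\sum_{q=1}^{\kappa_p}|\chi_p^{(q)}|^2$ collapse to their single-index forms, giving precisely the $g_i(s)$ displayed in the corollary. Since this argument is a pure substitution into an already-proven formula, I do not anticipate a genuine obstacle; the only point needing care is the index bookkeeping in the inner series, namely confirming that the shift of the summation lower limit and the coalescence of the two factorial-and-power factors are handled consistently. The conventions $0^0=1$ and $0!=1$, already invoked in the proof of Corollary \ref{corr:IA_outage}, make this routine.
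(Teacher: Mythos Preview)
Your proposal is correct and follows essentially the same approach as the paper: the paper's proof simply notes that distinct eigenvalues with $d=1$ force $\kappa=K$ and $\kappa_i=1$ for all $i$, and then substitutes into Theorem~\ref{theo:general_outage_thm}. Your write-up merely spells out the index bookkeeping of that substitution in more detail than the paper does.
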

\vspace{0.5em} \noindent Now, let us consider a simpler case for
$d=1$ with no antenna correlation. In this case, the outage
probability is given as an explicit function of the channel
uncertainty level $\sigma_h^2$, and it is given by the following
corollary to Theorem \ref{theo:general_outage_thm}.

\vspace{0.5em}
\begin{corollary}\label{cor:d=1&R=I}
 When $d=1$ and there is  no antenna correlation,  the outage probability is given by
\begin{equation} \label{eq:CDF_definite_d1}
{\mathrm{Pr}}\{\mbox{outage}\} =
-\frac{1}{(\sigma_h^2)^K}e^{-(\frac{\tau}{\sigma_h^2}+\|\chibf_k\|^2)}
\sum_{n=K-1}^{\infty}\frac{1}{n!}g^{(n)}(0)\frac{1}{(n-K+1)!}
\left(\frac{\|\chibf_k\|^2}{\sigma_h^2}\right)^{\!\!n-K+1},
\end{equation}
where $\chibf_k ={\mathbb{E}}\{\Xbf_k\}/\sigma_h$ and $g(s) =
\frac{e^{\tau s}}{s-1/\sigma_h^2}$.

\begin{proof}
In this case, an outage at user $k$ occurs if and only if
$\Xbf_k^H\Xbf_k \ge
\frac{|\ubf_k^H\hat{\Hbf}_{kk}\vbf_k|^2}{2^{R_k}-1} -\sigma^2$.
Now, the covariance matrix $\Sigmabf_k$ of $\Xbf_k$ is
$\sigma_h^2\Ibf_K$ (see \eqref{eq:Xcovariance1} and
\eqref{eq:Xcovariance2}), and thus there is only one eigenvalue
$\sigma_h^2$ with multiplicity $K$. Moreover, $\chibf_k
={\mathbb{E}}\{\Xbf_k\}/\sigma_h$ from \eqref{eq:theorem1chibf}
since $\Psibf_k=\Ibf$ and $\Lambdabf_k=\sigma_h^2\Ibf$. By
substituting these into Theorem \ref{theo:general_outage_thm}, the
outage probability \eqref{eq:CDF_definite_d1} is obtained.
\end{proof}
\end{corollary}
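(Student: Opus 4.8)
The plan is to derive this as a direct specialization of Theorem~\ref{theo:general_outage_thm}: under the two hypotheses $d=1$ and zero antenna correlation, the eigenstructure of the covariance matrix collapses to a single repeated eigenvalue, and the general series \eqref{eq:general_outage} simplifies accordingly. So I would first pin down the quantities $\{\lambda_i\}$, $\{\chi_i^{(j)}\}$, and $g_i(s)$ in this setting and then substitute.

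First I would compute the covariance matrix $\Sigmabf_k$ of $\Xbf_k$. With $d=1$ each sub-block $\Sigmabf_{k,i}$ in \eqref{eq:Xcovariance2} degenerates to the scalar $\sigma_h^2(\ubf_k^H\Sigmabf_r\ubf_k)(\vbf_i^H\Sigmabf_t\vbf_i)$. Under the no-correlation assumption $\Sigmabf_r=\Sigmabf_t=\Ibf$, together with the unit-norm normalizations $\|\ubf_k\|=\|\vbf_i\|=1$ built into the system model, each block equals $\sigma_h^2$. Since $\Sigmabf_k$ is block-diagonal with $K$ identical scalar blocks (one per index $i$), this gives $\Sigmabf_k=\sigma_h^2\Ibf_K$, which is the one structural fact that does all the work.

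It then follows that there is a single distinct eigenvalue $\lambda_1=\sigma_h^2$ with multiplicity $\kappa_1=K$, so $\kappa=1$. The eigendecomposition is trivial, $\Psibf_k=\Ibf$ and $\Lambdabf_k=\sigma_h^2\Ibf$, so \eqref{eq:theorem1chibf} yields $\chibf_k=(\sigma_h^2)^{-1/2}\mubf_k={\mathbb{E}}\{\Xbf_k\}/\sigma_h$ and hence $\sum_{j=1}^{\kappa_1}|\chi_1^{(j)}|^2=\|\chibf_k\|^2$. Because no index $p\neq 1$ exists, both the product and the exponential sum over $p\neq i$ in \eqref{eq:theorem1gis} are empty, so $g_1(s)$ reduces exactly to $g(s)=e^{\tau s}/(s-1/\sigma_h^2)$.

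Substituting $\kappa=1$, $\lambda_1=\sigma_h^2$, $\kappa_1=K$, $\sum_j|\chi_1^{(j)}|^2=\|\chibf_k\|^2$, and $g_1=g$ into \eqref{eq:general_outage} produces \eqref{eq:CDF_definite_d1}. I do not expect any genuine obstacle here: the entire argument is the covariance collapsing to $\sigma_h^2\Ibf_K$, and the only point needing care is the bookkeeping that a single eigenvalue empties the $p\neq i$ factors of $g_i(s)$, leaving the clean single-eigenvalue form. As a consistency check I would verify the threshold: with $d=1$ there is no inter-stream interference, so the outage event is simply $\Xbf_k^H\Xbf_k\ge|\ubf_k^H\hat{\Hbf}_{kk}\vbf_k|^2/(2^{R_k}-1)-\sigma^2=\tau$, matching the $d=1$ instance of \eqref{eq:SINR_rearranged}.
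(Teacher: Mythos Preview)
Your proposal is correct and follows essentially the same approach as the paper's own proof: specialize Theorem~\ref{theo:general_outage_thm} by observing that $d=1$ and $\Sigmabf_t=\Sigmabf_r=\Ibf$ force $\Sigmabf_k=\sigma_h^2\Ibf_K$, hence a single eigenvalue $\sigma_h^2$ of multiplicity $K$, $\Psibf_k=\Ibf$, $\chibf_k=\mubf_k/\sigma_h$, and an empty $p\neq i$ product/sum in $g_i(s)$. Your write-up simply makes explicit a couple of bookkeeping details (the scalar sub-blocks and the vacuous $p\neq i$ factors) that the paper leaves implicit.
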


\subsection{The behavior analysis of the outage probability based on the Chernoff bound}
\label{subsec:Chernoff}


 The obtained exact expressions for the outage probability
 in the previous subsection  can easily be computed numerically, and will be used for the robust beam design based on the outage probability in
Section \ref{sec:beam_design}. Before we address the outage-based
robust beam design problem, let us investigate the behavior of the
outage probability as a function of several parameters. Suppose
that transmit and receive beam vectors $\{\vbf_k^{(m)},
\ubf_k^{(m)}\}$ are designed by  some known method
based on $\hat{\bf{\Hc}}$. For the
given beam vectors, as seen in the obtained expressions, the
outage probability is a function of other system parameters such
as the known channel mean $\{\hat{\Hbf}_{ki}\}$, the noise
variance $\sigma^2$, the channel uncertainty level $\sigma_h^2$,
the antenna correlation $\Sigmabf_t$ and $\Sigmabf_r$, and the
target rate $R_{k}^{(m)}$. Here, the dependence on
$\hat{\Hbf}_{kk}$, $\sigma^2$ and $R_k^{(m)}$ is via the threshold
$\tau(\hat{\Hbf}_{kk},\sigma^2,R_k^{(m)})$, and the dependence on
$\sigma_h^2$, $\Sigmabf_t$, $\Sigmabf_r$ and
$\{\hat{\Hbf}_{ki},i\ne k\}$ is via
$\chibf_k^{(m)}(\Sigmabf_k^{(m)}(\sigma_h^2, \Sigmabf_t,
\Sigmabf_r), \Ebb\{\Xbf_k^{(m)}\}(\hat\Hbf_{ki}))$ and the
eigenvalues of $\Sigmabf_{k,i}^{(m)}(\sigma_h^2,\Sigmabf_t,\Sigmabf_r)$. This
complicated dependence structure makes it difficult to analyze the
properties of the outage probability as a function of the system
parameters. Thus, in this subsection we  apply the Chernoff
bounding technique \cite{Poor:book} to the tractable\footnote{In
certain cases of $d >1$, Chernoff bound can still be obtained when
each element in $\Xbf_k^{(m)}$ is independent of the others. Such
cases include the case that there is no antenna correlation and
the transmit beam vectors are orthogonal as in the IA beam case.
In this case, similar results to the case of $d=1$ are obtained.}
case of $d=1$ to obtain insights into the outage probability as a function of
several important parameters. When $d=1$, the outage event is expressed
as
\begin{equation} \label{eq:ChernoffBound1}
{\mathrm{Pr}}\bigg\{\Xbf_k^H\Xbf_k\ge \tau =
\frac{|\ubf_k^H\hat{\Hbf}_{kk}\vbf_k|^2}{(2^{R_k}-1)}-\sigma^2\bigg\} =
{\mathrm{Pr}}\bigg\{\sum_{i=1}^K X_{ki}^H X_{ki}\ge\tau \bigg\}.
\end{equation}
Since $\Ebf_{k1},\cdots, \Ebf_{kK}$ are independent and
circularly-symmetric complex Gaussian random matrices,
$X_{k1},\cdots,$ $X_{kK}$ are independent and circularly-symmetric
complex Gaussian random variables. (See \eqref{eq:Xklmj}.) Thus,
the term on the LHS in the second bracket in
\eqref{eq:ChernoffBound1} is a sum of independent random
variables, and the Chernoff bound can be applied to yield
\begin{equation} \textstyle
{\mathrm{Pr}}\{\Xbf_k^H\Xbf_k\ge \tau \} \le e^{- \tau s}
\prod_{i=1}^K {\mathbb{E}}\left\{e^{s|X_{ki}|^2}\right\}
\end{equation}
for any $s>0$. The moment generating function (m.g.f.) of
$|X_{ki}|^2$ ($X_{ki} \sim \Cc\Nc(\mu_{ki},\sigma_{ki}^2)$) is
given by $ {\mathbb{E}}\{e^{s|X_{ki}|^2}\} =
\frac{1}{1-\sigma_{ki}^2s}
\exp\left(\frac{|\mu_{ki}|^2s}{1-\sigma_{ki}^2s}\right)$ for
$s<1/\sigma_{ki}^2$, where $\mu_{kk}=0$,
$\mu_{ki}=\ubf_k^H\hat{\Hbf}_{ki}\vbf_i$ for $i\ne k$, and
$\sigma_{ki}^2=$
$\sigma_h^2(\ubf_k^H\Sigmabf_r\ubf_k)(\vbf_i^H\Sigmabf_t\vbf_i)$.
(See (\ref{eq:Xklmj},\ref{eq:XkjMean},\ref{eq:Xcovariance2}).)
Therefore, the Chernoff bound on the outage probability is given
by
\begin{align}
{\mathrm{Pr}}\{\Xbf_k^H\Xbf_k\ge \tau \}
&\le e^{-\tau s}
\prod_{i=1}^K \frac{1}{1-\sigma_{ki}^2s}
\exp\left(\frac{|\mu_{ki}|^2s}{1-\sigma_{ki}^2s}\right) \nonumber\\
&= \exp\left\{ -\left[\tau s+\sum_{i=1}^K\log(1-\sigma_{ki}^2s)
+\sum_{i=1}^K\frac{|\mu_{ki}|^2s}{\sigma_{ki}^2s-1}\right]\right\}
\label{eq:upper_bound}
\end{align}
for $0< s < \min_{i}\{1/\sigma_{ki}^2\}$. Now,
\eqref{eq:upper_bound} provides a tool to analyze the behavior of
the outage probability as a function of several important
parameters. The most desired property is the behavior of the
outage probability as a function of the channel uncertainty level.
This behavior is explained in the following theorem.

\vspace{0.5em}

\begin{theorem}  \label{theo:scaling1}
When $d=1$, as $\sigma_h^2 \rightarrow 0$, the outage probability
decreases to zero, and the decay rate is given  by
\begin{equation} \label{eq:scaling1}
{\mathrm{Pr}}\{\mbox{outage}\} \le
e^{-c_1}\cdot\exp(-c_2/{\sigma_h^2})
\end{equation}
for some $c_1$ and $c_2 > 0$ not depending on $\sigma_h^2$, if the
target rate $R_k$ and the designed transmit and receive beam
vectors $\{\vbf_k, \ubf_k\}$ satisfy
\begin{equation}  \label{eq:Theorem2d=1Decaysigmah2}
R_k <
\bar{R}_k=\log_2
\left(1+\frac{|\ubf_k^H\hat\Hbf_{kk}\vbf_k|^2}
{\sum_{i=1}^K \frac{|\mu_{ki}|^2}
{1-\frac{(\ubf_k^H\Sigmabf_r\ubf_k)(\vbf_i^H\Sigmabf_t\vbf_i)}{tr(\Sigmabf_r)tr(\Sigmabf_t)}
}+\sigma^2}\right).
\end{equation}
\end{theorem}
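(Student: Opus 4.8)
The plan is to exploit the Chernoff bound \eqref{eq:upper_bound}, which already holds for every feasible $s\in(0,\min_i 1/\sigma_{ki}^2)$; it therefore suffices to exhibit a single feasible $s$ whose exponent scales like $c_2/\sigma_h^2$. Writing $\sigma_{ki}^2=\sigma_h^2\alpha_{ki}$ with $\alpha_{ki}:=(\ubf_k^H\Sigmabf_r\ubf_k)(\vbf_i^H\Sigmabf_t\vbf_i)$, the only $\sigma_h^2$-dependence in \eqref{eq:upper_bound} enters through the products $\sigma_{ki}^2 s$ and through the explicit factor $\tau s$. This suggests the scaling $s=c/\sigma_h^2$ for a constant $c>0$, which freezes $\sigma_{ki}^2 s=c\alpha_{ki}$ independently of $\sigma_h^2$ while leaving $\tau s=\tau c/\sigma_h^2$ proportional to $1/\sigma_h^2$.

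With this substitution the bracketed exponent in \eqref{eq:upper_bound} separates cleanly: the logarithmic term $\sum_i\log(1-c\alpha_{ki})$ becomes a $\sigma_h^2$-independent constant, while the remaining two terms combine into $\tfrac{c}{\sigma_h^2}\bigl(\tau-\sum_i\tfrac{|\mu_{ki}|^2}{1-c\alpha_{ki}}\bigr)$, where the sums run over $i=1,\dots,K$. Hence
\[
{\mathrm{Pr}}\{\mbox{outage}\}\le \exp\!\Big(-\!\sum_i\log(1-c\alpha_{ki})\Big)\cdot\exp\!\Big(-\tfrac{c}{\sigma_h^2}\big(\tau-\sum_i\tfrac{|\mu_{ki}|^2}{1-c\alpha_{ki}}\big)\Big),
\]
which is exactly of the form $e^{-c_1}\exp(-c_2/\sigma_h^2)$ with $c_1=\sum_i\log(1-c\alpha_{ki})$ and $c_2=c\big(\tau-\sum_i\tfrac{|\mu_{ki}|^2}{1-c\alpha_{ki}}\big)$, both independent of $\sigma_h^2$.

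It then remains to choose $c$ so that the positivity $c_2>0$ matches the stated threshold. Taking $c=1/(\mbox{tr}(\Sigmabf_r)\mbox{tr}(\Sigmabf_t))$ makes $c\alpha_{ki}=\alpha_{ki}/(\mbox{tr}(\Sigmabf_r)\mbox{tr}(\Sigmabf_t))$, i.e.\ precisely the quantity appearing in the denominators of \eqref{eq:Theorem2d=1Decaysigmah2}. Substituting $\tau=|\ubf_k^H\hat\Hbf_{kk}\vbf_k|^2/(2^{R_k}-1)-\sigma^2$ and rearranging, $c_2>0$ is seen to be equivalent to $|\ubf_k^H\hat\Hbf_{kk}\vbf_k|^2/(2^{R_k}-1)>\sigma^2+\sum_i|\mu_{ki}|^2/(1-c\alpha_{ki})$, which after solving for $R_k$ is exactly $R_k<\bar R_k$. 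Thus under the hypothesis $R_k<\bar R_k$ we obtain $c_2>0$, giving the claimed decay; note $c_1<0$, so the prefactor $e^{-c_1}>1$, which is harmless since only $c_2>0$ matters.

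The only genuine obstacle is feasibility: the chosen $s=c/\sigma_h^2$ must satisfy $s<\min_i 1/\sigma_{ki}^2$, i.e.\ $c\alpha_{ki}<1$ for every $i$. I would verify this from the unit-norm constraints $\|\ubf_k\|=\|\vbf_i\|=1$: since $\Sigmabf_r,\Sigmabf_t$ are positive-definite correlation matrices, $\ubf_k^H\Sigmabf_r\ubf_k\le\lambda_{\max}(\Sigmabf_r)<\mbox{tr}(\Sigmabf_r)$ whenever $N_r\ge 2$, and likewise for $\Sigmabf_t$, so $\alpha_{ki}<\mbox{tr}(\Sigmabf_r)\mbox{tr}(\Sigmabf_t)$ and hence $c\alpha_{ki}<1$ for all $i$ and all $\sigma_h^2>0$. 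Everything else is the bookkeeping of collecting the $1/\sigma_h^2$ terms; the conceptual step is spotting the scaling $s\propto 1/\sigma_h^2$ together with the specific constant $c$ that aligns the decay condition $c_2>0$ with the rate condition $R_k<\bar R_k$.
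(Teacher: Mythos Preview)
Your proposal is correct and follows essentially the same route as the paper: both substitute $s=1/\bigl(\sigma_h^2\,\mbox{tr}(\Sigmabf_r)\,\mbox{tr}(\Sigmabf_t)\bigr)$ into the Chernoff bound \eqref{eq:upper_bound}, split the exponent into a $\sigma_h^2$-independent constant and a term proportional to $1/\sigma_h^2$, and identify $c_2>0$ with the rate condition \eqref{eq:Theorem2d=1Decaysigmah2}. Your treatment of the feasibility constraint $c\alpha_{ki}<1$ is in fact slightly more careful than the paper's, which writes only $\sigma_{ki}^2\le\sigma_h^2\,\mbox{tr}(\Sigmabf_t)\,\mbox{tr}(\Sigmabf_r)$ without explicitly noting that strictness (needed for $s<\min_i 1/\sigma_{ki}^2$) requires $N_r\ge 2$ or $N_t\ge 2$.
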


\vspace{0.5em}
\begin{proof} \eqref{eq:upper_bound} is valid for any $s \in (0, \min_{i}\{1/\sigma_{ki}^2\})$. So,
let $s={1}/{\sigma_h^2
\mbox{tr}(\Sigmabf_t)\mbox{tr}(\Sigmabf_r)}~~ (<
\min_{i}\{1/\sigma_{ki}^2\}$ since $||\vbf_k||=||\ubf_k||=1$ and
$\sigma_{ki}^2=\sigma_h^2(\ubf_k^H\Sigmabf_r\ubf_k)(\vbf_i^H\Sigmabf_t\vbf_i)
\le \sigma_h^2\mbox{tr}(\Sigmabf_t)\mbox{tr}(\Sigmabf_r)$ for all
$i$). Then, the exponent in \eqref{eq:upper_bound} is given by
{\small
\begin{eqnarray*}
& & -\frac{\tau}{\sigma_h^2 \mbox{tr}(\Sigmabf_t)
\mbox{tr}(\Sigmabf_r)}
-\sum_{i=1}^K\log\left[1-\frac{(\ubf_k^H\Sigmabf_r\ubf_k)(\vbf_i^H\Sigmabf_t\vbf_i)}{\mbox{tr}(\Sigmabf_t)\mbox{tr}(\Sigmabf_r)}\right]
-\sum_{i=1}^K\frac{|\mu_{ki}|^2}{\sigma_h^2(\ubf_k^H\Sigmabf_r\ubf_k)(\vbf_i^H\Sigmabf_t\vbf_i)
-\sigma_h^2 \mbox{tr}(\Sigmabf_t) \mbox{tr}(\Sigmabf_r)} \\
&=& -\frac{1}{\sigma_h^2} \underbrace{
\left\{\frac{\tau}{\mbox{tr}(\Sigmabf_t)\mbox{tr}(\Sigmabf_r)}+\sum_{i=1}^K
\frac{|\mu_{ki}|^2}{(\ubf_k^H\Sigmabf_r\ubf_k)(\vbf_i^H\Sigmabf_t\vbf_i)-\mbox{tr}(\Sigmabf_t)\mbox{tr}(\Sigmabf_r)}
\right\} }_{(=: c_2)} -
\underbrace{\sum_{i=1}^K\log\left[1-\frac{(\ubf_k^H\Sigmabf_r\ubf_k)(\vbf_i^H\Sigmabf_t\vbf_i)}{\mbox{tr}(\Sigmabf_t)\mbox{tr}(\Sigmabf_r)}\right]}_{(=:c_1)}.
\end{eqnarray*}
} Now, substituting $\tau = |\ubf_k^H
\hat{\Hbf}_{kk}\vbf_k|^2/(2^{R_k}-1)-\sigma^2$ into the inequality
$c_2
>0$ yields \eqref{eq:Theorem2d=1Decaysigmah2}.
\end{proof}

\vspace{0.5em} \noindent Theorem \ref{theo:scaling1} states that
 the outage probability decays to zero as the CSI quality
improves, more precisely, it decays exponentially w.r.t. the
inverse of channel estimation MSE (or equivalently w.r.t. the
channel $K$ factor), if the target rate is below $\bar{R}_k$. In
the Fisherian inference framework, the inverse of estimation MSE
is information. Thus, another way we can view the above is that
{\em the outage probability decays exponentially as the Fisher
information for channel state increases, if the target rate is
below a certain value.} So, the outage probability due to channel
uncertainty is another case in which information is the error
exponent as in many other inference problems.  In certain cases,
the condition \eqref{eq:Theorem2d=1Decaysigmah2} can be simplified
considerably. For example, when interference-aligning beam vectors
based on $\hat{\Hc}$ are used at the transmitters and receivers,
we have $\mu_{ki} = \ubf_k^H \hat{\Hbf}_{ki} \vbf_i=0$ for $i\ne
k$ in addition to $\mu_{kk}=0$, and the condition is simplified to
$R_k <
\log_2\left(1+\frac{|\ubf_k^H\hat{\Hbf}_{kk}\vbf_k|^2}{\sigma^2}
\right)$.  Thus, in the case of interference alignment the outage
probability can be made arbitrarily small by improving the CSI
quality if the target rate is strictly less than the rate obtained
by using $\hat{\Hbf}_{kk}$ as the nominal channel. Next, consider
the outage behavior as the effective SNR, $\Gamma_{eff}:=|\ubf_k^H
\hat{\Hbf}_{kk} \vbf_k|^2/\sigma^2$, increases. Since the two
terms determining the effective SNR are contained only in $\tau$,
it is straightforward to see from \eqref{eq:upper_bound} that
\begin{equation}
\mbox{Pr}\{\mbox{outage}\} \le c_3 \exp\left( -c_4 \Gamma_{eff}
\right),
\end{equation}
for some $c_3$ and $c_4=s\sigma^2/(2^{R_k}-1) >0$ not depending on
$\Gamma_{eff}$.  Finally, consider the case in which the target rate
$R_k$ decreases. One can  expect that the outage probability
decays to zero if the target rate decreases to zero. The decaying
behavior in this case is given in the following theorem.

\vspace{0.5em}
\begin{theorem}
When $d=1$, as $R_k \rightarrow 0$, the outage probability
decreases to zero, and the decay rate is given  by
\begin{equation}
{\mathrm{Pr}}\{\mbox{outage}\} \le c_6
\exp\left(-\frac{c_7}{2^{R_k}-1} \right)=c_6
\exp\left(-\frac{c_7^\prime}{R_k+o(R_k)} \right)
\end{equation}
for some $c_7, ~c_7^\prime >0$ not depending on $R_k$. The last
equality is when $R_k$ is near zero.
\end{theorem}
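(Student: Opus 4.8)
The plan is to reuse the Chernoff bound \eqref{eq:upper_bound} already established for the $d=1$ case, exploiting the observation that the \emph{only} quantity in that bound depending on the target rate $R_k$ is the threshold $\tau = |\ubf_k^H\hat{\Hbf}_{kk}\vbf_k|^2/(2^{R_k}-1)-\sigma^2$. The means $\mu_{ki}$ and variances $\sigma_{ki}^2$ entering \eqref{eq:upper_bound} are fixed by the beam vectors, the noise and channel-uncertainty levels, and the antenna correlations, none of which involve $R_k$. Hence the entire exponent in \eqref{eq:upper_bound} is an affine function of $\tau$ plus a remainder that does not depend on $R_k$, and the dependence of the bound on $R_k$ is confined to the single term $s\,|\ubf_k^H\hat{\Hbf}_{kk}\vbf_k|^2/(2^{R_k}-1)$.

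Concretely, I would first fix one admissible value $s=s_0\in(0,\min_i\{1/\sigma_{ki}^2\})$; since the $\sigma_{ki}^2$ do not depend on $R_k$, this choice is legitimate uniformly in $R_k$. Substituting $s_0$ into \eqref{eq:upper_bound}, using $\tau s_0 = s_0|\ubf_k^H\hat{\Hbf}_{kk}\vbf_k|^2/(2^{R_k}-1)-s_0\sigma^2$, and collecting the $R_k$-free terms into the constant $A := \sum_{i=1}^K\log(1-\sigma_{ki}^2 s_0)+\sum_{i=1}^K |\mu_{ki}|^2 s_0/(\sigma_{ki}^2 s_0-1)$ gives
\begin{equation*}
{\mathrm{Pr}}\{\mbox{outage}\} \le e^{-A}\,e^{s_0\sigma^2}\exp\!\left(-\frac{s_0\,|\ubf_k^H\hat{\Hbf}_{kk}\vbf_k|^2}{2^{R_k}-1}\right).
\end{equation*}
Setting $c_6 := e^{-A+s_0\sigma^2}$ and $c_7 := s_0\,|\ubf_k^H\hat{\Hbf}_{kk}\vbf_k|^2>0$ yields the first claimed inequality. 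Because $2^{R_k}-1\to 0^+$ as $R_k\to 0$, the argument of the exponential tends to $-\infty$, so the bound — and therefore the outage probability — decays to zero, establishing the decaying behavior asserted by the theorem.

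For the asymptotic form in the second equality, I would Taylor-expand $2^{R_k}-1 = e^{R_k\ln 2}-1 = (\ln 2)\,R_k + o(R_k)$ as $R_k\to 0$, so that $c_7/(2^{R_k}-1) = c_7'/(R_k+o(R_k))$ with $c_7' := c_7/\ln 2>0$. I do not expect a genuine obstacle, since the statement is essentially a corollary of \eqref{eq:upper_bound}; the only point meriting a word of care is that \emph{any} fixed $s_0$ in the admissible interval works, which is precisely what allows $c_6,c_7$ to be chosen independently of $R_k$. One could optimize over $s_0$ to sharpen $c_7$, but as the theorem asserts only the existence of positive constants, a single fixed $s_0$ is enough.
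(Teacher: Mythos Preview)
Your proposal is correct and follows essentially the same route as the paper: fix an admissible $s_0$ independent of $R_k$, isolate the $R_k$-dependent part of the exponent in \eqref{eq:upper_bound} as $-s_0|\ubf_k^H\hat{\Hbf}_{kk}\vbf_k|^2/(2^{R_k}-1)$, absorb the remaining terms into constants, and then Taylor-expand $2^{R_k}-1=(\ln 2)R_k+o(R_k)$. Your write-up is in fact slightly more explicit than the paper's about why the constants can be chosen independently of $R_k$.
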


\begin{proof}
Let $s$ be any positive constant contained in an interval $(0, ~1/\max_i
\{\sigma_h^2(\ubf_k^H\Sigmabf_r\ubf_k)(\vbf_i^H\Sigmabf_t\vbf_i)\})$.
Then,  the exponent in \eqref{eq:upper_bound} becomes
{\small
\begin{eqnarray*}
& & -\tau s- \underbrace{
\sum_{i=1}^K\log[1-\sigma_h^2(\ubf_k^H\Sigmabf_r\ubf_k)(\vbf_i^H\Sigmabf_t\vbf_i)s]
-\sum_{i=1}^K\frac{|\mu_{ki}|^2s}{s\sigma_h^2(\ubf_k^H\Sigmabf_r\ubf_k)(\vbf_i^H\Sigmabf_t\vbf_i)-1}}_{(=:c_5)} \\
&=&
-\left(\frac{|\ubf_k^H\hat\Hbf_{kk}\vbf_k|^2}{2^{R_k}-1}-\sigma^2\right)s
- c_5 = -\frac{|\ubf_k^H\hat\Hbf_{kk}\vbf_k|^2}{2^{R_k}-1}s -
c_5^\prime.
\end{eqnarray*}
} Hence, the Chernoff bound is given by
${\mathrm{Pr}}\{\mbox{outage}\} \le c_6
\exp\left(-\frac{s|\ubf_k^H\hat{\Hbf}_{kk}\vbf_k|^2}{2^{R_k}-1}
\right)=c_6 \exp\left(-\frac{c_7^\prime}{R_k+o(R_k)} \right)$ for
some $c_7^\prime >0$. The last equality is when $R_k$ is near
zero. In this case, we have $2^{R_k}-1=(\log 2)R_k+o(R_k)$ by
Taylor's expansion.
\end{proof}

\section{Outage-Based Robust Beam Design}
\label{sec:beam_design}

  In this section, we propose an outage-based  beam design algorithm
based on  the closed-form expressions for the outage probability
derived in the previous section. Our assumption is that
$\hat{\Hc}$ is given for the beam design, as mentioned earlier.
Suppose that transmit and receive beamforming matrices $\{\Vbf_k,
\Ubf_k\}$ are designed by using any available beam design method
based on $\hat{\Hc}$. Based on the designed $\{\Vbf_k, \Ubf_k\}$
and known $\{\hat{\Hc}, \sigma^2\}$, one can compute and use a
nominal rate for transmission. Since $\hat{\Hc}$ is not perfect,
however, an outage may occur depending on the CSI error if the
nominal rate is used for transmission. Of course, the outage
probability can be made small by making the transmission rate low
or by improving the CSI quality, as seen in Section
\ref{subsec:Chernoff}. However, these methods are inefficient
sometimes since we may have limitations in the CSI quality or need
as high rate as possible for given $\hat{\Hc}$. Further, in many
wireless systems the target outage probability for transmission is
determined and the data transmission is performed under such an
outage constraint. Thus,  we here consider the beam design
problem when the outage probability is given as a system
parameter. In particular, we consider the following
per-stream based beam design problem  to maximize the
 sum $\epsilon$-outage rate for given $\hat{\Hc}$:
\begin{eqnarray} %
&\mathop{\mbox{maximize}}\limits_{ \{\vbf_k^{(m)}\},
{\{\ubf_k^{(m)}\}} } &
         \sum_{k=1}^{K}\sum_{m=1}^{d} R_k^{(m)}  \label{eq:weighted_sum}\\
&\mbox{subject to}& {\mathrm{Pr}}\{\log_2(1 +
{\mathrm{SINR}}_k^{(m)} \big|_{\hat{\bf{\mathcal{H}}}}) \le
R_k^{(m)} \} \le \epsilon
\label{eq:outage_const}\\
& & \|\ubf_k^{(m)}\|=\|\vbf_k^{(m)}\|=1, \quad \forall k \in
{\mathcal{K}},\  m=1,\cdots,d, \label{eq:beam_const}
\end{eqnarray}
where the $\epsilon$-outage rate for stream $m$ of user $k$ is the
maximum rate satisfying \eqref{eq:outage_const}. Like other beam
design problems in MIMO interference channels, the simultaneous
joint optimal design for all transmit and receive beam vectors for
this problem also seems difficult. Hence, we propose an iterative
approach to the above sum $\epsilon$-outage rate
maximization problem. The proposed method is explained as follows.
In the first step, we initialize $\{\vbf_k^{(m)}\}$ and
$\{\ubf_k^{(m)}\}$ properly (here a known beam design algorithm
for the MIMO interference channel can be used), and then find
optimal rate-tuple
$(R_1^{(1)},\cdots,R_1^{(d)},R_2^{(1)},\cdots,R_K^{(d)})$ that
maximizes the sum for given $\{\vbf_k^{(m)},
\ubf_k^{(m)}\}$ under  the outage constraint. This step is
performed based on the derived outage probability expressions in
the previous section. Since designing each $R_k^{(m)}$ does not
affect others, this step can be done separately for each
$R_k^{(m)}$. Since the outage probability for stream $m$ of user
$k$ increases monotonically w.r.t. $R_k^{(m)}$, the optimal
$R_k^{(m)}$ in this step is the rate with the outage probability
$\epsilon$. In the second step, for the obtained rate-tuple and
receive beam vectors $\{\ubf_k^{(m)}\}$ in the first step, we
update the transmit beam vectors $\{\vbf_k^{(m)}\}$ to  minimize
the maximum of the outage probabilities of all streams and all
users. (Since the outage probabilities of all streams of all users
are $\epsilon$ at the end of the first step, this means that the
outage probability
 decreases for all streams and all users.)  Here, we apply the alternating minimization technique \cite{Csiszar&Shields:book} to circumvent the difficulty in the joint transmit beam design. (The change in one transmit beam
vector affects the outage probabilities of other users.) That is,
we optimize one transmit beam vector while fixing all the others
at a time. We iterate this procedure from the first stream of
transmitter $1$ to the last stream of user $K$ until this step
converge. In the third step, we design the receive beam vector
$\ubf_k^{(m)}$ to minimize the outage probability at stream $m$ of
user $k$ with the rate-tuple determined in the first step and
$\{\vbf_k^{(m)}\}$ determined in the second step for each $(k,m)$.
This optimization can also be performed separately for each stream
of each user since the receiver filter for one stream does not
affect the performance of other streams. Finally, we go back to
the first step with the updated transmit and receive beam vectors
(in the revisited first step, the rate for each stream will be
increased by increasing the outage probability upto to $\epsilon$
again), and iterate the procedure until the sum $\epsilon$-outage
rate does not change. We have summarized the sum outage rate maximizing beam design algorithm in Table \ref{table:algorithm}.

\begin{table}[!ht]
\centering
\begin{tabular}{p{400pt}}
\hline \normalsize{ \bfseries The Proposed Algorithm} \\
\hline \vspace{-0.4em}
\begin{enumerate}
\small{ \item[]\noindent Input: channel state estimate
$\hat{\bf\mathcal{H}}$
and allowed outage probability $\epsilon$. %

\item[0.] Initialize $\{\vbf_k^{(m)}\}$ and  $\{\ubf_k^{(m)}\}$ as
sets of unit-norm vectors  properly.

\item[1.] For given $\{\Vbf_k\}$ and $\{\Ubf_k\}$, find
$(R_1^{(1)},\cdots,R_K^{(d)})$ that maximizes $\sum_{k=1}^{K}\sum_{m=1}^{d} R_k^{(m)}$ while the outage constraint is satisfied.  %

\item[2.] Update $\{\Vbf_k = [
\vbf_k^{(1)},\cdots,\vbf_k^{(d)}]$\} for $\{R_k^{(m)}\}$ and
$\{\Ubf_k^{(m)}\}$ given from step 1.

\textbullet For pair $(i,j)$, fix $\{\vbf_k^{(m)},k=1,\cdots,K,
~m=1,\cdots,d\}\backslash \{\vbf_i^{(j)}\}$ and $\{\Ubf_k\}$ and
solve
\begin{equation} \label{eq:TablePropAlog}
\vbf_i^{(j)} = \mathop{\arg\min}_{\vbf \in {\mathbb{C}}^{N_t}}
\max_{k,m} {\mathrm{Pr}}\{\mbox{outage}_k^{(m)}\}.
\end{equation}
(Here, a commercial tool such as the matlab fminimax function
can be used to solve \eqref{eq:TablePropAlog} together with the derived outage expression.)

\textbullet Iterate the above step from the first stream of
transmitter $1$ to the last stream of transmitter $K$ until
$\{\Vbf_1,\cdots,\Vbf_K\}$ converges. %

\item[3.] For receiver $1$ to $K$, obtain the receive filter
$\ubf_k^{(m)}$ that minimize the outage probability of stream $m$
of receiver $k$ for given $\{\Vbf_k\}$ from step 2 and given
$R_k^{(m)}$ from step 1. (Here, again a commercial tool such
as the matlab fmincon function can be used together with the
derived outage expression.)%

\item[4.] Go to step 1 and repeat the whole procedure until the
algorithm converges. } \vspace{-1.5em}
\end{enumerate}
\\
\hline
\end{tabular}
\caption{The proposed algorithm for sum $\epsilon$-outage rate
 maximization with channel uncertainty  }
\label{table:algorithm}
\end{table}

\begin{theorem}
The proposed beam design algorithm converges.
\end{theorem}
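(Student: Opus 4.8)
The plan is to exhibit the total rate $\sum_{k=1}^K\sum_{m=1}^d R_k^{(m)}$, evaluated at the end of each complete outer iteration (that is, after Step 1), as a monotonically non-decreasing sequence that is bounded above, and then to invoke the monotone convergence theorem. Accordingly, I would read the statement ``the algorithm converges'' as convergence of the sum $\epsilon$-outage rate, i.e. the objective value of \eqref{eq:weighted_sum}, rather than convergence of the individual beam vectors, which alternating-minimization schemes need not guarantee.

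First I would establish monotonicity across one outer iteration. At the end of Step 1, each stream carries the largest rate meeting the outage constraint \eqref{eq:outage_const}; since $\Pr\{\mbox{outage}_k^{(m)}\}$ is continuous and non-decreasing in $R_k^{(m)}$ (the threshold $\tau$ in \eqref{eq:SINR_rearranged} is decreasing in $R_k^{(m)}$, and the expression of Theorem \ref{theo:general_outage_thm} is the complementary CDF of $\Xbf_k^{(m)H}\Xbf_k^{(m)}$ evaluated at $\tau$), this largest rate makes the outage exactly $\epsilon$ for every $(k,m)$, as already noted in the text. Steps 2 and 3 then hold all rates fixed and update only beams: Step 2 performs an alternating minimization of $\max_{k,m}\Pr\{\mbox{outage}_k^{(m)}\}$, which starts at $\epsilon$ and can only decrease, so afterwards every individual outage is at most $\epsilon$; Step 3 further minimizes each per-stream outage separately and hence cannot raise any of them. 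Consequently, upon returning to Step 1 the current rates already satisfy the outage constraint with slack, so re-maximizing each rate up to outage $\epsilon$ yields rates no smaller than before. The sum rate therefore does not decrease from one outer iteration to the next.

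Next I would establish boundedness. Because $\epsilon<1$ and $\Xbf_k^{(m)H}\Xbf_k^{(m)}\ge 0$ almost surely, a rate with outage at most $\epsilon$ forces the threshold $\tau=|\ubf_k^{(m)H}\hat\Hbf_{kk}\vbf_k^{(m)}|^2/(2^{R_k^{(m)}}-1)-\sigma^2$ to be non-negative; otherwise $\Pr\{\Xbf_k^{(m)H}\Xbf_k^{(m)}\ge\tau\}=1>\epsilon$. Hence $R_k^{(m)}\le\log_2\!\left(1+|\ubf_k^{(m)H}\hat\Hbf_{kk}\vbf_k^{(m)}|^2/\sigma^2\right)\le\log_2\!\left(1+\|\hat\Hbf_{kk}\|_F^2/\sigma^2\right)$, using $\|\ubf_k^{(m)}\|=\|\vbf_k^{(m)}\|=1$. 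Summing over $(k,m)$ gives a finite upper bound independent of the iteration index, and a non-decreasing sequence bounded above converges, which proves the claim.

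The main obstacle I expect is the internal consistency of Step 2 rather than the outer monotonicity. I would need the inner alternating minimization over the transmit beams to be well defined and itself convergent: since each inner update minimizes $\max_{k,m}\Pr\{\mbox{outage}_k^{(m)}\}$ over a single unit-norm $\vbf_i^{(j)}$ with all other vectors fixed, the max-outage is non-increasing at every inner step and bounded below by $0$, so it converges by the same monotone argument, which is exactly the alternating-minimization principle of \cite{Csiszar&Shields:book}. The one point requiring care is that minimizing the \emph{maximum} outage in Step 2 indeed leaves \emph{every} individual outage below $\epsilon$; this holds because the max starts at $\epsilon$ and never increases, so the bound on the max transfers to each coordinate, which is precisely what the return to Step 1 needs in order to guarantee that no rate is forced down.
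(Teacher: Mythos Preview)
Your argument is correct and follows essentially the same approach as the paper: the paper's proof simply observes that the sum $\epsilon$-outage rate is monotonically non-decreasing across iterations and is bounded above by the rate with perfect CSI, then invokes the monotone convergence theorem for real sequences. Your version fleshes out precisely the details the paper leaves implicit (why Steps 2 and 3 cannot raise any outage above $\epsilon$, and an explicit upper bound via $\tau\ge 0$), but the structure is the same.
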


\begin{proof} It is straightforward to see that the sum
$\epsilon$-outage rate increases monotonically for each iteration
of the three steps of the proposed algorithm. Also, the maximum
sum rate is bounded by the rate with perfect CSI. Hence, the algorithm converges by the monotone convergence theorem for real sequences.
\end{proof}

\section{Numerical Results}
\label{sec:numerical}

In this section, we provide some numerical results to validate our
series derivation,  to examine the outage probability as a
function of several system parameters and to evaluate the
performance of the proposed beam design algorithm.
 For given $\Sigmabf_t$, $\Sigmabf_r$, $K_{ch}^{(ki)}$ and $\Gamma^{(k)}$, we first generated $\{\hat{\Hbf}_{ki}\}$ randomly according to zero-mean
Gaussian distribution, and then scaled $\hat{\Hbf}_{ki}$ to yield
$\|\hat\Hbf_{ki}\|_F^2=N_tN_r$ for all $(k,i)$.  In this way, the
channel $K$ factor and the SNR were simply controlled by
$\sigma_h^2$ and $\sigma^2$, respectively. After
$\{\hat{\Hbf}_{ki}\}$ were generated as such, we generated
$\{\Ebf_{ki}\}$ according to \eqref{eq:CSIerrormodel2} and the
true channel was determined by \eqref{eq:CSIerrormodel1} if
necessary\footnote{The computation of the closed-form outage
probability requires only the channel statistics and
$\{\hat{\Hbf}_{ki}\}$ regarding the channel information, but for
Monte Carlo runs we need to generate $\{\Ebf_{ki}\}$.}. For
simplicity, we used $K_{ch}^{(ki)}=K_{ch}$ for all $(k,i)$ and
$\Gamma^{(k)}=\Gamma$ for all $k$.


\begin{figure}[t]
\centering \scalefig{0.6}\epsfbox{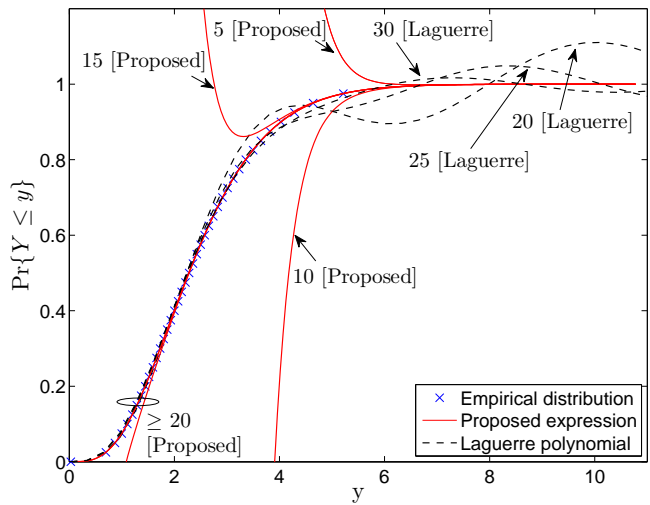}
\caption{Comparison of two series expressions for the CDF of
quadratic form of Gaussian random variables.
$\Xbf\sim{\mathcal{CN}}([0.5,0.5,0.5, 0.5]^T, 0.3\Ibf_4)$,
$\bar{\mathbf{Q}}=[1, 0.5, 0, 0; 0.5, 1, 0, 0; 0, 0, 1, 0; 0, 0,
0, 1]$, and $\beta=2$ for Laguerre series expansion.}
\label{fig:cdf_comparison}
\end{figure}

First, Fig. \ref{fig:cdf_comparison} compares the convergence
behavior of the derived series in this paper with that of the
series fitting method
\cite{Kotz:67AMS-1,Kotz:67AMS-2,Mathai&Provost:book,Nabar:05WC}
based on the Laguerre basis functions for a given set of
parameters shown in the label of the figure. It is seen that
indeed our series converges from the upper tail first whereas the
series fitting method converges from the lower tail first. (For a
proof of this in the identity covariance matrix case, please refer
to Appendix \ref{append:convergence}.) Note that the series fitting
method yields large error at the upper tail distribution even with
a reasonably large number of terms.  With this verification, next
consider  the outage behavior as a function of several system
parameters.
\begin{figure}[H]
\centering
\scalefig{0.6}\epsfbox{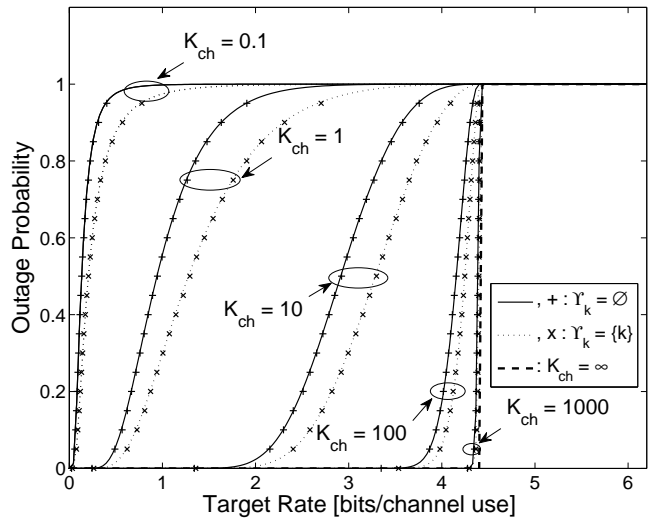}
\caption{Outage probability versus the target rate $R_k$ ($K=3$,
$N_t=N_r=2d=2$, $\Sigmabf_t=\Sigmabf_r=\Ibf$, $\Gamma = 15$ dB.
Transmit and receive beam vectors are obtained by the IIA
algorithm in \cite{Gomadam&Jafar:11IT}.)}
\label{fig:outage_prob_rate}
\end{figure}
Fig. \ref{fig:outage_prob_rate} shows the outage
probability w.r.t. the target rate $R_k$ for a given set
$\{\hat{\Hbf}_{ki}\}$ (randomly generated as above) with several
different channel $K$ factors, when $K=3, N_t=N_r=2d=2$,
$\Sigmabf_t=\Sigmabf_r=\Ibf$, $\Gamma=15$ dB and the transmit and
receive beam vectors were designed by the iterative interference
alignment (IIA) algorithm \cite{Gomadam&Jafar:11IT}. The solid and
dotted lines represent the result of our analysis, and the markers
$+$ and $\times$ indicate the result of Monte Carlo runs for the
outage probability. {The theoretical outage curves in Fig.
\ref{fig:outage_prob_rate} were obtained by using
\eqref{eq:general_outage_known_H} with the first 38 terms in the
infinite series.} It is seen that our analysis  matches the result
of Monte Carlo runs very well. The dashed line shows the outage
performance  when $K_{ch}=\infty$, i.e., all transmitters and
receivers have perfect CSI. In the case of $K_{ch}=\infty$, we
have a sharp transition behavior across $R_{limit}$ determined by
the SINR \eqref{eq:SINR} with $\Ebf_{ki}={\mathbf{0}}$ for all
$(k, i)$. It is seen that the outage performance deteriorates from
the ideal step curve of $K_{ch}=\infty$, as the CSI quality
degrades. The solid lines correspond to the outage performance for
the finite values of $K_{ch}$, when the CSI for all channel links
is imperfect. It is seen that $K_{ch}= 100$ ~(20 dB) yields
reasonable outage performance compared with the perfect CSI case
in this setup. Note that the gain in the outage probability by
knowing the desired link perfectly is not negligible. (See the
dotted lines.) Fig. \ref{fig:outage_prob_rated=2} show the outage
probability w.r.t. the target rate $R_k$ for a given set
$\{\hat{\Hbf}_{ki}\}$ with several different $K_{ch}$, when $K=3,
N_t=N_r=2d=4$, $\Sigmabf_t=\Sigmabf_r=\Ibf$, $\Gamma=25$ dB and
the transmit and receive beam vectors were designed by the IIA
algorithm. Similar behavior is seen as in the single stream case,
i.e., the outage performance generally deteriorates as $K_{ch}$
decreases. However, it is interesting to observe in the multiple
stream case that sufficiently good but not perfect CSI quality
yields better outage performance than the perfect CSI in the high
outage probability regime. (See Fig. \ref{fig:outage_prob_rated=2}
(b).) This implies that  in the multiple stream case  the second
term (i.e., the self inter-stream interference term) in the
denominator of the SINR formula \eqref{eq:SINR} is made smaller by
$\Ebf_{kk}$'s being negatively aligned with $\Hbf_{kk}$ than in
the case of $\Ebf_{kk}\equiv 0$. However, this is not useful in
system operation since the system is operated in the low outage
probability regime. {All the theoretical curves in Figures
\ref{fig:outage_prob_rated=2} (a) and (b) were obtained by
\eqref{eq:general_outage_known_H} with the first 45 terms in the
infinite series.} { Fig. \ref{fig:outage_prob_svd} shows the
outage probability curves  when the transmit and receive
beamforming vectors are respectively chosen  as the right and left
singular vectors corresponding to the largest singular value of
the desired channel and the other parameters are identical to the
case in Fig. \ref{fig:outage_prob_rate}. A similar outage
probability behavior to the previous case is observed. }
\begin{figure}[htbp]
\centerline{ \SetLabels
\L(0.23*-0.1) (a)\ CDF \\
\L(0.68*-0.1) (b)\ Residual error \\
\endSetLabels
\leavevmode
\strut\AffixLabels{
\scalefig{0.52}\epsfbox{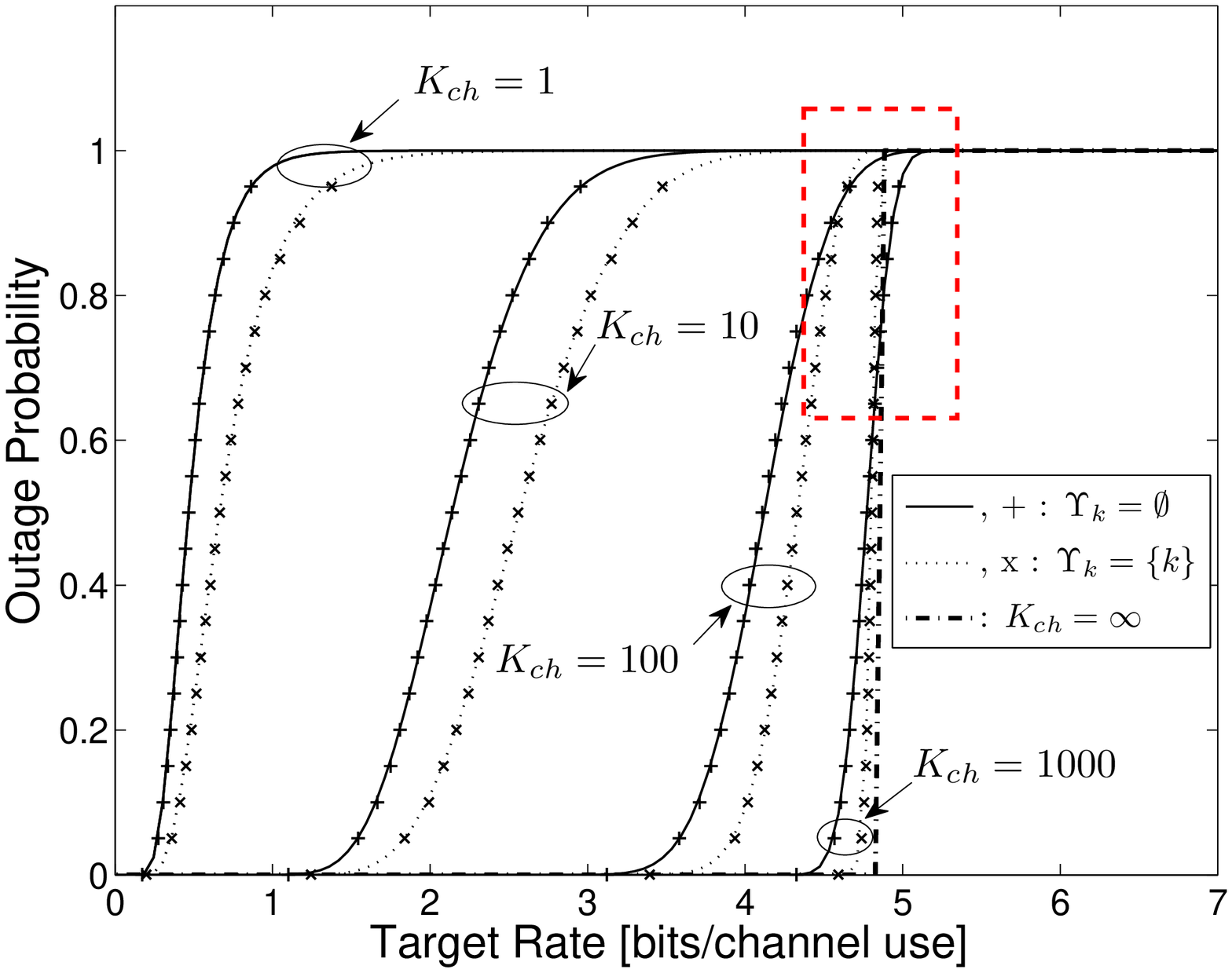}
\scalefig{0.52}\epsfbox{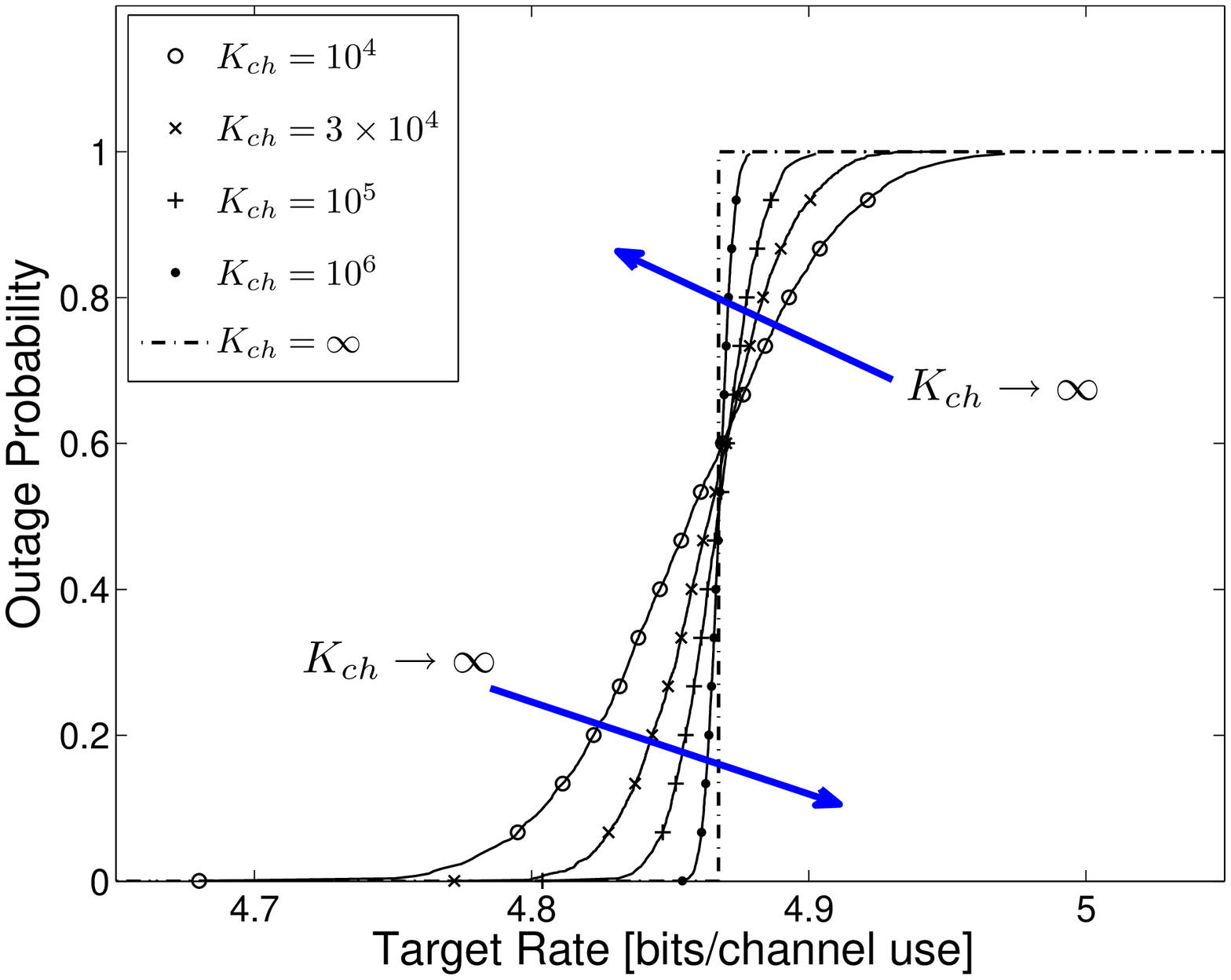} } }
\vspace{1em} \caption{Outage probability versus the target rate $R_k$ ($K=3$,
$N_t=N_r=2d=4$, $\Sigmabf_t=\Sigmabf_r=\Ibf$, $\Gamma = 25$ dB.
Transmit and receive beam vectors are designed by the IIA
algorithm in \cite{Gomadam&Jafar:11IT}.)} \label{fig:outage_prob_rated=2}
\end{figure}

Next, the outage probability w.r.t. the channel $K$ factor for a
given set $\{\hat{\Hbf}\}$ for several values of the target rate
$R_k$ is shown in Fig. \ref{fig:outage_prob_rician}, where the
outage probability along the $y$-axis is drawn in log scale. (The
same setup as for Fig. \ref{fig:outage_prob_rate} was used
and the IIA algorithm is used for the transmit and receive
beam design.  Here, \eqref{eq:general_outage_known_H} with the
first 38 terms in the infinite series was used to compute the
analytic curves.) As predicted by Theorem \ref{theo:scaling1},
the outage probability indeed decays exponentially w.r.t. the
channel $K$ factor (equivalently, w.r.t. the inverse of
$\sigma_h^2$). The exponent depends on the target rate $R_k$; the
higher the target rate is, the smaller the exponent is. This
decaying behavior is also predicted in Theorem
\ref{theo:scaling1}; the exponent $c_2$ in \eqref{eq:scaling1} is
proportional to $\tau$, and $\tau$ is inversely proportional to
the target rate $R_k$. It is seen that the outage probability does
not decay as $K_{ch}$ increases, if $R_k$ is larger than
$R_{limit}$. In addition to the exact outage probability, the
Chernoff bound in this case is shown in Fig.
\ref{fig:outage_prob_rician} as the lines with dots and dashes. It
is seen that the Chernoff bound is not very tight but the decaying
slope is the same as that of the exact outage probability.
\begin{figure}[H]
\centerline{
\scalefig{0.6}\epsfbox{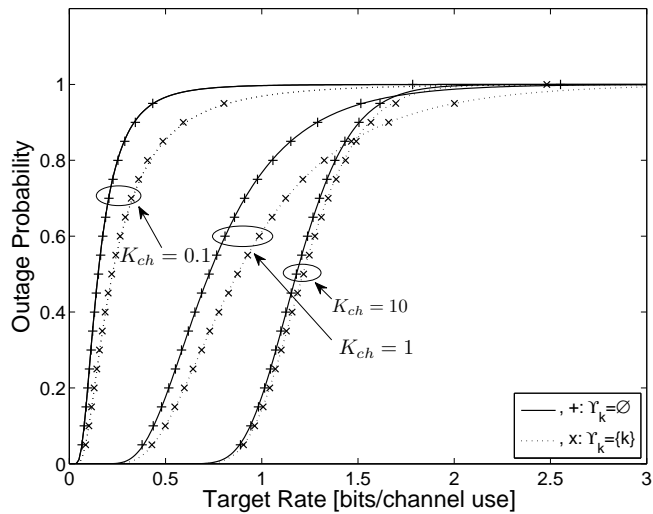}
} \caption{Outage probability versus  the target rate $R_k$
($K=3$, $N_t=N_r=2d=2$, $\Sigmabf_t=\Sigmabf_r=\Ibf$, $\Gamma=15$
dB. Transmit and receive beam vectors are respectively chosen as
the right and left singular vectors corresponding to the largest
singular value of the desired channel matrix.)}
\label{fig:outage_prob_svd}
\end{figure}
Figures  \ref{fig:outage_prob_correlation1} and
\ref{fig:outage_prob_correlation2} show the impact of antenna
correlation on the outage probability. We adopted the exponential
antenna correlation profile considered in
\cite{Belmega&Lasaulce&Debbah:09WCOM, Hjorungnes&Gesbert:07WCOM}.
Under this model, the $(i,j)$-th element of the antenna
correlation matrix $\Sigmabf_t$ (or $\Sigmabf_r$) in
\eqref{eq:CSIerrormodel2} is given by $\rho^{|i-j|}$, where $\rho
\in [0, ~1]$ is a parameter determining the correlation strength.
Since $\mbox{tr}(\Sigmabf_t)=N_t$ and $\mbox{tr}(\Sigmabf_r)=N_r$
for this exponential antenna correlation model, we have the same
transmit and receive powers as in the case of no antenna
correlation, i.e., $\Sigmabf_t=\Ibf$ and $\Sigmabf_r = \Ibf$.
Since the outage probability depends on $\{\hat{\Hbf}_{ki}\}$ as
well as on $\Sigmabf_t$ and $\Sigmabf_r$, we generated one hundred
$\{\hat{\Hbf}_{ki}\}$ randomly in the way that we explained
already, and averaged the corresponding 100 outage probabilities
to see the impact of the error correlation only. Other aspects of
the system configuration  were the same as those for Figures
\ref{fig:outage_prob_rate} and \ref{fig:outage_prob_rician}. It is
seen that the error correlation decreases the outage probability
especially when the CSI quality is very bad, but the gain becomes
negligible when the CSI quality is good.

\begin{figure}[H]
\centering
\scalefig{0.6}\epsfbox{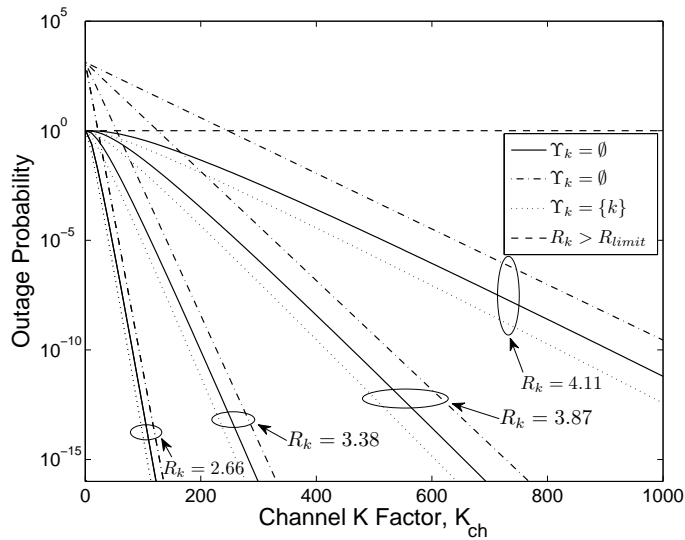}
\caption{Outage probability versus $K_{ch}$ ($K=3$,
$N_t=N_r=2d=2$, $\Sigmabf_t=\Sigmabf_r=\Ibf$, $\Gamma = 15$ dB.
Transmit and receive beam vectors are designed by the IIA
algorithm in \cite{Gomadam&Jafar:11IT}.)}
\label{fig:outage_prob_rician}
\end{figure}

\begin{figure}[H]
\centering
\scalefig{0.6}\epsfbox{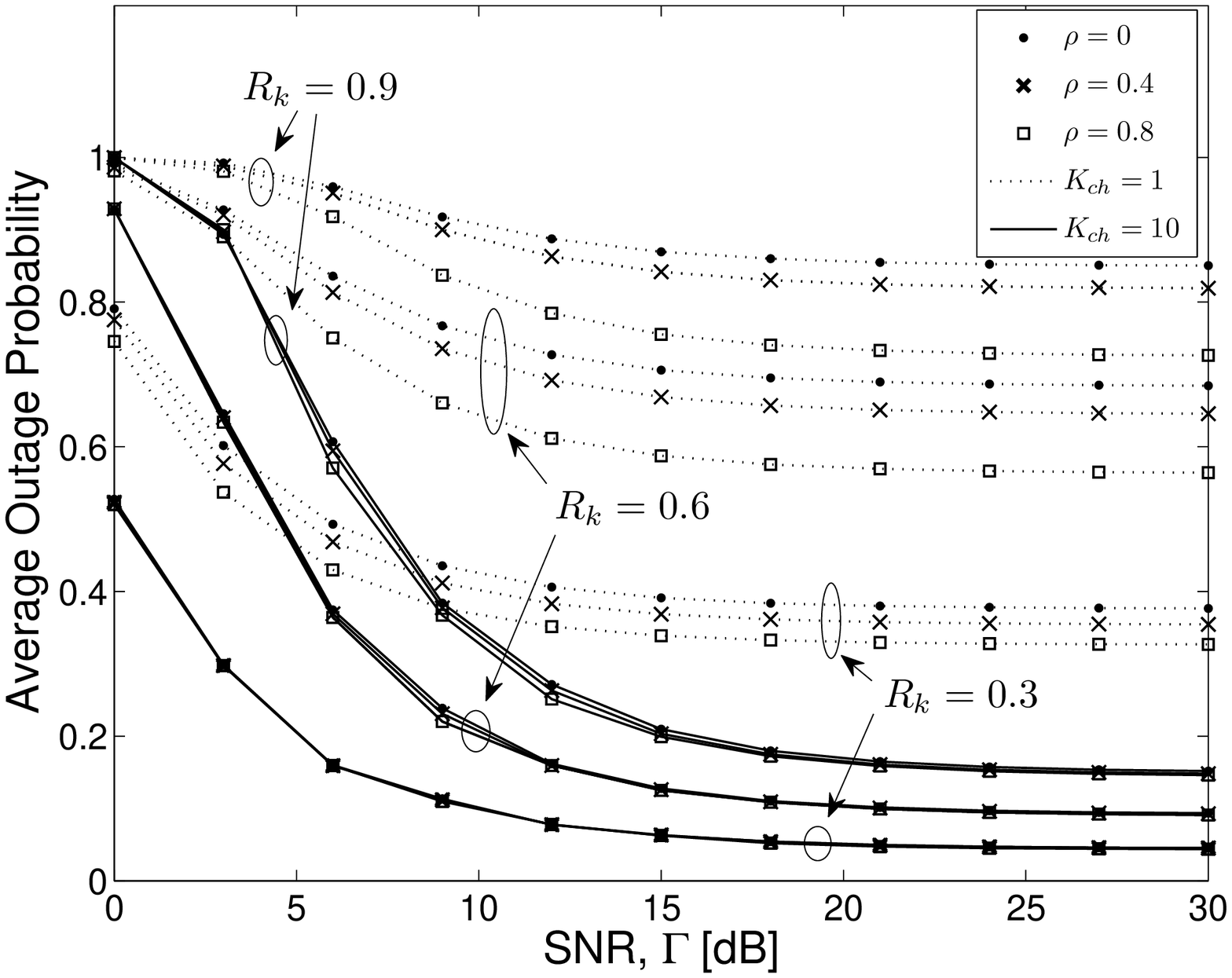}
\caption{Average outage probability versus $\Gamma$ ($K=3$,
$N_t=N_r=2d=2$. Transmit and receive beam vectors designed by
the IIA algorithm in \cite{Gomadam&Jafar:11IT}.)}
\label{fig:outage_prob_correlation1}
\end{figure}

\begin{figure}[H]
\centering
\scalefig{0.55}\epsfbox{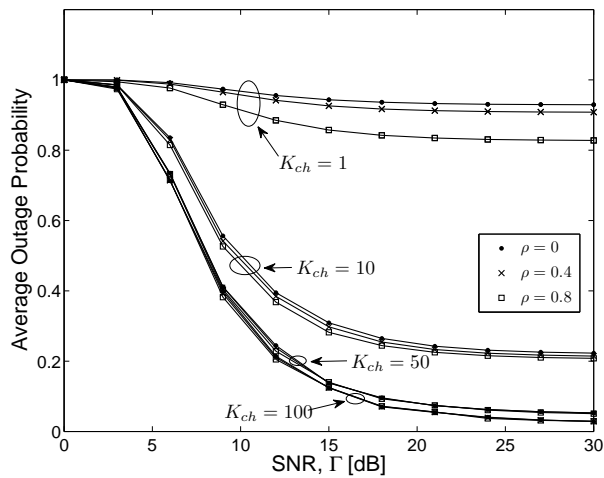}
\caption{Average outage probability versus $\Gamma$ ($K=3$,
$N_t=N_r=2d=2$, $R_k=1.2$. Transmit and receive beam vectors
designed by the IIA algorithm in \cite{Gomadam&Jafar:11IT}.)}
\label{fig:outage_prob_correlation2}
\end{figure}

Finally, the performance of the proposed beam design
algorithm maximizing the sum $\epsilon$-outage rate was evaluated.
As reference, we adopted the max-SINR algorithm and IIA algorithm
in \cite{Gomadam&Jafar:11IT}. Although the max-SINR and IIA
algorithms were originally proposed to design beam vectors with
perfect channel information, we applied the algorithms to design
beam vectors by treating the imperfect channel
$\hat{\bf{\mathcal{H}}}$ as the true channel. The
$\epsilon$-outage rate of the max-SINR algorithm (or the IIA
algorithm) is defined as the maximum rate that can be achieved
under the outage constraint of $\epsilon$ using the beam vectors
designed by the max-SINR algorithm (or the IIA algorithm). Once
$\{\Vbf_k\}$ and $\{\Ubf_k\}$ are designed by any design method
for given $\Sigmabf_t$, $\Sigmabf_r$ and $\{\hat{\Hbf}_{ki}\}$,
the outage probability corresponding to the designed beam vectors
is easily computed as a function of the target rate $R_k$ from
Theorem \ref{theo:general_outage_thm}. Thus, for the beam vectors
designed by the max-SINR and IIA algorithms as well as for those
designed by the proposed design algorithm in Section
\ref{sec:beam_design}, the $\epsilon$-outage rate $R_k$ can easily
be obtained. Figures \ref{fig:outage_capacity-e10} and
\ref{fig:outage_capacity-e20} show the sum $\epsilon$-outage rate
of the proposed beam design method averaged over thirty different
sets of $\{\hat{\Hbf}_{ki}\}$ for $\epsilon=0.1$ and
$\epsilon=0.2$, respectively, when $K=3, N_t=N_r=2d=2$ and
$\Sigmabf_t=\Sigmabf_r=\Ibf$ for different $K_{ch}$'s. (The
outage probability expression \eqref{eq:CDF_definite_d1}  with the
first 40 terms was used to compute the outage probability.) It is
seen that the proposed algorithm outperforms the IIA and max-SINR
algorithms in all SNR, and the max-SINR algorithm shows good
performance almost comparable to the proposed algorithm at low
SNR. However, as SNR increases, the performance of the max-SINR
algorithm degrades to that of the IIA algorithm  (the two
algorithm themselves converge as SNR increases) and there is a
considerable gain by exploiting the channel uncertainty.

\begin{figure}[H]
\centering
\scalefig{0.6}\epsfbox{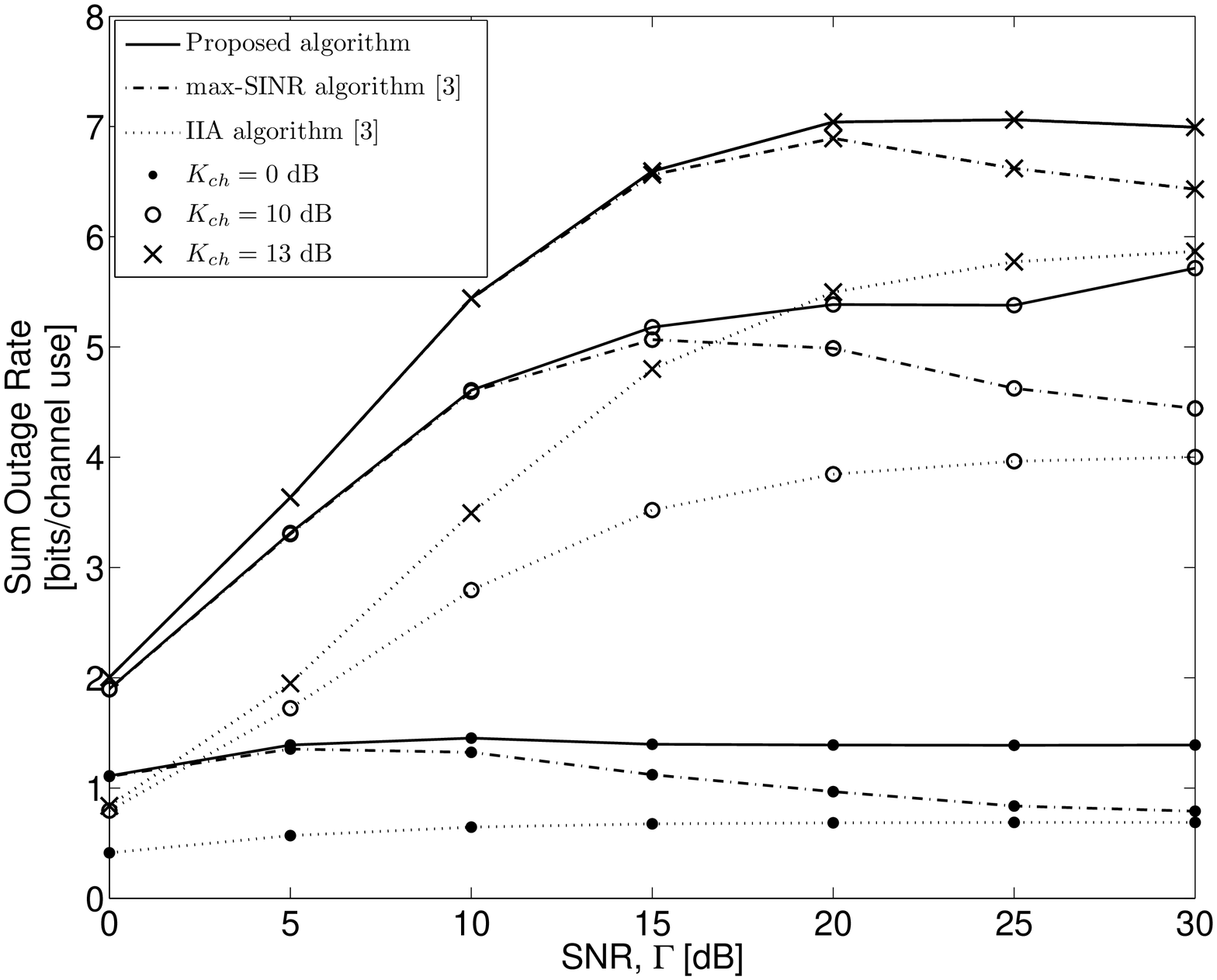}
\caption{Sum $\epsilon$-outage rate for $\epsilon=0.1$ ($K=3$,
$N_t=N_r=2d=2$, $\Sigmabf_t=\Sigmabf_r=\Ibf$) }
\label{fig:outage_capacity-e10}
\end{figure}

\begin{figure}[H]
\centering
\scalefig{0.6}\epsfbox{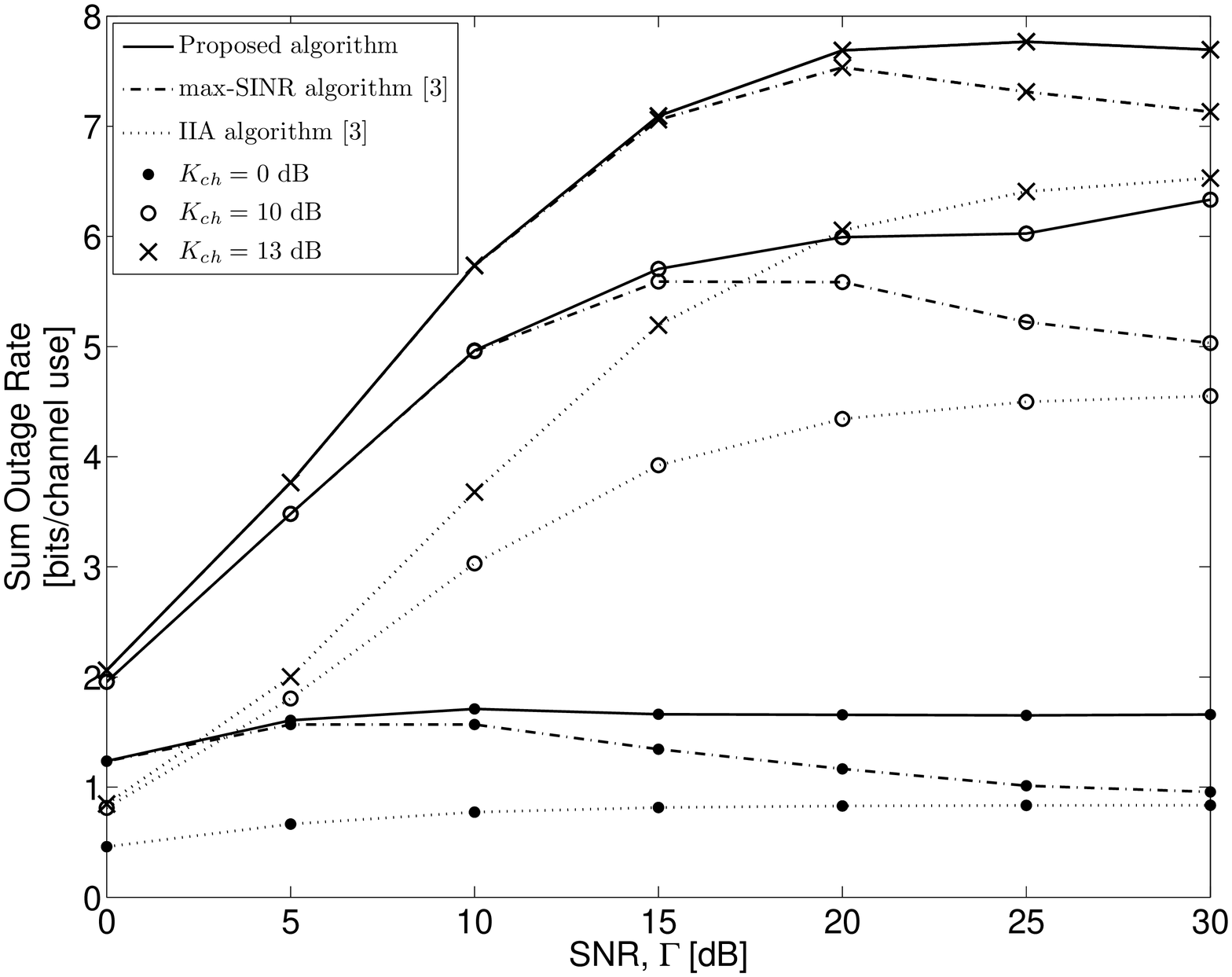}
\caption{Sum $\epsilon$-outage rate for $\epsilon=0.2$ ($K=3$,
$N_t=N_r=2d=2$, $\Sigmabf_t=\Sigmabf_r=\Ibf$)}
\label{fig:outage_capacity-e20}
\end{figure}

\section{Conclusion}
\label{sec:conclusion}

 In this paper, we have considered the outate probability and the outage-based beam design for  MIMO interference channels. We have derived
closed-form expressions for the outage probability in MIMO
interference channels under the assumption of Gaussian-distributed
CSI error, and have derived the asymptotic behavior of the outage
probability as a function of several system parameters based on
the Chernoff bound. We have shown that the outage probability
decreases exponentially w.r.t. the channel  $K$ factor defined as
the ratio of the power of the known channel part and that of the
unknown channel part. We have also provided an iterative beam
design algorithm for maximizing the sum outage rate based on the
derived outage probability expressions. Numerical results show
that the proposed beam design method significantly outperforms
conventional methods assuming perfect CSI in the sum outage rate
performance.


\appendices
\section{Proof of \eqref{eq:Xcovariance2}}
\label{append:proof_covariance}

 The $(p,q)$-th element of $\Sigmabf_{k,i}^{(m)}$ is given by
 \vspace{-0.5em}
\begin{eqnarray*}
& & \Ebb\{(X_{ki}^{(mp)}-\Ebb\{X_{ki}^{(mp)}\})
(X_{ki}^{(mq)}-\Ebb\{X_{ki}^{(mq)}\})^H\} \\
&=& \Ebb \{(\ubf_k^{(m)H}\Ebf_{ki}\vbf_i^{(p)})
(\ubf_k^{(m)H}\Ebf_{ki}\vbf_i^{(q)})^H\}  \\
&\stackrel{(a)}{=}&
\Ebb\{(\vbf_i^{(p)T}\otimes\ubf_k^{(m)H})\mbox{vec}(\Ebf_{ki})\mbox{vec}(\Ebf_{ki})^H
  (\vbf_i^{(q)T}\otimes\ubf_k^{(m)H})^H\} \\
&\stackrel{(b)}{=}& \sigma_h^2(\vbf_i^{(p)T}\otimes\ubf_k^{(m)H})
(\Sigmabf_t^T \otimes \Sigmabf_r)
(\vbf_i^{(q)T}\otimes\ubf_k^{(m)H})^H \\
&\stackrel{(c)}{=}&
\sigma_h^2(\vbf_i^{(p)T}\Sigmabf_t^T \otimes \ubf_k^{(m)H}\Sigmabf_r)(\vbf_i^{(q)*}\otimes\ubf_k^{(m)}), \qquad \mbox{where}~ \vbf_i^{(q)*}=(\vbf_i^{(q)T})^H\\
&\stackrel{(d)}{=}&
\sigma_h^2(\vbf_i^{(p)T}\Sigmabf_t^T\vbf_i^{(q)*} \otimes
\ubf_k^{(m)H}\Sigmabf_r\ubf_k^{(m)}) \stackrel{(e)}{=}
\sigma_h^2(\vbf_i^{(q)H}\Sigmabf_t\vbf_i^{(p)})(\ubf_k^{(m)H}\Sigmabf_r\ubf_k^{(m)}).
\end{eqnarray*}
Here, (a) is obtained by applying
$\mbox{vec}(\Abf\Bbf\Cbf)=(\Cbf^T\otimes\Abf)\mbox{vec}(\Bbf)$ to
each of the two terms in the expectation, (b) is by
$\Ebb\{\mbox{vec}(\Ebf_{ki})\mbox{vec}(\Ebf_{ki})^H\}=\sigma_h^2(\Sigmabf_t^T\otimes\Sigmabf_r)$,
(c) and (d) are by
$(\Abf\otimes\Bbf)(\Cbf\otimes\Dbf)=(\Abf\Cbf\otimes\Bbf\Dbf)$,
and finally (e) is because
$\vbf_i^{(p)T}\Sigmabf_t^T\vbf_i^{(q)*}$ and
$\ubf_k^{(m)H}\Sigmabf_r\ubf_k^{(m)}$ are scalars.
\hfill{$\blacksquare$}


\section{Distribution of a Non-Central Gaussian Quadratic Form}
\label{append:distribution}

The contents in Appendices B and C are from the technical report  WISRL-2012-APR-1, KAIST, "A Study on the Series Expansion of Gaussian Quadratic Forms".

\subsection{Previous work and literature survey}

There exist extensive literature
 about the probability distribution and statistical properties
of a quadratic form of non-central (complex) Gaussian random
variables in the communications area and the probability and
statistics community. Through a literature survey, we found that
the main technique to compute the distribution of a central (or a
non-central) Gaussian quadratic form is based on series fitting,
which was concretely unified and developed by S. Kotz \cite{Kotz:67AMS-1,
Kotz:67AMS-2}, and most of other works are its variants, e.g.,
\cite{Nabar:05WC}. First, we briefly explain this series fitting method
here.

Consider a  Gaussian quadratic form  $\xbf^H \Bar{\Qbf} \xbf$,
where $\xbf\sim{\mathcal{CN}}(\mubf, \Sigmabf)$ with size $n$ and
$\Bar{\Qbf}=\Bar{\Qbf}^H$. The first step of the series fitting
method is to convert the non-central Gaussian quadratic form into
a linear combination of chi-square random variables:
\begin{equation}\label{eq:equivalent_form}
\xbf^H\Bar{\Qbf}\xbf = \sum_{i=1}^n\lambda_i |z_i+\delta_i|^2 =
\sum_{i=1}^n\lambda_i [{\mathrm{Re}}(z_i+\delta_i)^2 +
{\mathrm{Im}}(z_i+\delta_i)^2],
\end{equation}
where $z_i  \stackrel{independent}{\sim} {\mathcal{CN}}(0,2)$ for
$i=1,\cdots,n$,  and $\{\delta_i, \lambda_i\}$ are constants
determined by $\Bar{\Qbf}$, $\mubf$ and $\Sigmabf$. Note that
${\mathrm{Re}}(z_i)\sim{\mathcal{N}}(0,1)$ and
${\mathrm{Re}}(z_i)\sim{\mathcal{N}}(0,1)$.  Thus, the non-central
Gaussian quadratic form is equivalent to a weighted sum of
non-central Chi-square random variables of which moment generating
function (MGF) is {\it known}. The MGF of a weighted sum of $n$
independent non-central $\chi^2$ random variables with degrees of
freedom $2m_i$ and non-centrality parameter $\mu_i^2$ is given by
\begin{equation}\label{eq:mgf}
\Phi(s) = \exp\Big\{-\frac{1}{2}\sum_{i=1}^{n}\mu_i^2 +
\frac{1}{2}\sum_{i=1}^n \frac{\mu_i^2}{1-2\lambda_i s}\Big\} \cdot
\prod_{i=1}^n \frac{1}{(1-2\lambda_i s)^{m_i}}.
\end{equation}
\begin{figure}[!ht]
\centerline{ \scalefig{0.6}\epsfbox{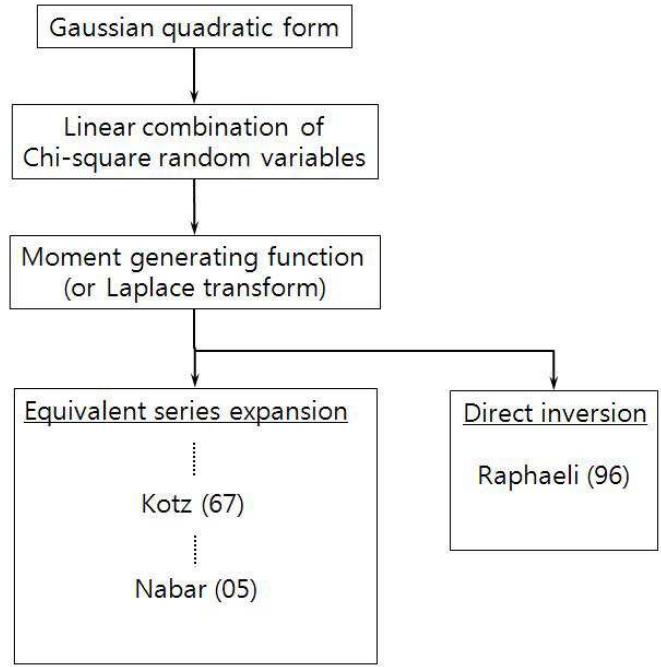} }
\caption{Computation of the distribution of a Gaussian quadratic
form} \label{fig:QuadFormTree}
\end{figure}
  Note here
that $\Phi(-s)$ is nothing but the Laplace transform of the {\it
probability density function (PDF)} of $\xbf^H\Bar{\Qbf}\xbf$ or
equivalently $\sum_{i=1}^n\lambda_i |z_i+\delta_i|^2$. Now, the
series fitting method expresses the PDF as an infinite series
composed of a set of known basis functions and tries to find the
linear combination coefficients so that the Laplace transform of
this series is the same as the known $\Phi(-s)$. Specifically, let
the PDF be
\begin{equation}
g_n(\Bar{\Qbf}, \mubf, \Sigmabf; y) = \sum_{k=0}^\infty c_k
h_k(y),
\end{equation}
where $\{h_k(y), k=0, 1, \cdots\}$ is the set of known basis
functions and  $\{c_k, k=0,1, \cdots\}$ is the set of linear
combination coefficients to be determined. Here, to make the
problem tractable, in most cases, the following conditions are
imposed. First, the sequence $\{h_k(y)\}$  of basis functions is
chosen among measurable complex-valued functions on $[0,\infty]$
such that
\begin{equation}
\sum_{k=0}^{\infty}|c_k||h_k(y)| \le Ae^{by}, \qquad   y \in
[0,\infty]  ~~\mbox{almost everywhere},
\end{equation}
where $A$ and $b$ are real constants. Second, the Laplace
transform  $\hat{h}_k(s)$ of $h_k(y)$ has a special form:
\begin{equation}
\hat{h}_k(s) = \xi(s) \eta^k(s),
\end{equation}
where $\xi(s)$ is a non-vanishing, analytic function for
${\mathrm{Re}}(s)>b$, and $\eta(s)$ is analytic for
${\mathrm{Re}}(s)>b$ and has an inverse function. The first
condition is for the existence of Laplace transform and the second
condition is to make the problem  tractable.  Finally, with the
pre-determined $\{h_k(y)\}$ with the conditions, the coefficients
$\{c_k\}$ are computed so that
\begin{equation}
{\mathcal{L}}({g}_n(\Bar{\Qbf},\mubf,\Sigmabf;y))
=\sum_{k=0}^\infty c_k \hat{h}_k(s)= \Phi(-s),
\end{equation}
where ${\mathcal{L}}(\cdot)$ denote the Laplace transform of a
function.

Widely used $\{h_k(y)\}$ for the series expansion of the {PDF} of
a quadratic form of non-central Gaussian random variables is as
follows \cite{Kotz:67AMS-1, Kotz:67AMS-2}.
\begin{enumerate}
\item[] 1. (Power series): $h_k(y) = (-1)^k
\frac{(y/2)^{n/2+k-1}}{2\Gamma(n/2+k)}$. %
\item[] 2. (Laguerre polynomials):
\begin{equation} \label{eq:LagReply}
h_k(y) = g(n;y/\beta)[k!\frac{\Gamma(n/2)}{\beta\Gamma(n/2+k)}]
L_k^{(n/2-1)}(y/2\beta),
\end{equation}
 where $g(n;y)$ is the central $\chi^2$ density with $n$ degrees
of freedom and $L_k^{(n/2-1)}(x)$ is the generalized Laguerre
polynomial defined by Rodriges' formula
\[
L_k^{(n/2-1)}(x) = \frac{1}{k!}e^x
x^{-(n/2-1)}\frac{d^k}{dx^k}e^{-x} x^{k+1}
\]
for $a>1$ and a positive control parameter $\beta$.
\end{enumerate}
For the detail computation of $\{c_k\}$, please refer to
\cite{Kotz:67AMS-1, Kotz:67AMS-2, Mathai&Provost:book}. The whole procedure is summarized in Fig. \ref{fig:QuadFormTree}.

\textbf{Reference group 1}

\begin{enumerate}
\item[] [Kotz-67a] S. Kotz, N. L. Johnson, and D. W. Boyd,
``Series representation of distributions of quadratic forms in
normal variables. I. Central Case,'' \textit{Ann. Math. Statist.,}
vol, 38, pp. 823 -- 837, Jun. 1967.

\item[] [Kotz-67b] S. Kotz, N.
L. Johnson, and D. W. Boyd, ``Series representation of
distributions of quadratic forms in normal variables. II.
Non-central Case,'' \textit{Ann. Math. Statist.,} vol. 38, pp. 838
-- 848, Jun. 1967.
\item[] [Mathai-92] A. M. Mathai and S. B. Provost,
\textit{Quadratic forms in random variables: Theory and
applications}, New York:M. Dekker, 1992.
\item[] [Nabar-05] R. Nabar, H. Bolcskei, and A. Paulraj,
``Diversity and Outage Performance of Space-Time Block Coded
Ricean MIMO Channels'', \textit{IEEE Trans. on Wireless Commun.},
vol. 4, no. 5, Sept. 2005.
\end{enumerate}

\textbf{Reference group 2}

\begin{enumerate}
\item[] [Pachares-55]  J. Pachares, ``Note on the distribution of
a definite quadratic form,'' \textit{Ann. Math. Statist.,} vol.
26, pp. 128 -- 131, Mar. 1955. $\Rightarrow$ Power series
representation of quadratic form of central Gaussian random
variables.
%
%
\item[] [Shah-61]  B. K. Shah and C. G. Khatri, ``Distribution of
a definite quadratic form for non-central normal variates,''
\textit{Ann. Math. Statist.,} vol. 32, pp. 883 -- 887, Sep. 1961.
$\Rightarrow$ Power series representation of quadratic form of
non-central Gaussian random variables.
%
\item[] [Shah-63] B. K. Shah, ``Distribution of definite and of
indefinite quadratic forms from a non-central normal
distribution,'' \textit{Ann. Math. Statis.,} vol. 34, pp. 186 --
190, Mar. 1963. $\Rightarrow$ Extends [Gurland-55] to derive a
representation of quadratic form of non-central Gaussian random
vector with Laguerre polynomial. Double series of Laguerre
polynomials is required.
%
%
\item[] [Gurland-55]  J. Gurland, ``Distribution of definite and
indefinite quadratic forms,'' \textit{Ann. Math. Statist.,} vol.
26, pp. 122 -- 127, Jan. 1955. $\Rightarrow$ Provides a simple
representation of quadratic form of central Gaussian random vector
in Laguerre polynomial.
%
%
\item[] [Gurland-56] J. Gurland, ``Quadratic forms in normally
distributed random variables,'' \textit{Sankhya: The Indian
Journal of Statistics} vol. 17, pp. 37 -- 50, Jan. 1956.
$\Rightarrow$ CDF for the indefinite quadratic form of central
random variable.
%
%
%
\item[] [Ruben-63] H. Ruben, ``A new result on the distribution of
quadratic forms,'' \textit{Ann. Math. Statist.,} vol. 34, pp. 1582
-- 1584, Dec. 1963. $\Rightarrow$ Represents the CDF of quadratic
form of central and non-central Gaussian random vector with
central/non-central $\chi^2$ distribution function.
%
\item[] [Tiku-65] M. L. Tiku, ``Laguerre series forms of non
central $\chi^2$ and $F$ distributions,'' \textit{Biometrika,}
vol. 52, pp. 415 -- 427, Dec. 1965. $\Rightarrow$ Another series
representaion with Laguerre polynomials.
%
%
%

\item[] [Davis-77] A. W. Davis, ``A differential equation approach
to linear combinations of independent chi-squares,'' \textit{J. of
the Ame. Statist. Assoc.} vol. 72, pp. 212 -- 214, Mar. 1977.
$\Rightarrow$ Provides another series representation with power series.
%
%
\item[] [Imhof-61] J. P. Imhof, ``Computing the distribution of
quadratic forms in normal variables,'' \textit{Biometrika} vol.
48, pp. 419 -- 426, Dec. 1961. $\Rightarrow$ Provides a numerical
method of computing the distribution
%
%
\item[] [Rice-80] S. O. Rice, ``Distribution of quadratic forms in
normal variables - Evaluation by numerical integration,''
\textit{SIAM J. Scient. Statist. Comput.,} vol. 1, no. 4, pp. 438
-- 448, 1980. $\Rightarrow$ Another numerical method of computing
distribution.
%
%
\item[] [Biyari-93] K. H. Biyari and W. C. Lindsey, ``Statistical
distribution of Hermitian quadratic forms in complex Gaussian
variables,'' \textit{IEEE Trans. Inform. Theory,} vol. 39, pp.
1076 -- 1082, Mar. 1993. $\Rightarrow$ Series expansion of
multi-variate complex Gaussian random variables. This paper deals
with the case that the Hermitian matrix in the quadratic form is a
special block-diagonal matrix.
\end{enumerate}

 \textbf{Reference group 3}
\begin{enumerate}
\item[] [Raphaeli-96] D. Raphaeli, ``Distribution of noncentral
indefinite quadratic forms in complex normal variables,''
\textit{IEEE Trans. Inf. Theory,} vol. 42, pp. 1002 -- 1007, May
1996.
\item[] [Al-Naffouri-09] T. Al-Naffouri and B. Hassibi, ``On the
distribution of indefinite quadratic forms in Gaussian random
variables,'' in \textit{Proc. of IEEE Int. Symp. Inf. Theory},
(Seoul, Korea), Jun.--Jul. 2009.

\end{enumerate}

\subsection{The difference of our work from the previous
works}

First, let us remind our outage event in MIMO interference
channels. From equations \eqref{eq:outageDef}, \eqref{eq:SINR_rearranged} and \eqref{eq:Xklmj}, we
have
\begin{equation}
\mbox{Pr}\{\mbox{outage}\}=\mbox{Pr}  \left\{\sum_{i=1}^{K}
\sum_{j=1}^{d} X_{ki}^{(mj)H}X_{ki}^{(mj)} \ge
\frac{|\ubf_k^{(m)H}\hat{\Hbf}_{kk}\vbf_k^{(m)}|^2}{2^{R_k^{(m)}}-1}-\sigma^2
=: \tau\right\},
\end{equation}
where $X_{ki}^{(mj)}$ is a non zero-mean Gaussian random variable.
Note that the outage probability is an {\it upper} tail
probability of the distribution of the Gaussian quadratic form
$\sum_{j=1}^{d} X_{ki}^{(mj)H}X_{ki}^{(mj)}$.   However, as seen
in Fig.  \ref{fig:cdf+mse+1},  {\it the most widely-used series
fitting method explained in the previous subsection yields a good
approximation of the distribution at the lower tail not at the
upper tail.}  The discrepancy between the series and the true
{PDF} is large at the upper\footnote{In the case of the problem
considered in \cite{Nabar:05WC}, the outage defined in \cite{Nabar:05WC} is
associated with the lower tail of the distribution and thus the
series fitting method is well suited to that case. However, our
system setup and considered problem are different from those in
\cite{Nabar:05WC}.} tail for a truncated series.
 {\it On the other hand, our approach
yields a good approximation to the true distribution at the upper
tail.} Thus,  the proposed series is more relevant to our problem
than the series fitting method.
\begin{figure}[!ht]
\centerline{ \SetLabels
\L(0.25*-0.1) (a) \\
\L(0.76*-0.1) (b) \\
\endSetLabels
\leavevmode
\strut\AffixLabels{ \scalefig{0.52}\epsfbox{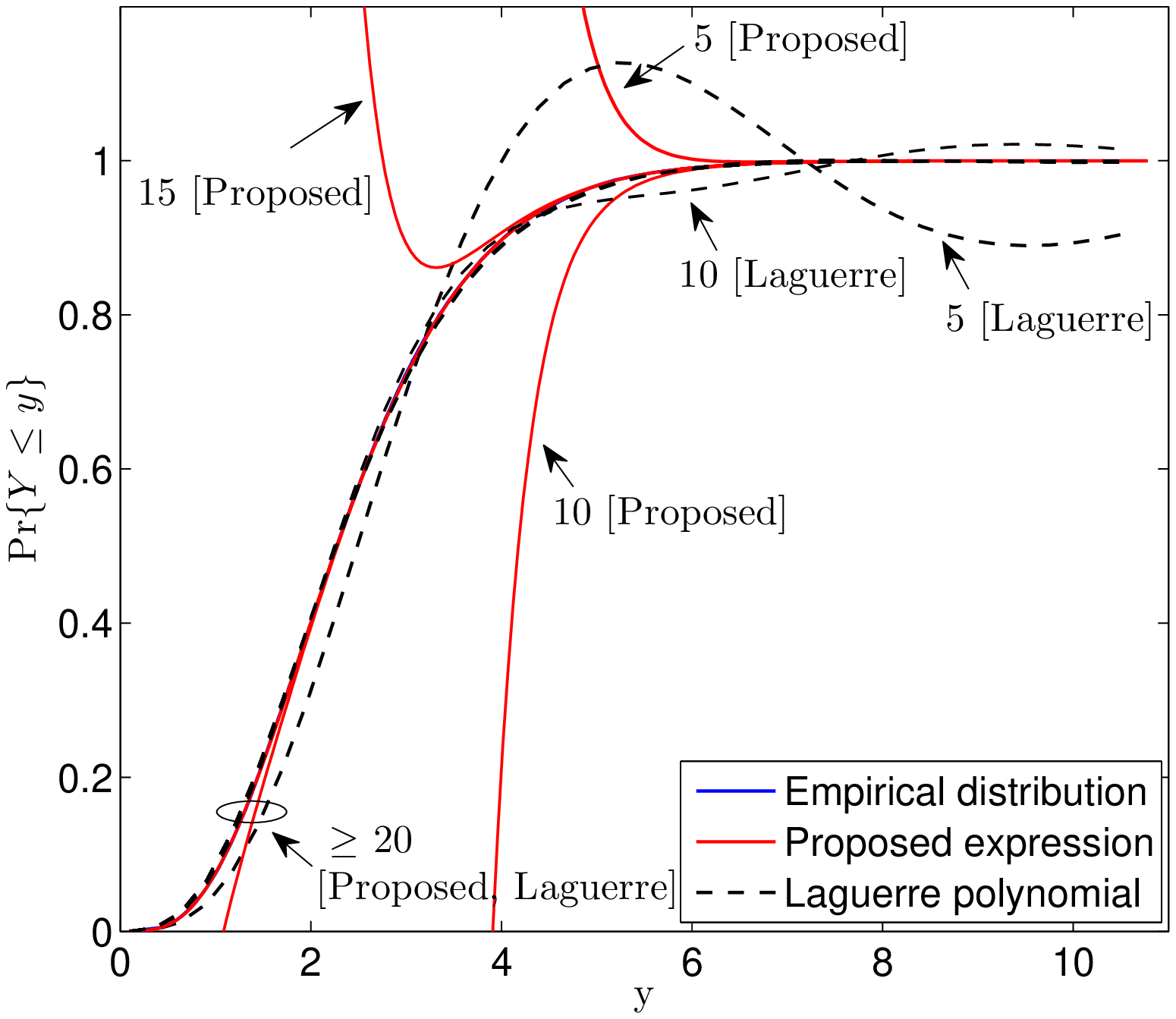}
\scalefig{0.52}\epsfbox{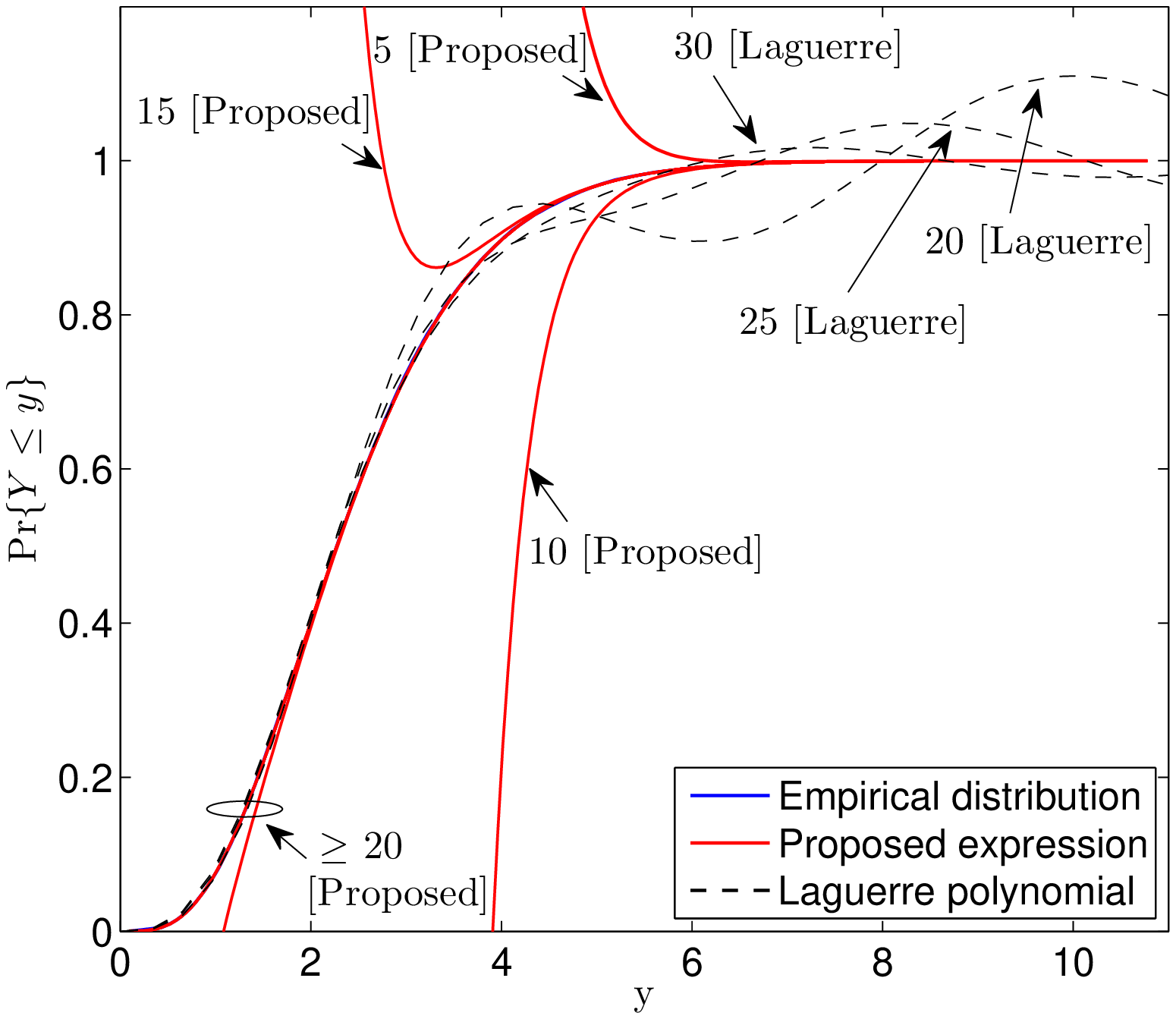} } } \vspace{1em}
\caption{Series fitting method versus direct inverse Laplace
transform method: number of variables = 4,
{$\mubf=0.5\mathbf{1}$}, $\Bar{\Qbf}=[1, 0.5, 0, 0; 0.5, 1, 0, 0;
0, 0, 1, 0; 0, 0, 0, 1]$, and $\Sigmabf=0.3\Ibf$.  (a) $\beta=1$
and (b) $\beta=2$. ($\beta$ is the control parameter for the
Laguerre polynomials in \eqref{eq:LagReply}.) Note that the
convergent speed of the series fitting method based on the
Laguerre polynomials depends much on $\beta$. In the case of
$\beta=2$, the series fitting method based on the Laguerre
polynomials yields large errors at the upper tail. It is not
simple how to choose $\beta$ and an efficient method is not known.
(One cannot run simulations for empirical distributions for all
cases.) The series fitting method based on the power series shows
bad performance, and it cannot be used in practice.}
\label{fig:cdf+mse+1}
\end{figure}

Our approach to the upper tail approximation is based on the
recent works   by Raphaeli \cite{Raphaeli:96IT} and by Al-Naffouri and
Hassibi \cite{AlNaffouri&Hassibi:09ISIT}. First, let us explain Raphaeli's method. The procedure in Fig. \ref{fig:QuadFormTree} up to obtaining the
MGF of the Gaussian quadratic form is common to both the sequence
fitting method and Raphaeli's method. However, Raphaeli's method
obtains the PDF by direct inverse Laplace transform of the MGF
$\Phi(s)$. Typically, the inverse Laplace transform of the MGF is
represented as a complex contour integral and then the complex
contour integral is computed as an infinite series by the residue
theorem. However, to obtain the cumulative distribution function
(CDF), which is actually necessary to compute the tail
probability, Raphaeli's method requires one more step, the
integration of the PDF, to obtain the CDF since the MGF $\Phi(s)$
is the Laplace transform of the {\em PDF}.

To obtain the CDF of  a general Gaussian
quadratic form, we did not use the MGF $\Phi(s)$, which is a bit
complicated and requires an additional step, like Raphaeli, but
instead we directly used a simple contour integral for the CDF
\eqref{eq:CDF_general14}, obtained by Al-Naffouri and Hassibi
\cite{AlNaffouri&Hassibi:09ISIT}.\footnote{In \cite{AlNaffouri&Hassibi:09ISIT}, Al-Naffouri and
Hassibi obtained the contour integral, \eqref{eq:CDF_general14}
for the CDF of a Gaussian quadratic form. However, they did not
obtain closed-form series expressions for the contour integral in
general cases except a few simple cases.
 The main goal of \cite{AlNaffouri&Hassibi:09ISIT} was to derive a nice and simple contour integral form for the CDF. } Then, the contour integral was computed as an infinite series by the residue theorem. (Using the residue theorem is borrowed from Raphaeli's work.) Thus, our result is simpler
than Raphaeli's approach and does not require the integration of a
PDF for the CDF.

As mentioned already, the series expansion in this paper has a
particular advantage over the series fitting method considered in
\cite{Nabar:05WC} for the outage event defined in this paper; The
series in this paper fits the upper tail of the distribution
well with a few number of terms.  We shall provide a detailed
proof for this in a special case in the next subsection. Thus, our
series expressions for outage probability in MIMO interference
channels are meaningful and relevant.

\section{Computational Issues and Convergence of the Obtained Series}
\label{append:computational}

\subsection{Computing higher order derivatives}
\label{append:derivative}

The general outage expression in Theorem 1 is given by
\begin{eqnarray}
{\mathrm{Pr}}\{{\mbox{outage}}\}
&=&{\mathrm{Pr}}\{\log_2(1+{\mathsf{SINR}}_k^{(m)}) \le R_k^{(m)}\} \nonumber\\
&=& -\sum_{i=1}^{\kappa} \frac{e^{-(\frac{\tau}{\lambda_i} +
\sum_{j=1}^{\kappa_i}|\chi_i^{(j)}|^2)} }{\lambda_i^{\kappa_i}}
\sum_{n = \kappa_i-1}^{\infty} \frac{1}{n!} g_{i}^{(n)}(0)
\frac{1}{(n-\kappa_i+1)!}
\left(\frac{\sum_{j=1}^{\kappa_i}|\chi_i^{(j)}|^2}{\lambda_i}\right)^{\!\!
n-\kappa_i+1},  \label{eq:appendix_general_outage}
\end{eqnarray}
where
\begin{equation} \label{eq:general_outagegis}
g_i(s) = \frac{e^{\tau s}}{s-1/\lambda_i} \cdot
\frac{\exp\left(-\sum_{p\neq i}
\frac{(s-1/\lambda_i)\lambda_p}{1+(s-1/\lambda_i)\lambda_p}
\sum_{q=1}^{\kappa_p}|\chi_p^{(q)}|^2\right)} {\prod_{p\neq i}
\Big(1+(s-1/\lambda_i)\lambda_p \Big)^{\kappa_p}}.
\end{equation}
To compute \eqref{eq:appendix_general_outage}, we need to compute
\begin{itemize}
\item $\{\lambda_i\}$ (the eigenvalues of the $Kd \times Kd$
covariance matrix $\Sigmabf = \Psibf \Lambdabf \Psibf^H$),

\item $\{\chi_i^{(j)}\}$ (the elements of $Kd$ vector $\chibf =
\Lambdabf^{-1/2}\Psibf^H \mubf$, where $\mubf$ is the mean vector
of the Gaussian distribution),

\item and the higher order derivatives of $g_i(s)$.

\end{itemize}
The computation of $\{\lambda_i\}$ and $\{\chi_i^{(j)}\}$ is
simple since the sizes of the mean vector and the covariance
matrix are $Kd$ and $Kd \times Kd$, respectively.  Furthermore,
the higher order derivatives of $g_i(s)$ can also be computed
efficiently based on recursion \cite{Mathai&Provost:book},\cite{Raphaeli:96IT}. Note
that $g_i(s) = e^{\log g_i(s)}$. Thus, the derivative of $g_i(s)$
can be written as
\begin{eqnarray} \label{eq:recursive_dev}
g_i^{(1)}(s) &=& g_i(s)[\log g_i(s)]^{(1)}, \nonumber \\
g_i^{(2)}(s) &=&
g_i^{(1)}(s)[\log g_i(s)]^{(1)} + g_i(s)[\log g_i(s)]^{(2)},  \nonumber \\
&\vdots&  \nonumber \\
g_i^{(n)}(s) &=& \sum_{l=0}^{n-1} \binom{n-1}{l} g_i^{(l)}(s)[\log
g_i(s)]^{(n-l)}, \quad\quad n\ge 1
\end{eqnarray}
where $g_i^{(l)}(s)$ and $[\log g_i(s)]^{(l)}$ denote the $l$-th
derivatives of $g_i(s)$ and $\log g_i(s)$, respectively. Here,
$[\log g_i(s)]^{(n)}$ can be computed from
\eqref{eq:general_outagegis} as
\[
[\log g_i(s)]^{(n)} = \tau\delta_{1n}
-\frac{(n-1)!(-1)^{n-1}}{(s-1/\lambda_i)^n} -\sum_{p\neq i}
\frac{n!(-1)^{n-1}\lambda_p^n}{(1+\lambda_p(s-1/\lambda_i))^{n+1}}\sum_{q=1}^{\kappa_p}|\chi_p^{(q)}|^2
-\sum_{p\neq i}
\frac{(n-1)!(-1)^{n-1}\kappa_p\lambda_p^n}{(1+\lambda_p(s-1/\lambda_i))^n}
\]
where $\delta_{1n}$ is Kronecker delta function. Thus, for given
$g_i(s)$ and $[\log g_i(s)]^{(l)}$, we can compute $g_i^{(l)}(s)$
efficiently in a recursive way, as shown in
\eqref{eq:recursive_dev}.

\subsection{Convergence analysis}
\label{append:convergence}

In this subsection, we provide some convergence analysis on the
derived series expansion in Sec. \ref{sec:outage_prob}. Consider the general
result in Theorem 1 for the CDF of a Gaussian quadratic form:
\begin{equation}\label{eq:cdf}
{\mathrm{Pr}}\{Y\le y\} = 1+\sum_{i=1}^\kappa
\frac{e^{-(\frac{y}{\lambda_i}+\sum_{j=1}^{\kappa_i}|\chi_i^{(j)}|^2)}}{\lambda_i^{\kappa_i}}
\sum_{n=\kappa_i-1}^{\infty} \frac{1}{n!} g_i^{(n)}(0, y)
\frac{1}{(n-\kappa_i+1)!} \left(
\frac{\sum_{j=1}^{\kappa_i}|\chi_i^{(j)}|^2}{\lambda_i}
\right)^{n-\kappa_i+1}
\end{equation}
where
\[
g_i(s, y) = \frac{e^{sy}}{s-\lambda_i^{-1}} \cdot
\frac{\exp\left(-\sum_{ p\neq i } \frac{(s-1/\lambda_i)\lambda_p}
        {1+(s-1/\lambda_i)\lambda_p }\sum_{q=1}^{\kappa_p}|\chi_p^{(q)}|^2 \right)}
{\prod_{p \neq i}
\Big(1+(s-1/\lambda_i)\lambda_p\Big)^{\kappa_p}}.
\]
Here, we explicitly use the variable $y$ as an input parameter of
the function $g_i(s)$ for later explanation. $g_i^{(n)}(s,y)$
denotes the $n$-th partial derivative of $g_i(s,y)$ with respect
to $s$. (Here, $\kappa$ is the number of distinct eigenvalues of
the $Kd \times Kd$ covariance matrix $\Sigmabf$ and $\kappa_i$ is
the geometric order of eigenvalue $\lambda_i$. $\sum_{i=1}^\kappa
\kappa_i = Kd$.) The residual error caused by truncating the
infinite series after the first $N$ terms is given by
\begin{equation}
R_N(y) = \sum_{i=1}^\kappa
\frac{e^{-(\frac{y}{\lambda_i}+\sum_{j=1}^{\kappa_i}|\chi_i^{(j)}|^2)}}{\lambda_i^{\kappa_i}}
\sum_{n=N+1}^{\infty} \frac{1}{n!} g_i^{(n)}(0, y)
\frac{1}{(n-\kappa_i+1)!} \left(
\frac{\sum_{j=1}^{\kappa_i}|\chi_i^{(j)}|^2}{\lambda_i}
\right)^{n-\kappa_i+1},
\end{equation}
and  we have
\[
{\mathrm Pr}\{Y \le y; \mbox{infinite sum}\} = {\mathrm Pr}\{Y\le
y; \mbox{truncation at $N$}\}+R_N(y).
\]
The truncation error $R_N(y)$ can be expressed as
\begin{equation}
R_N(y) = \sum_{i=1}^\kappa R_N^{i}(y),
\end{equation}
where
\begin{equation}
R_N^{i}(y) =
\frac{e^{-(\frac{y}{\lambda_i}+\sum_{j=1}^{\kappa_i}|\chi_i^{(j)}|^2)}}{\lambda_i^{\kappa_i}}
\sum_{n=N+1}^{\infty} \frac{1}{n!} g_i^{(n)}(0, y)
\frac{1}{(n-\kappa_i+1)!} \left(
\frac{\sum_{j=1}^{\kappa_i}|\chi_i^{(j)}|^2}{\lambda_i}
\right)^{n-\kappa_i+1}
\end{equation}
for each $1\le i \le\kappa$.  Then, the magnitude of each term
$|R_N^i(y)|$ in the truncation error is bounded as
\begin{align}
|R_N^i(y)| &\le \frac{1}{\lambda_i^{\kappa_i}}
\exp\bigg\{-\bigg(\frac{y}{\lambda_i}+\sum_{j=1}^{\kappa_i}|\chi_i^{(j)}|^2\bigg)\bigg\}
\cdot \sum_{n=N+1}^{\infty} \frac{1}{n!} \left| g_i^{(n)}(0,y)
\right| \cdot \frac{1}{(n-\kappa_i+1)!} \left(
\frac{\sum_{j=1}^{\kappa_i}|\chi_i^{(j)}|^2}{\lambda_i}
\right)^{n-\kappa_i+1}.
\end{align}

As seen in Fig. \ref{fig:cdf+mse+1}, our series expansion fits the
upper tail distribution first. Now, to assess the overall
convergence speed of our series, for the same step as in Fig.
\ref{fig:cdf+mse+1}, we ran some simulations to obtain an
empirical distribution, and computed the overall mean square error
(MSE) between the truncated series and the empirical distribution
over {$0 \le y \le 10$} as
\[
\mbox{CDF MSE} = \frac{1}{200}\sum_{i=1}^{200} \Big|
{\mathrm{Pr}}\{Y\le y_i; N, \mbox{type of series} \} -
{\mathrm{Pr}}\{Y\le y_i; \mbox{empirical}\} \Big|^2,
\]
where $\{y_i\}$ are the uniform samples of {$[0, 10]$}. Fig.
\ref{fig:LagMSEfuncN} shows the CDF MSE of the three methods in
Fig. \ref{fig:cdf+mse+1}: the proposed series, the series fitting
method with $\beta=1$ and the series fitting method with $\beta
=2$.
\begin{figure}[!ht]
\centerline{ \scalefig{0.52}\epsfbox{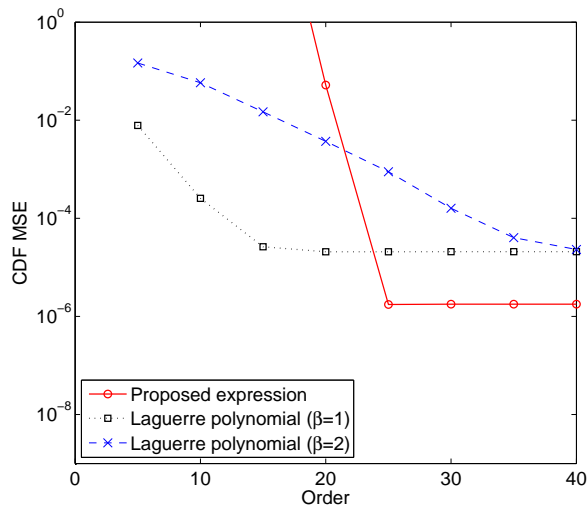} }
\caption{{CDF MSE of the CDFs in Fig. \ref{fig:cdf+mse+1}} }
\label{fig:LagMSEfuncN}
\end{figure}
\begin{figure}[!htbp]
\centerline{ \SetLabels
\L(0.25*-0.1) (a) \\
\L(0.76*-0.1) (b) \\
\endSetLabels
\leavevmode
\strut\AffixLabels{
\scalefig{0.52}\epsfbox{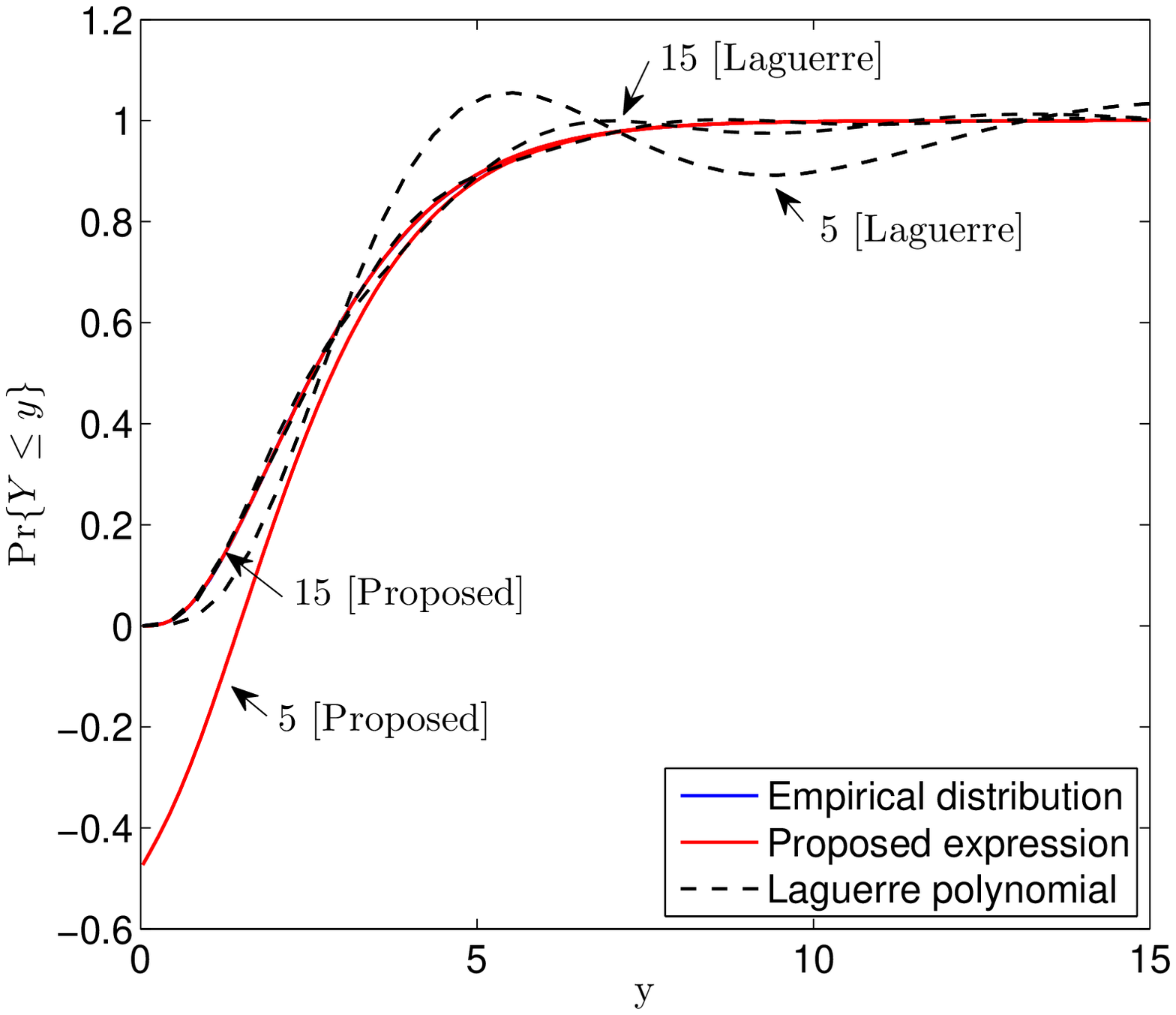}
\scalefig{0.52}\epsfbox{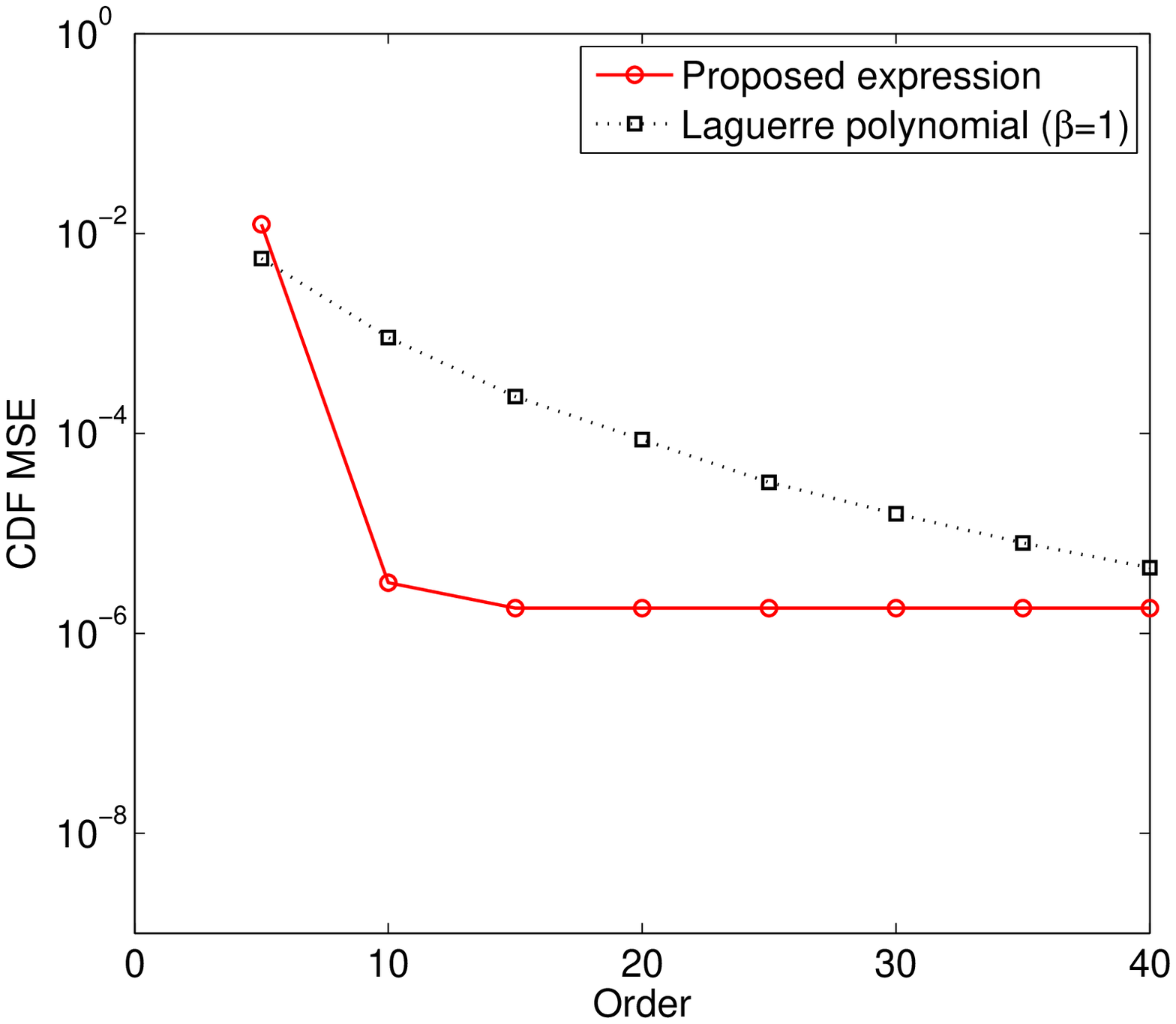} } } \vspace{1em}
\caption{number of variables = 4, {$\mubf=0.5\mathbf{1}$},
$\Bar{\Qbf}=\Ibf$, and $\Sigmabf= [0.2641\   0.0328\   0.1963\
0.1140;\
       0.0328\   0.6097\  -0.1739\    0.1708;\
       0.1963\  -0.1739\   0.8746\   -0.0022;\
       0.1140\   0.1708\  -0.0022\    0.1250] $. In this case eigenvalues are $1.0000$, $0.6318$, $0.2158$, and $0.0259$ with $\beta=1$. (a) CDF, (b) {CDF MSE. Uniform sample of $y$ is taken over $[0, 15.9]$.} }
\label{fig:cdf+mse+2}
\end{figure}
It is seen in Fig. \ref{fig:LagMSEfuncN} that the overall
convergence of the proposed series can be worse than the series
fitting method at the small values for the number of summation
terms for the setting in Fig. \ref{fig:cdf+mse+1}. The bad overall
convergence is due to  worse fitting at the lower tail of the
distribution, but the bad lower tail approximation is not
important to our outage computation. (Please see Fig.
\ref{fig:cdf+mse+1}.) Fig. \ref{fig:cdf+mse+2} shows another case.
In this case, the proposed series outperforms the series fitting
method both in the overall convergence and in the upper tail
convergence.  It is seen numerically that the proposed series fits
the upper tail distribution first. Now, we shall prove this
property of the proposed series. However, it is a difficult
problem to prove this property in general cases.  Thus, in the
next subsection, we provide a proof of this property when the
number of distinct eigenvalues of the covariance matrix $\Sigmabf$
is one, e.g., in the i.i.d. case.

\subsubsection{The identity covariance matrix case}

 Suppose that there is only one eigenvalue, $\lambda$ ($> 0$), with
multiplicity $\kappa$ for the covariance matrix $\Sigmabf$. This
case corresponds to Corollary 4, and the outage probability is given by
\begin{equation} \label{eq:replyLastCDF}
{\mathrm{Pr}}\{Y\le y\} = 1 + \frac{\exp(-\eta^2)}{\lambda^\kappa}
\exp\left(-\frac{y}{\lambda}\right) \sum_{n=\kappa-1}^{\infty}
g^{(n)}(0, y)
\frac{(\eta^2/\lambda)^{n-\kappa+1}}{n!(n-\kappa+1)!},
\end{equation}
where
\begin{equation} \label{eq:replyLastgsy}
g(s, y)=\frac{e^{ys}}{s-\lambda^{-1}}
\end{equation}
and $\eta^2=\sum_{j=1}^\kappa|\chi^{(j)}|^2$. The residual error
caused by truncating the infinite series after the first $N$ terms
is given by
\begin{equation} \label{eq:residual_e1}
R_N(y) = \frac{\exp(-\eta^2)}{\lambda^\kappa}
\exp\left(-\frac{y}{\lambda}\right) \sum_{n=N+1}^{\infty}
g^{(n)}(0, y)
\frac{(\eta^2/\lambda)^{n-\kappa+1}}{n!(n-\kappa+1)!}.
\end{equation}

\noindent Before we proceed, we first obtain the $n$-th derivative
of $g(s,y)$ at $s=0$, which is given in the following lemma.

\vspace{0.5em}
\begin{lemma}
For $n\ge 0$,
\begin{eqnarray}\label{eq:general_g}
g^{(n)}(0, y) = -\lambda \sum_{k=0}^n \frac{n!}{(n-k)!} \lambda^k
y^{n-k}.
\end{eqnarray}
\end{lemma}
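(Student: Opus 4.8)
The plan is to recognize $g(s,y)=e^{ys}/(s-\lambda^{-1})$ as a product of two factors whose individual derivatives are elementary, and then invoke the general Leibniz rule rather than differentiating the quotient directly. Write $g(s,y)=f(s)h(s)$ with $f(s)=e^{ys}$ and $h(s)=(s-\lambda^{-1})^{-1}$. The exponential factor is trivial: $f^{(k)}(s)=y^{k}e^{ys}$, so $f^{(k)}(0)=y^{k}$. For the rational factor, repeated differentiation of a negative power yields $h^{(j)}(s)=(-1)^{j}\,j!\,(s-\lambda^{-1})^{-(j+1)}$, a formula I would confirm by a one-line induction on $j$.

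The next step is to evaluate $h^{(j)}$ at $s=0$, and this is the one place where care is needed. At the origin $(s-\lambda^{-1})^{-(j+1)}$ becomes $(-\lambda^{-1})^{-(j+1)}=(-1)^{j+1}\lambda^{j+1}$, and multiplying by the prefactor $(-1)^{j}j!$ collapses the two sign factors to $(-1)^{2j+1}=-1$, giving $h^{(j)}(0)=-\,j!\,\lambda^{j+1}$. I expect this sign bookkeeping to be the only genuine obstacle, so it is the step I would verify most carefully, for instance against the base cases: at $n=0$ the formula must reproduce $g(0,y)=1/(0-\lambda^{-1})=-\lambda$, and the $n=1$ case gives $-\lambda y-\lambda^{2}$, both of which I would check by hand.

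Finally, assembling the pieces through the general Leibniz rule gives
\begin{equation*}
g^{(n)}(0,y)=\sum_{k=0}^{n}\binom{n}{k}f^{(k)}(0)\,h^{(n-k)}(0)
=-\sum_{k=0}^{n}\frac{n!}{k!\,(n-k)!}\,y^{k}\,(n-k)!\,\lambda^{n-k+1}.
\end{equation*}
The factorials simplify via $\binom{n}{k}(n-k)!=n!/k!$, leaving $-\lambda\sum_{k=0}^{n}\tfrac{n!}{k!}\lambda^{n-k}y^{k}$; re-indexing by $k\mapsto n-k$ then turns this into $-\lambda\sum_{k=0}^{n}\tfrac{n!}{(n-k)!}\lambda^{k}y^{n-k}$, which is exactly \eqref{eq:general_g}. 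An equally viable route would be a direct induction on $n$ using $g^{(n)}=\bigl(g^{(n-1)}\bigr)'$, but the Leibniz approach avoids carrying a recursive hypothesis and makes the closed form immediate, so I would present that one.
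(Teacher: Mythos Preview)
Your proof is correct and considerably more direct than the paper's. The paper proceeds by induction on $n$: it verifies the cases $n=0,1,2$ by hand and then invokes the recursion $g^{(n)}(0,y)=\sum_{k=0}^{n-1}\binom{n-1}{k}g^{(k)}(0,y)\,[\log g(0,y)]^{(n-k)}$ (which appears elsewhere in the appendix as the generic device for computing higher derivatives of the $g_i$), together with the explicit values $[\log g(0,y)]^{(1)}=y+\lambda$ and $[\log g(0,y)]^{(n)}=(n-1)!\lambda^{n}$ for $n\ge 2$. The inductive step then requires collecting coefficients of each power $y^{n-p}$ from a double sum and matching them to the claimed formula.

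Your Leibniz-rule argument sidesteps all of this: because in the single-eigenvalue case $g(s,y)$ factors cleanly as $e^{ys}\cdot(s-\lambda^{-1})^{-1}$, both factors have elementary $j$-th derivatives, and the product rule delivers the closed form in one line. The trade-off is that the paper's recursion is the method actually used to compute derivatives numerically in the general multi-eigenvalue setting, so their proof doubles as a consistency check on that machinery; your argument is specific to this simple $g$ but is the natural way to establish the lemma on its own.
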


\vspace{1em} \noindent \textit{Proof:} Proof is given by
induction. The validity of the claim for $n=0, 1$ and $2$ is shown
by direction computation:
\begin{align*}
g^{(0)}(0, y) =& \frac{ye^{ys}}{s-1/\lambda}\bigg|_{s=0} =
-\lambda
= -\lambda \sum_{k=0}^0 \frac{0!}{(0-k)!} \lambda^k y^{0-k},  \\
g^{(1)}(0, y) =& \frac{ye^{ys}(s-1/\lambda)-e^{ys}}{(s-1/\lambda)^2}\bigg|_{s=0} =
-\lambda (y + \lambda)
= -\lambda \sum_{k=0}^1 \frac{1!}{(1-k)!} \lambda^k y^{1-k},  \\
g^{(2)}(0, y) =&
\frac{\left(ye^{ys}(ys-y/\lambda-1)+e^{ys}y\right)(s-\frac{1}{\lambda})^2
 -2e^{ys}(ys-y/\lambda-1)(s-\frac{1}{\lambda})}{(s-1/\lambda)^4}\bigg|_{s=0} \\
=&  -\lambda (y^2 + 2\lambda y + 2\lambda^2)
 =  -\lambda \sum_{k=0}^2 \frac{2!}{(2-k)!} \lambda^k y^{2-k}.
\end{align*}
Now, suppose that \eqref{eq:general_g} holds up to the $(n-1)$-th
derivative of $g(s, y)$. From the recursive formula in
\eqref{eq:recursive_dev}, $g^{(n)}(0,y)$ is obtained as
\begin{align}
& g^{(n)}(0, y) \nonumber \\
&= \sum_{k=0}^{n-1}
{n-1 \choose k}g^{(k)}(0, y)(\log g(0, y))^{(n-k)} \nonumber \\
&=
 {n-1 \choose 0}g^{(0)}(0, y)(\log g(0, y))^{(n)}
+{n-1 \choose 1}g^{(1)}(0, y)(\log g(0, y))^{(n-1)} +\cdots \nonumber \\
&~~~~ +{n-1 \choose n-1}g^{(n-1)}(0, y)(\log g(0, y))^{(1)}. \label{eq:g_n}
\end{align}
Since $[\log g(s)] = ys - \log(s-1/\lambda)$, we can easily see
that $[\log g(0)]^{(1)} = y+\lambda$ and $[\log g(0)]^{(n)} =
(n-1)!\lambda^n$ for $n\ge 2$. Therefore, \eqref{eq:g_n} can be
rewritten as
\begin{align}
g^{(n)}(0, y) =& (n-1)!g(0, y)\lambda^n + (n-1)g^{(1)}(0,
y)(n-2)!\lambda^{n-1}
+ {n-1 \choose 2}g^{(2)}(0, y)(n-3)!\lambda^{n-2}  + \cdots \nonumber \\
&{  + (n-1) g^{(n-2)}(0, y)\lambda^2 + g^{(n-1)}(0, y)(y+\lambda)} \nonumber \\
=& (n-1)!g(0, y)\lambda^n + (n-1)!g^{(1)}(0, y)\lambda^{n-1}
+ \frac{(n-1)!}{2} g^{(2)}(0, y) \lambda^{n-2}  + \cdots  \nonumber\\
&  + (n-1) g^{(n-2)}(0, y)\lambda^2 + \lambda g^{(n-1)}(0, y)
   + y g^{(n-1)}(0, y)  \nonumber\\
\stackrel{(a)}{=}& -\lambda\left[
\sum_{l=0}^{n-1}\frac{(n-1)!}{l!} \left( \sum_{k=0}^l
\frac{l!}{(l-k)!}\lambda^k y^{l-k} \right)\lambda^{n-l} {+
y\sum_{m=0}^{n-1}\frac{(n-1)!}{(n-m-1)!}\lambda^m y^{n-m-1}}
\right]  \nonumber\\
=& -\lambda\left[ \sum_{l=0}^{n-1}\frac{(n-1)!}{l!} \left(
\sum_{k=0}^l \frac{l!}{(l-k)!}\lambda^k y^{l-k}
\right)\lambda^{n-l} +
\sum_{m=0}^{n-1}\frac{(n-1)!}{(n-m-1)!}\lambda^m y^{n-m} \right]
\label{eq:candidate}
\end{align}
where (a) holds since \eqref{eq:general_g} holds for all
$g^{(0)}(0, y),\cdots, g^{(n-1)}(0, y)$ by the induction
assumption.

Here, consider the coefficient of each $y^i$ in
\eqref{eq:candidate} for $i=0,\cdots, n$.
\begin{enumerate}
\item[] i) $y^n$ is obtained only when $m=0$. The coefficient of
$y^{n}$ from \eqref{eq:candidate} is therefore given by
$-\lambda$. It corresponds to the coefficient of $y^n$ in
\eqref{eq:general_g}.
\item[] ii) For  $0 < p \le n$, the coefficient of $y^{n-p}$ is
obtained by considering  all $(l,k)$ that satisfies $l-k=n-p$ due
to the first term in the right-hand side (RHS) of
\eqref{eq:candidate}, and $m=p$ due to the second term of the RHS
of \eqref{eq:candidate}. In the first case, we obtain $y^{n-p}$
with the following pairs  $(l,k) = (n-1, p-1)$, $(n-2,p-2)$,
$\cdots$, $(n-p, 0)$. For these $(l,k)$ pairs, we have
\begin{equation}\label{eq:c1}
\hspace{-1em}
-\lambda\sum_{l=n-p}^{n-1} \frac{(n-1)!}{l!} \cdot
\left(\frac{l!}{(n-p)!} \lambda^{l-n+p}
y^{n-p}\right)\cdot\lambda^{n-l} = -\lambda\sum_{l=n-p}^{n-1}
\frac{(n-1)!}{(n-p)!} \lambda^{p}  y^{n-p} = -\lambda p
\frac{(n-1)!}{(n-p)!} \lambda^p  y^{n-p}.
\end{equation}
In the second case of $m=p$, we have
\begin{equation}\label{eq:c2}
-\lambda\frac{(n-1)!}{(n-p-1)!} \lambda^p y^{n-p}.
\end{equation}
Finally, the coefficient of $y^{n-q}$ is given by adding
\eqref{eq:c1} and \eqref{eq:c2}:
\begin{eqnarray*}
& & -\lambda\Bigg(
 \frac{(n-1)!}{(n-p-1)!}
+p \frac{(n-1)!}{(n-p)!} \Bigg) \lambda^{p} y^{n-p} \\
&=& -\lambda\frac{(n-1)!}{(n-p-1)!}
\Bigg(1+\frac{p}{n-p} \Bigg) \lambda^{p} y^{n-p} \\
&=& -\lambda\frac{n!}{(n-p)!}
 \lambda^p y^{n-p},
\end{eqnarray*}
which is equivalent to the  coefficient for $y^{n-p}$ in
\eqref{eq:general_g} ($0<p\le n$). Thus, \eqref{eq:general_g}
holds for $g^{(n)}(0,y)$.

\hfill$\blacksquare$
\end{enumerate}

\vspace{0.5em} \noindent Note that $g^{(n)}(0,y) < 0$ for all
$n\ge 0$ from \eqref{eq:general_g}. Therefore, $R_N(y)\le 0$ for
all $N$ and $y$ and $|g^{(n)}(0,y)|=-g^{(n)}(0,y)$.

Now, consider the residual error term $R_N(y)$ in
\eqref{eq:residual_e1}. The magnitude of the residual error can be
upper bounded as follows:
\begin{eqnarray} \label{eq:UB_single}
|R_N(y)| &=& \frac{\exp(-\eta^2)}{\lambda^\kappa} \cdot
\exp\left(-\frac{y}{\lambda}\right) \sum_{n=N+1}^{\infty}
|g^{(n)}(0, y)|
\frac{(\eta^2/\lambda)^{n-\kappa+1}}{n!(n-\kappa+1)!} \nonumber  \\
&=& \frac{\exp(-\eta^2)}{\lambda^\kappa} \cdot
\exp\left(-\frac{y}{\lambda}\right) \sum_{n=N+1}^{\infty}
(-g^{(n)}(0, y))
\frac{(\eta^2/\lambda)^{n-\kappa+1}}{n!(n-\kappa+1)!} \nonumber  \\
&=& -\frac{\exp(-\mu^2)}{\lambda^\kappa} \cdot
\exp\left(-\frac{y}{\lambda}\right) \sum_{n=N+1}^{\infty}
g^{(n)}(0, y)
\frac{(\eta^2/\lambda)^{n-\kappa+1}}{n!(n-\kappa+1)!} \nonumber  \\
&=& -\frac{\exp(-\eta^2)}{\lambda^\kappa} \cdot
\exp\left(-\frac{y}{\lambda}\right) \sum_{n=N+1}^{\infty}
\frac{1}{n!}g^{(n)}(0, y) \Big(\frac{1}{2\lambda}\Big)^n
\frac{(2\eta^2)^{n-\kappa+1}(2\lambda)^{\kappa-1} }{(n-\kappa+1)!} \nonumber  \\
&=& -(2\lambda)^{\kappa-1} \cdot
\frac{\exp(-\eta^2)}{\lambda^\kappa} \cdot
\exp\left(-\frac{y}{\lambda}\right) \sum_{n=N+1}^{\infty}
\frac{1}{n!}g^{(n)}(0, y) \Big(\frac{1}{2\lambda}\Big)^n
\frac{(2\eta^2)^{n-\kappa+1} }{(n-\kappa+1)!} \nonumber  \\
&\stackrel{(a)}{\le}& -(2\lambda)^{\kappa-1} \cdot
\frac{\exp(-\eta^2)}{\lambda^\kappa} \cdot
\exp\left(-\frac{y}{\lambda}\right) \sum_{n=N+1}^{\infty}
\frac{1}{n!}g^{(n)}(0, y)
\Big(\frac{1}{2\lambda}\Big)^n \exp(2\eta^2)  \nonumber  \\
&=& -\frac{2^{\kappa-1}}{\lambda} \exp(\eta^2) \cdot
\exp\left(-\frac{y}{\lambda}\right) \sum_{n=N+1}^{\infty}
\frac{1}{n!}g^{(n)}(0, y)
\Big(\frac{1}{2\lambda}\Big)^n  \nonumber  \\
&\stackrel{(b)}{\le}& -\frac{2^{\kappa-1}}{\lambda} \exp(\eta^2)
\cdot \exp\left(-\frac{y}{\lambda}\right)\ \cdot
\sum_{n=0}^{\infty} \frac{1}{n!}g^{(n)}(0, y)
\Big(\frac{1}{2\lambda}\Big)^n  \nonumber  \\
&\stackrel{(c)}{=}& -\frac{2^{\kappa-1}}{\lambda} \exp(\eta^2)
\cdot \exp\left(-\frac{y}{\lambda}\right)
\cdot g\left(\frac{1}{2\lambda}, y\right) \nonumber \\
&\stackrel{(d)}{=}& -\frac{2^{\kappa-1}}{\lambda} \exp(\eta^2)
\cdot \exp\left(-\frac{y}{\lambda}\right)
\cdot \frac{\exp(y/2\lambda)}{-1/2\lambda} \nonumber \\
&=& 2^{\kappa} \exp(\eta^2) \cdot
\exp\left(-\frac{y}{2\lambda}\right)
\end{eqnarray}
where $(a)$ is from $\frac{\gamma^k}{k!}\le
\exp(\gamma)=\sum_{p=0}^\infty \gamma^p/p!$ for any $\gamma>0$,
$(b)$ is from the fact that summand is {negative}, (c) is by using
the Taylor series expansion, and (d) is from
\eqref{eq:replyLastgsy}. Since $\eta$ is a fixed constant, from
\eqref{eq:UB_single},  for any $N\ge 0$
\begin{equation}
\lim_{y\to\infty} |R_N(y)| = 0.
\end{equation}
Thus, it is clear that the proposed series converges from the
upper tail distribution!

Now, let us consider the residual error magnitude as a function of
$y$ for given $N$.  From \eqref{eq:general_g}, we have
\begin{equation}
\frac{\partial g^{(n)}(0, y)}{\partial y} = n g^{(n-1)}(0, y).
\end{equation}
Differentiating $R_N(y)$ with respect to $y$ yields
\begin{align}
\frac{\partial R_N(y)}{\partial y} &=
\frac{\exp(-\eta^2)}{\lambda^\kappa}\left(-\frac{1}{\lambda}\right)
\exp\left(-\frac{y}{\lambda}\right) \sum_{n=N+1}^{\infty}
g^{(n)}(0, y)
\frac{(\eta^2/\lambda)^{n-\kappa+1}}{n!(n-\kappa+1)!} \nonumber \\
& \ \ + \frac{\exp(-\eta^2)}{\lambda^\kappa}
\exp\left(-\frac{y}{\lambda}\right) \sum_{n=N+1}^{\infty}
\frac{\partial g^{(n)}(0, y)}{\partial y} \cdot
\frac{(\eta^2/\lambda)^{n-\kappa+1}}{n!(n-\kappa+1)!} \nonumber \\
&= \frac{\exp(-\eta^2)}{\lambda^\kappa}
\exp\left(-\frac{y}{\lambda}\right) \sum_{n=N+1}^{\infty}
\frac{(\eta^2/\lambda)^{n-\kappa+1}}{n!(n-\kappa+1)!} \left(
-\frac{1}{\lambda} g^{(n)}(0, y) + n g^{(n-1)}(0, y) \right).
\label{eq:replyFinallambdagn0yabove}
\end{align}
Furthermore, from \eqref{eq:general_g} we have
\begin{equation} \label{eq:replyFinallambdagn0y}
-\frac{1}{\lambda}g^{(n)}(0,y)+n g^{(n-1)}(0,y) = y^n.
\end{equation}
By substituting \eqref{eq:replyFinallambdagn0y} into
\eqref{eq:replyFinallambdagn0yabove}, we have
\begin{eqnarray} \label{eq:diff_d1}
\frac{\partial R_N(y)}{\partial y} &=&
\frac{\exp(-\eta^2)}{\lambda^\kappa}
\exp\left(-\frac{y}{\lambda}\right) \sum_{n=N+1}^{\infty}
\frac{(\eta^2/\lambda)^{n-\kappa+1}y^n}{n!(n-\kappa+1)!},
\end{eqnarray}
which is positive.  Since $R_N(y) \le 0$,
$\lim_{y\rightarrow\infty} R_N(y) = 0$ and $\frac{\partial
R_N(y)}{\partial y} >0$, the residual error magnitude
monotonically decreases as $y$ increases and the maximum error
occurs at $y=0$ for any given $N$.

Now, let us compute the worst truncation error $R_N(0)$, which is
given by
\begin{equation}
R_N(0) =    \frac{\exp(-\eta^2)}{\lambda^\kappa}
\sum_{n=N+1}^{\infty} g^{(n)}(0,0)
\frac{(\eta^2/\lambda)^{n-\kappa+1}}{n!(n-\kappa+1)!}.
\end{equation}
From \eqref{eq:general_g}, we have $g^{(n)}(0,0)= -n!
\lambda^{n+1}$. Therefore,
\begin{align}
R_N(0) &= \frac{\exp(-\eta^2)}{\lambda^\kappa}
\sum_{n=N+1}^{\infty} (-n! \lambda^{n+1})
\frac{(\eta^2/\lambda)^{n-\kappa+1}}{n!(n-\kappa+1)!} \nonumber \\
&= -\frac{\exp(-\eta^2)}{\lambda^\kappa} \sum_{n=N+1}^{\infty}
\lambda^{n+1}
\frac{(\eta^2/\lambda)^{n-\kappa+1}}{(n-\kappa+1)!} \nonumber \\
&= -\frac{\exp(-\eta^2)}{\lambda^\kappa} \sum_{n=N+1}^{\infty}
\frac{(\eta^2)^{n-\kappa+1}}{(n-\kappa+1)!}\cdot\lambda^{\kappa} \nonumber \\
&= -\exp(-\eta^2) \sum_{n=N+1}^{\infty}
\frac{(\eta^2)^{n-\kappa+1}}{(n-\kappa+1)!}.
\end{align}
From \eqref{eq:replyLastCDF},  $N \ge \kappa-2$.   For general $N
\ge \kappa-2$, let $m=n-\kappa+1$. Then,
\[
R_N(0) = -\exp(-\eta^2) \sum_{m=N-\kappa+2}^{\infty}
\frac{(\eta^2)^m}{m!}.
\]
Note that $\sum_{m=N-\kappa+2}^{\infty} \frac{(\eta^2)^m}{m!}$ is
the residual error of the Taylor series expansion of $\exp(x)$
after the first $(N-\kappa+1)$ terms. By the Taylor theorem,
\begin{equation}
\sum_{m=N-\kappa+2}^{\infty}\frac{(\eta^2)^m}{m!} =
\frac{(\eta^2)^{N-\kappa+2}}{(N-\kappa+2)!} \exp(\alpha\eta^2)
\end{equation}
where some $\alpha \in [0, 1]$. Therefore, the worst truncation
error is given by
\begin{equation} \label{eq:bound_1}
|R_N(0)| = \exp\Big((\alpha-1)\eta^2\Big)\times
\frac{(\eta^2)^{N-\kappa+2}}{(N-\kappa+2)!} \le
\frac{(\eta^2)^{N-\kappa+2}}{(N-\kappa+2)!},
\end{equation}
where the inequality holds since $\exp((\alpha-1)\eta^2) \le 1$
for  $0\le \alpha \le 1$.  Furthermore, the residual error
magnitude is a strictly decreasing function of $N$ for any $y$,
\begin{equation} \label{eq:bound_2}
|R_N(y)| > |R_{N+1}(y)|.
\end{equation}
This can be shown easily as follows.
\begin{align*}
R_N(y) =& \frac{\exp(-\eta^2)}{\lambda^\kappa}
\exp\left(-\frac{y}{\lambda}\right) \sum_{n=N+1}^{\infty}
g^{(n)}(0, y)
\frac{(\eta^2/\lambda)^{n-\kappa+1}}{n!(n-\kappa+1)!} \\
=& \frac{\exp(-\eta^2)}{\lambda^\kappa}
\exp\left(-\frac{y}{\lambda}\right) \left\{ \sum_{n=N+2}^{\infty}
g^{(n)}(0, y)
\frac{(\eta^2/\lambda)^{n-\kappa+1}}{n!(n-\kappa+1)!}
+g^{(N+1)}(0, y)
\frac{(\eta^2/\lambda)^{N-\kappa+2}}{(N+1)!(N-\kappa+2)!}
\right\} \\
=& R_{N+1}(y) + \frac{\exp(-\eta^2)}{\lambda^\kappa}
\exp\left(-\frac{y}{\lambda}\right) \cdot g^{(N+1)}(0, y)
\frac{(\eta^2/\lambda)^{N-\kappa+2}}{(N+1)!(N-\kappa+2)!}.
\end{align*}
Since $R_N(y)<0$ and $g^{(N+1)}(y)<0$ for all $y\ge 0$ and $N$, we
have \eqref{eq:bound_2}. Now, based on \eqref{eq:bound_1} and
\eqref{eq:bound_2}, with given $\chibf_k$ and $\sigma_h^2$, we can
compute the required number $N$ of terms in the series to achieve
the desired level of accuracy since $\eta^2$ is known.

\begin{figure}[htbp]
\centerline{ \SetLabels
\L(0.23*-0.1) (a)\ CDF \\
\L(0.68*-0.1) (b)\ Residual error \\
\endSetLabels
\leavevmode
\strut\AffixLabels{
\scalefig{0.52}\epsfbox{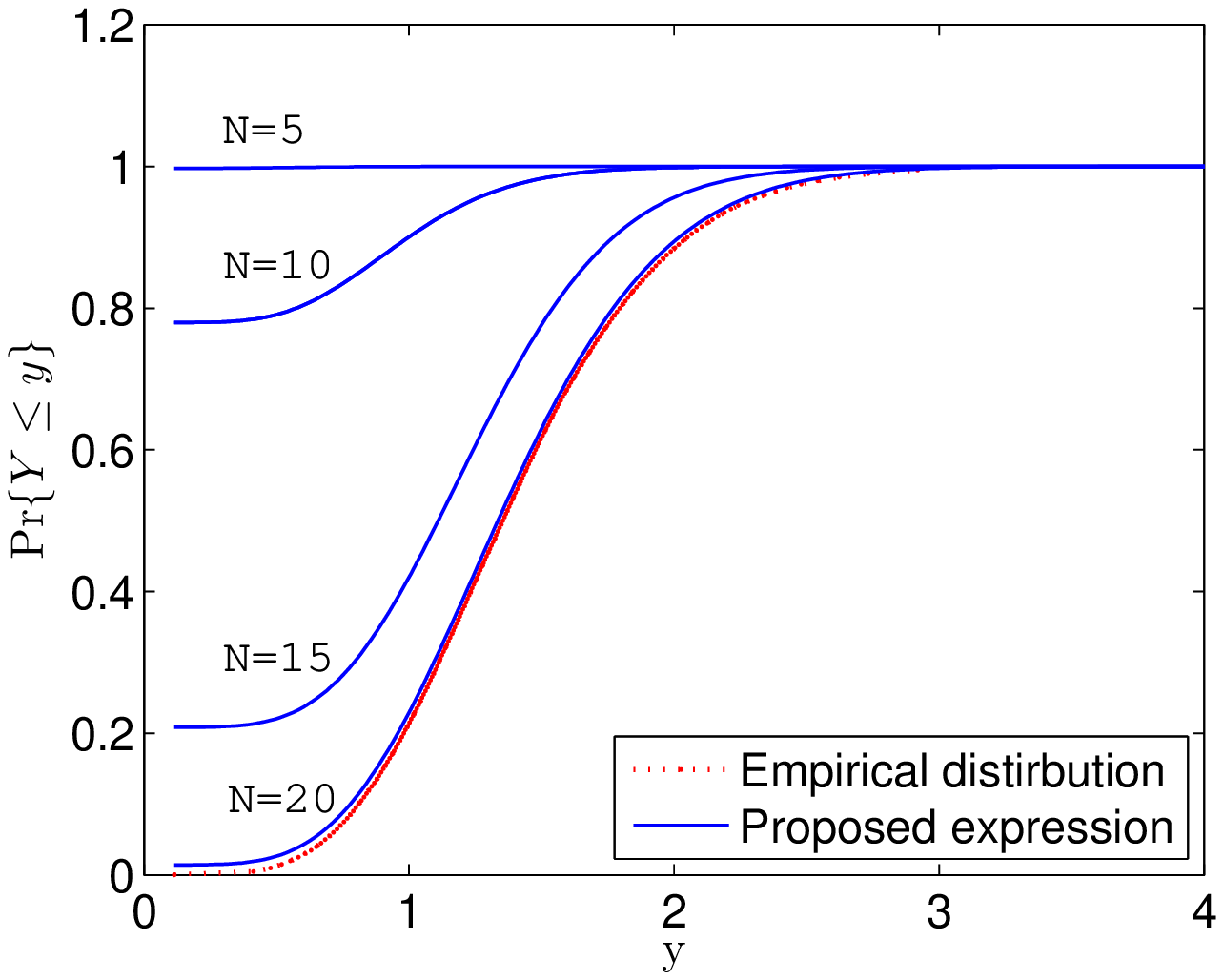}
\scalefig{0.52}\epsfbox{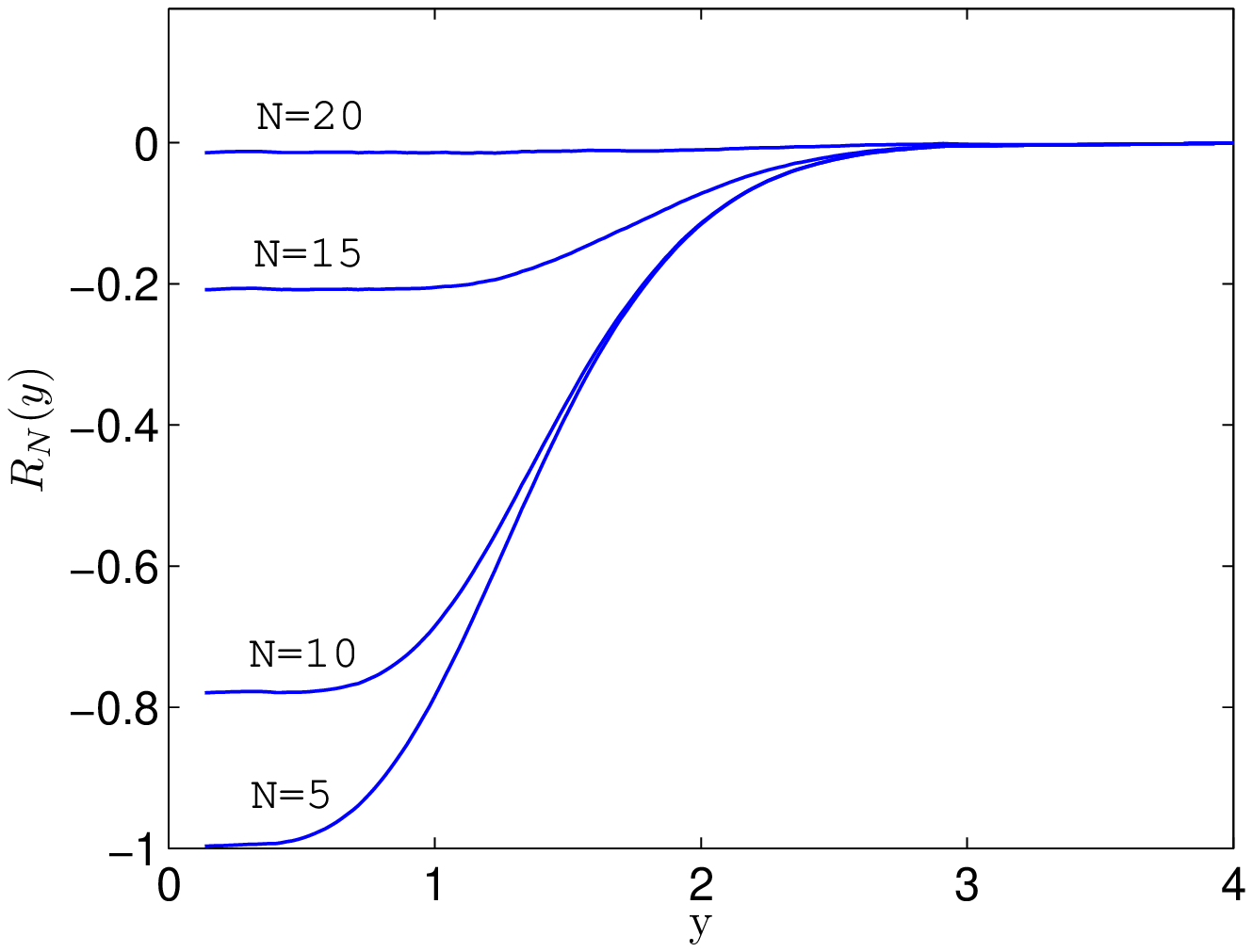} } } \vspace{1em}
\caption{number of variables = 4, {$\mubf=0.5\mathbf{1}$},
$\Bar{\Qbf}=\Ibf$, and $\Sigmabf=0.1\Ibf$.}
\label{fig:cdf+mse+d1+replyFinal}
\end{figure}

    Finally,
consider the worst case of $N = \kappa-2$ and $y=0$:
\[
R_{\kappa-2}(0) = -\exp(-\eta^2) \sum_{n=\kappa-1}^{\infty}
\frac{(\eta^2)^{n-\kappa+1}}{(n-\kappa+1)!} = -\exp(-\eta^2)
\sum_{m=0}^{\infty}\frac{(\eta^2)^m}{m!} = -1,
\]
where the second equality is by replacing $m=n-\kappa +1$. It is
easy to see that the worst case error is -1 in the identity
covariance matrix case. Fig. \ref{fig:cdf+mse+d1+replyFinal} shows
the performance of the proposed series expansion in the case of
the identity covariance matrix. The numerical results well match
our theoretical analysis in this subsection. From the figure, it
seems reasonable to choose $N \ge 20 \sim 30$ for accurate outage
probability computation.

\vspace{-0.5em}


\end{document}